\keywords{The induced subgraph isomorphism problem, descriptive and computational complexity, finite-variable first-order logic, quantifier depth and variable width}
\tikzset{
vertex/.style={circle,draw,inner sep=.5pt,fill=Black}
}
\newcommand{\Case}[2]{\smallskip\par{\it Case #1:\/ #2}}
\newcommand{\Subcase}[2]{\smallskip\par{\it Subcase #1:\/ #2}}
\newcounter{claim}
\renewcommand{\theclaim}{\Alph{claim}}
\newenvironment{claim}{\refstepcounter{claim}%
\par\medskip\par\noindent{\it Claim~\theclaim.~}~\rm}%
{\par\smallskip\par}
\newenvironment{subproof}{\par\noindent{\sl Proof of Claim~\theclaim.~}}%
{$\,\triangleleft$\par\medskip\par}
\newcounter{oq}
\newcommand{\que}{\refstepcounter{oq}\par{\sc \theoq.}~}
\newcommand{\refeq}[1]{(\ref{eq:#1})}
\newcommand{\of}[1]{\left( #1 \right)}
\newcommand{\setdef}[2]{\left\{ \hspace{0.5mm} #1 : \hspace{0.5mm} #2 \right\}}
\newcommand{\feq}{\stackrel{\mbox{\tiny def}}{=}}
\newcommand{\E}{\exists}
\newcommand{\und}{\wedge}
\newcommand{\classc}{\ensuremath{\mathcal C}\xspace}
\newcommand{\isi}[1]{\ensuremath{\mathrm{ISI}(#1)}\xspace}
\newcommand{\tw}{\mathit{tw}}
\newcommand{\Arb}{\mathrm{Arb}}
\newcommand{\compl}[1]{\overline{#1}}
\newcommand{\indsubgr}[1]{\ensuremath{\mathcal I(#1)}\xspace}
\newcommand{\emso}{\ensuremath{{\exists\mathrm{MSO}}}\xspace}
\newcommand{\emsod}{D_{\emso}}
\newcommand{\cclass}[1]{\textsf{\upshape #1}}
\newcommand{\ac}[1]{\cclass{AC$^{\cclass{#1}}$}\xspace}
\newcommand{\envi}{\mathop{\mathit{Env}}\nolimits}
\newcommand{\env}[2]{\envi_{#1}(#2)}
\newcommand{\deco}{\mathop{\mathit{Dec}}\nolimits}
\newcommand{\dec}[2]{\deco_{#1} #2}
\newcommand{\alice}[1]{E[#1]}
\newcommand{\alicek}[1]{E_\kappa[#1]}
\newcommand{\ergraph}{G(n,1/2)}
\newcommand{\ea}[1]{\mathit{EA}_{#1}}
\newcommand{\rturan}{\mathbb{T}_{k,n}}
\newcommand{\fo}{\ensuremath{\mathrm{FO}}\xspace}
\begin{document}

\title[On the First-Order Complexity of  Induced Subgraph Isomorphism]{On the First-Order Complexity of \\ Induced Subgraph Isomorphism\rsuper*}
\titlecomment{{\lsuper*}A preliminary version of this paper appeared in [28]. Theorem 6.2 was announced without proof in~[27].}

\author[O. Verbitsky]{Oleg Verbitsky}%
\address{Institut f\"ur Informatik,  Humboldt-Universit\"at zu Berlin, Unter den Linden 6, D-10099 Berlin.}
\email{verbitsk@informatik.hu-berlin.de}
\thanks{  Supported by DFG grant VE 652/1--2. On leave from the IAPMM, Lviv, Ukraine.}

\author[M. Zhukovskii]{Maksim Zhukovskii}%
\address{Laboratory of Advanced Combinatorics and Network Applications,
Moscow Institute of Physics and Technology, Moscow.}
\email{zhukmax@gmail.com}
\thanks{ Supported by grants No.\ 15-01-03530 and 16-31-60052 of Russian Foundation for Basic Research.}









\begin{abstract}
Given a graph $F$, let $\indsubgr F$ be the class of graphs containing $F$
as an induced subgraph. Let $W[F]$ denote the minimum $k$ such that
$\indsubgr F$ is definable in $k$-variable first-order logic. The recognition
problem of $\indsubgr F$, known as Induced Subgraph Isomorphism (for the pattern
graph $F$), is solvable in time $O(n^{W[F]})$. Motivated by this fact,
we are interested in determining or estimating the value of $W[F]$.
Using Olariu's characterization of paw-free graphs, we show that $\indsubgr{K_3+e}$
is definable by a first-order sentence of quantifier depth 3, where
$K_3+e$ denotes the paw graph. This provides an example of a graph $F$
with $W[F]$ strictly less than the number of vertices in $F$.
On the other hand, we prove that $W[F]=4$ for all $F$ on 4 vertices
except the paw graph and its complement. If $F$ is a graph on $\ell$ vertices,
we prove a general lower bound $W[F]>(1/2-o(1))\ell$, where the function
in the little-o notation approaches 0 as $\ell$ increases. This bound
holds true even for a related parameter $W^*[F]\le W[F]$, which is
defined as the minimum $k$ such that $\indsubgr F$ is definable in the
infinitary logic $L^k_{\infty\omega}$. We show that $W^*[F]$ can be
strictly less than $W[F]$. Specifically, $W^*[P_4]=3$ for $P_4$ being
the path graph on 4 vertices.

Using the lower bound for $W[F]$, we also obtain a succintness
result for existential monadic second-order logic:
a single monadic quantifier sometimes reduces
the first-order quantifier depth at a super-recursive rate.
\end{abstract}

\maketitle

\section{Introduction}\label{s:intro}

For a given graph $F$, let \indsubgr F denote the class of all graphs containing a copy of $F$
as an induced subgraph. We are interested in the descriptive complexity of \indsubgr F,
for a fixed pattern graph $F$, in first-order logic whose vocabulary consists of
the adjacency and the equality relations ($\sim$ and $=$ respectively).
Let $D[F]$ denote the minimum quantifier depth of a sentence in this logic
that defines  \indsubgr F. Furthermore, let $W[F]$ denote the minimum variable width
of a sentence defining  \indsubgr F, that is, the minimum number of variables
in such a sentence, where a variable with multiple occurrences counts only once. Note that
$$
W[F]\le D[F]\le\ell,
$$
where here and throughout the paper $\ell$ denotes the number of vertices in $F$.
It may come as some surprise that the parameter $D[F]$ can be strictly less than $\ell$.
To see an example, let $F=K_3+e$ be the paw graph
\begin{tikzpicture}[every node/.style=vertex,scale=.25]
\path (0,0) node (a) {}
      (0,1) node (b)  {} edge (a) 
      (.5,.5) node (c)  {} edge (a) edge (b)
      (1.2,.5) node (d)  {} edge (c);
\end{tikzpicture}.
The following sentence uses three variables $x_1,x_2,x_3$ and has quantifier depth~3:
\begin{multline*}
\E x_1\,(\E x_2\E x_3\,(x_1\sim x_2\und x_1\sim x_3\und x_2\sim x_3) \und {}\\
 \E x_2\,(
x_1\not\sim x_2\und
\E x_3\,(x_1\sim x_3\und x_3\sim x_2)\und
\E x_3\,(x_3\sim x_1\und x_3\not\sim x_2)
)
). 
\end{multline*}
It says that a graph contains a vertex $v$ that belongs to a triangle and
can be accompanied with a vertex $u$ at distance 2 from $v$ such that
there is a vertex $w$ adjacent to $v$ but non-adjacent to $u$.
Obviously, this sentence is true on the paw graph and on every graph
containing an induced paw subgraph. 
Olariu's characterization of paw-free graphs \cite{Olariu88} implies
that the sentence is false on every graph without an induced paw subgraph;
see Section \ref{ss:paw} for details.  

The decision problem for \indsubgr F is known as
\textsc{Induced Subgraph Isomorphism} (for the pattern graph $F$).
We denote it by \isi F. Our interest in the parameters $D[F]$ and $W[F]$
is motivated by the fact that \isi F is solvable in time $O(n^{W[F]})$;
see \cite[Prop.~6.6]{Libkin04}. Before stating our results on $D[F]$ and $W[F]$
we give a brief overview over the known algorithmic results in this area.

\subsection*{Computational complexity of Induced Subgraph Isomorphism.}
Obviously, \isi F is solvable in time $O(n^\ell)$
on $n$-vertex input graphs by exhaustive search.
We use the standard notation
$K_\ell$ for complete graphs, 
$P_\ell$ for paths, and $C_\ell$ for cycles on $\ell$ vertices. 
Itai and Rodeh \cite{ItaiR78} observed that \isi{K_3} is solvable
in time $O(n^\omega)$, where $\omega<2.373$ is the exponent of fast square
matrix multiplication~\cite{Gall14}.
Ne\v{s}et\v{r}il and Poljak \cite{NesetrilP85} showed, by a
reduction of \isi F to \isi{K_3}, 
that \isi F is solvable in time $O(n^{(\omega/3)\ell+c})$,
where $c=0$ if $\ell$ is divisible by 3 and $c\le\frac23$ otherwise.
For $\ell$ not divisible by 3, this time bound was improved by Eisenbrand and Grandoni \cite{EisenbrandG04}
using fast rectangular matrix multiplication.
On the other hand, \isi{K_\ell} is unsolvable in time $n^{o(\ell)}$
unless the Exponential Time Hypothesis fails \cite{ChenHKX06}.
Floderus et al.~\cite{FloderusKLL15}
collected evidence in favour of the conjecture that \isi F for $F$ with $\ell$
vertices cannot be solved faster than \isi{K_{c\ell}}, where $c<1$ is a constant independent on $F$.
Along with the Exponential Time Hypothesis, this would imply that
the time complexity of \isi F is $n^{\Theta(\ell)}$.
As an example of a particular result of \cite{FloderusKLL15} in this direction, 
\isi{P_{2a-1}} is not easier than \isi{K_a},
and the same holds true for \isi{C_{2a}}; see also the earlier work~\cite{ChenF07}.

The induced subgraph isomorphism problem has been intensively studied for 
particular pattern graphs $F$ with small number of vertices.
Let $\compl F$ denote the complement graph of $F$ and note that at least one
of the graphs $F$ and $\compl F$ is connected.
Since \isi F and \isi{\compl F} have the same time complexity,
we can restrict our attention to connected pattern graphs.
There are six such graphs on four vertices,
namely $K_4$, $P_4$, $C_4$, $K_3+e$, the claw graph $K_{1,3}$
(
\begin{tikzpicture}[every node/.style=vertex,scale=.25]
\path (0,0) node (a) {}
      (-.5,1) node (b)  {} edge (a) 
      (0,1) node (c)  {} edge (a)
      (.5,1) node (d)  {} edge (a);
\end{tikzpicture}
),
and the diamond graph $K_4\setminus e$
(
\begin{tikzpicture}[every node/.style=vertex,scale=.25]
\path (0,0) node (a) {}
      (0,1) node (b)  {}
      (-.5,.5) node (c)  {} edge (a) edge (b)
      (.5,.5) node (d)  {} edge (a) edge (b) edge (c);
\end{tikzpicture}
).
Corneil et al.~\cite{CorneilPS85} designed an $O(n+m)$ time algorithm for \isi{P_4},
where $m$ denotes the number of edges in an input graph. As noted in \cite{KloksKM00},
the Olariu characterization of paw-free graphs reduces \isi{K_3+e} to \isi{K_3},
showing that the former problem is also solvable in time $O(n^\omega)$.
The same time bound is obtained by Vassilevska Williams et al.~\cite{WilliamsWWY15}
for the diamond graph $K_4\setminus e$ and, using randomization, for the other
pattern graphs on 4 vertices except $K_4$.
The best known time bound for \isi{K_4} is given by the methods of~\cite{EisenbrandG04}
and is currently $O(n^{3.257})$~\cite{WilliamsWWY15}.

\subsection*{Our results on the descriptive complexity of Induced Subgraph Isomorphism.}
In Section \ref{s:lower} we prove a general lower bound $W[F]>(1/2-o(1))\ell$, where the function
in the little-o notation approaches 0 as $\ell$, the number of vertices
in the pattern graph $F$, increases.
Whether or not it can be improved remains an intriguing open question.
Note that this bound leaves a hypothetical possibility that
the time bound $O(n^{W[F]})$ for Induced Graph Isomorphism can be better
than the Ne\v{s}et\v{r}il-Poljak bound $O(n^{(\omega/3)\ell+c})$ 
for infinitely many pattern graphs~$F$.

Our approach uses a connection to the $k$-extension property of graphs, that is well
known in finite model theory; see, e.g, \cite{Spencer-book}.
We define the \emph{extension index} of $F$, denoted by $\alice F$, as
the minimum $k$ such that the $k$-extension property forces the existence
of an induced copy of $F$. It is easy to show that $W[F]\ge \alice F$. 
Our results about the parameter $\alice F$ may be interesting on its own.
In particular, we show that $\alice F\ge\chi(F)$, where $\chi(F)$
denotes the chromatic number of~$F$.

In Section \ref{s:small}, we determine the values of $D[F]$ and $W[F]$
for all pattern graphs with at most 4 vertices.
We prove that $W[F]=4$ for all $F$ on 4 vertices
except the paw graph and its complement.

With one exception, our lower bounds for $W[F]$ hold true 
also for a related parameter $W^*[F]\le W[F]$, which is
defined as the minimum $k$ such that $\indsubgr F$ is definable in the
infinitary logic $L^k_{\infty\omega}$. 
For the exceptional pattern graph $F=P_4$, the path on 4 vertices,
we prove that $W^*[P_4]=3$ while $W[P_4]=4$. 
This shows that $W^*[F]$ can be
strictly less than $W[F]$, that is, the infinitary logic can be
more succinct when defining~$\indsubgr F$.

In Section \ref{s:highly}, we address a relaxation version of the parameter $W[F]$.
Consider a simple example. Let $D_v[F]$ be the minimum quantifier depth of a sentence $\Phi$
defining \indsubgr F over \emph{sufficiently large connected} graphs. That is,
it is required that there is a number $s$ such that $\Phi$ correctly detects whether or 
not a graph $G$ belongs to \indsubgr F only if $G$ is connected and has at least $s$ vertices.
Whereas $D_v[F]\le D[F]$, it is clear that \indsubgr F for a connected pattern graph $F$ is still recognizable
in time $O(n^{D_v[F]})$. Let $F=P_3$. As easily seen,
$P_3$-free graphs are exactly disjoint unions of cliques.
Therefore, connected $P_3$-free graphs are exactly the complete graphs, which
readily implies that $D_v[P_3]\le2$, whereas $D[P_3]=3$.
As a further example, we remark that the existence of a \emph{not necessarily induced}
subgraph $P_4$ can be defined over sufficiently large connected graphs with just 2 variables;
see \cite{VZh16} for this and further examples.

We can go further and define $W_\tw[F]$ to be the minimum variable width  of a sentence
defining \indsubgr F over connected graphs $G$ of \emph{sufficiently large treewidth} $\tw(G)$.
As a consequence of Courcelle's theorem \cite{Courcelle90}, \indsubgr F for a connected pattern graph $F$ 
is recognizable in time $O(n^{W_\tw[F]})$; cf.\ the discussion in~\cite{VZh16}.

The above discussion motivates the problem of proving lower bounds for the parameter
 $W_\kappa[F]$ which we define as the minimum variable width  of a sentence
defining \indsubgr F over graphs $G$ of \emph{sufficiently large connectedness} $\kappa(G)$.
Note that $W_\kappa[F]\le W_\tw[F]\le W_v[F]\le W[F]$. We prove that $W_\kappa[F]=W[F]$
for a large class of pattern graphs $F$. We also prove a general lower bound
$W_\kappa[F]>(1/3-o(1))\,\ell$ for all $F$ on $\ell$ vertices.

Finally, we notice that the lower bound $W[F]=\Omega(\ell)$ implies a
succintness result for existential monadic second-order logic:
A usage of just one monadic quantifier sometimes reduces
the first-order quantifier depth at a super-recursive rate.
More precisely, let $\emsod[F]$ denote the minimum quantifier depth
of a second-order sentence with a single existential monadic quantifier
that defines \indsubgr F. In Section \ref{s:emso}, we prove that 
$\emsod[F]$ can sometimes be so small compared to $D[F]=D_\fo[F]$
that there is no total recursive function $f$ such that
$f(\emsod[F])\ge D[F]$ for all~$F$.

\subsection*{Comparison to (not necessarily induced) Subgraph Isomorphism.}
The \textsc{Subgraph Isomorphism} problem
is very different from its induced version.
For infinitely many pattern graphs $F$, \textsc{Subgraph Isomorphism} can be solved in time $O(n^c)$
for a constant $c$. This follows from a result by Alon, Yuster and
Zwick~\cite{AlonYZ95} who showed that the problem is solvable in time
$2^{O(\ell)}\cdot n^{\tw(F)+1}\log n$.

Let $W(F)$ and $D(F)$ be the analogs of $W[F]$ and $D[F]$ for the not-necessarily-induced case.
Note that $W(F)=D(F)=\ell$ as a consequence of the trivial observation that $K_\ell$ contains
$F$ as a subgraph while $K_{\ell-1}$ does not. Nevertheless, $D_v(F)$ can be strictly smaller
than $\ell$ for some connected pattern graphs. Moreover, $D_\tw(F)$ can sometimes be arbitrarily
small compared to~$\ell$. This is the subject of our preceding paper~\cite{VZh16}.
Furthermore, in \cite{VZh18} we have shown that $D_v(F)\le\frac23\,\ell+1$  
for infinitely many $F$. This upper bound is tight as, on the other hand,
we have $W_v(F)>\frac23\,\ell-2$ for every~$F$.

\subsection*{Logic with numeric predicates.}
In the present paper, we consider the most laconic first-order language of graphs
whose vocabulary has only the adjacency and the equality relations.
If we assume that the vertex set of a graph is $\{1,2,\ldots,n\}$ and
additionally allow arbitrary numerical relations
like order, parity etc., this richer logic captures non-uniform \ac0; see \cite{Immerman-book,Libkin04}.
Let $W_\Arb(F)$ denote the analog of the parameter $W(F)$ (the not-necessarily-induced case)
for this logic,
and $W_\Arb[F]$ denote the analog of the parameter $W[F]$ (the induced case).
The known relations to circuit complexity \cite{Immerman-book,Rossman08} imply that
the (not necessarily induced) \textsc{Subgraph Isomorphism} is solvable by bounded-depth unbounded-fan-in circuits of
size $n^{W_\Arb(F)+o(1)}$, and the similar bound is true also for \textsc{Induced Subgraph Isomorphism}.
The parameter $W_\Arb(F)$ is studied in this context by Li, Razborov, and Rossman~\cite{LiRR14}.

\section{Preliminaries}\label{s:prel}

A \emph{graph property} is a class of graphs \classc closed under isomorphism, that is,
for isomorphic graphs $G$ and $H$, $G\in\classc$ iff $H\in\classc$.
We consider first-order sentences about graphs in the language containing
the adjacency and the equality relations. A sentence $\Phi$ \emph{defines}
a graph property \classc if $G\in\classc$ exactly when $G\models\Phi$, i.e.,
$\Phi$ is true on $G$.
A graph property \classc is \emph{first-order definable} if there is a first-order
sentence defining \classc.

Let \classc be a first-order graph property. The \emph{logical depth} of \classc,
denoted by $D(\classc)$, is the minimum quantifier depth (rank) of a sentence defining $\classc$.
The \emph{logical width} of \classc,
denoted by $W(\classc)$, is the minimum \emph{variable width} of a sentence defining $\classc$,
i.e., the number of first-order variables occurring in the sentence where different
occurrences of the same variable count only once.

Given two non-isomorphic graphs $G$ and $H$, let
$D(G,H)$ (resp.\ $W(G,H)$) denote the minimum quantifier depth (resp.\ variable width) of a sentence
that \emph{distinguishes} $G$ and $H$, i.e., is true on one of the graphs and false on the other.

\begin{lem}\label{lem:DDGH}
$D(\classc)=\max\setdef{D(G,H)}{G\in\classc,\,H\notin\classc}$.  
\end{lem}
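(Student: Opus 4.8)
The plan is to prove the two inequalities separately. Write $d=D(\classc)$ and $m=\max\setdef{D(G,H)}{G\in\classc,\,H\notin\classc}$, and assume that both $\classc$ and its complement are nonempty, so that $m$ is a well-defined non-negative integer; it is finite because any sentence defining $\classc$ bounds $D(G,H)$ for every $G\in\classc$, $H\notin\classc$, and the degenerate cases are immediate.

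The inequality $m\le d$ is the easy half. Let $\Phi$ be a sentence of quantifier depth $d$ that defines $\classc$. If $G\in\classc$ and $H\notin\classc$, then $\Phi$ is true on $G$ and false on $H$, hence it distinguishes $G$ and $H$ and $D(G,H)\le d$; taking the maximum over all such pairs yields $m\le d$.

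For the substantial inequality $d\le m$ I would use the standard apparatus behind the Ehrenfeucht--Fra\"iss\'e method: over our fixed finite vocabulary the equivalence relation $\equiv_q$ --- ``satisfies the same sentences of quantifier depth at most $q$'' --- has only finitely many classes, and each class is defined by a single sentence of quantifier depth $q$, namely the Hintikka (characteristic) sentence $\chi^q_G$ of any representative $G$, for which $H\models\chi^q_G$ iff $H\equiv_q G$. Since $D(G,H)$ is the least quantifier depth of a sentence distinguishing $G$ and $H$, one has $D(G,H)\le q$ iff $G\not\equiv_q H$. Consequently the bound $D(G,H)\le m$ for all $G\in\classc$, $H\notin\classc$ says exactly that no member of $\classc$ is $\equiv_m$-equivalent to a non-member, i.e.\ $\classc$ is a union of $\equiv_m$-classes. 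As there are only finitely many $\equiv_m$-classes, $\classc$ is the union of finitely many of them, with representatives $G_1,\dots,G_t\in\classc$; then $\bigvee_{i=1}^{t}\chi^m_{G_i}$ defines $\classc$, and a finite disjunction does not increase quantifier depth, so this sentence has quantifier depth $m$. Hence $d=D(\classc)\le m$.

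I do not expect a real obstacle here. The only care needed is to quote the finiteness of the $\equiv_q$-classes and the existence of characteristic sentences of the correct quantifier depth in exactly the form used --- both classical consequences of the Ehrenfeucht--Fra\"iss\'e games --- and to dispose of the trivial boundary cases. The identical argument proves the analogous identity for the logical width $W(\classc)$, with ``quantifier depth'' replaced by ``number of variables'' and $\equiv_q$ by $q$-variable equivalence, which I would expect the paper to record alongside this lemma.
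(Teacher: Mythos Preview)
Your proof of the lemma itself is correct and is essentially the paper's argument: both directions match, and the nontrivial inequality $d\le m$ rests on the same key fact, namely that there are only finitely many $\equiv_m$-classes (equivalently, finitely many pairwise inequivalent sentences of quantifier depth $m$). You phrase it via Hintikka sentences while the paper builds $\Phi_G=\bigwedge_{H}\Phi_{G,H}$ and then $\Phi=\bigvee_G\Phi_G$, pruning each to a finite Boolean combination using that finiteness; these are two packagings of the same idea.

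Your closing remark, however, is wrong. The argument does \emph{not} transfer to variable width: there are infinitely many pairwise inequivalent $k$-variable sentences, so $k$-variable equivalence need not have finitely many classes, and one cannot in general collapse the disjunction to a finite $k$-variable sentence. The paper says this explicitly right after the lemma and for that reason introduces a separate parameter $W^*(\classc)=\max\{W(G,H):G\in\classc,\,H\notin\classc\}$ with only the inequality $W^*(\classc)\le W(\classc)$; the correct width analogue is Lemma~\ref{lem:WDGH}, and Section~\ref{ss:path} exhibits $W^*[P_4]=3<4=W[P_4]$, showing the gap is genuine.
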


\begin{proof}
In one direction, note that whenever $G\in\classc$ and $H\notin\classc$,
we have $D(G,H)\le D(\classc)$ because any sentence defining $\classc$
distinguishes $G$ and $H$. For the other direction, suppose that
every such $G$ and $H$
are distinguished by a sentence $\Phi_{G,H}$ of quantifier depth at most $d$.
Specifically, suppose that $\Phi_{G,H}$ is true on $G$ and false on $H$.
We have to show that $\classc$ can be defined by a sentence of quantifier depth at most $d$.
For a graph $G\in\classc$,
consider the sentence $\Phi_G\feq\bigwedge_{H}\Phi_{G,H}$,
where the conjunction is over all $H\notin\classc$. 
This sentence distinguishes $G$
from all $H\notin\classc$ and has quantifier depth at most $d$.
The only problem with it is that the conjunction over $H$ is actually infinite.
Luckily, there are only finitely many pairwise
inequivalent first-order sentences about graphs
of quantifier depth $d$; see, e.g., \cite[Theorem 2.4]{PikhurkoV11}.
Removing all but one formula $\Phi_{G,H}$ from each equivalence class,
we make $\Phi_G$ a legitimate finite sentence.
Now, consider $\Phi\feq\bigvee_{G}\Phi_{G}$,
where the disjunction is over all $G\in\classc$.
It can be made finite in the same way.
The sentence $\Phi$ defines $\classc$ and has quantifier depth at most~$d$.
\end{proof}

Thus, Lemma \ref{lem:DDGH} is a simple consequence of the fact that
there are only finitely many pairwise inequivalent first-order statements
of bounded quantifier depth. Note that the last fact does not hold true
for the variable width. We define
$$
W^*(\classc)=\max\setdef{W(G,H)}{G\in\classc,\,H\notin\classc}.
$$
Equivalently, $W^*(\classc)$ is equal to the minimum $k$ such that \classc is
definable in the infinitary logic $L^k_{\infty\omega}$; see, e.g., \cite[Chapter~11]{Libkin04}.
Obviously, $W^*(\classc)\le W(\classc)$, and we will see in Section \ref{ss:path}
that this inequality can be strict. Summarizing, we have
\begin{equation}
  \label{eq:WWD}
W^*(\classc)\le W(\classc)\le D(\classc).  
\end{equation}

The value of $W(\classc)$ admits the following characterization.
If $W(G,H)\le k$, let $D^k(G,H)$ denote the minimum quantifier depth of a first-order $k$-variable sentence
distinguishing $G$ and $H$. We set $D^k(G,H)=\infty$ if $W(G,H) > k$.
The following fact can be proved similarly to Lemma~\ref{lem:DDGH}.

\begin{lem}\label{lem:WDGH}
$W(\classc)>k$ if and only if there is a sequence of graph pairs $(G_i,H_i)$ with $G_i\in\classc$ and 
$H_i\notin\classc$ such that $D^k(G_i,H_i)\to\infty$ as $i\to\infty$.
\end{lem}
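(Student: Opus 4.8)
The plan is to follow the proof of Lemma~\ref{lem:DDGH} essentially line by line, the only new feature being that the sentences produced along the way must be kept within $k$ variables as well as within a fixed quantifier depth.

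The direction ``$\Leftarrow$'' (the existence of such a sequence implies $W(\classc)>k$) is immediate. Suppose a sequence $(G_i,H_i)$ as in the statement exists and, towards a contradiction, that $W(\classc)\le k$. Then $\classc$ is defined by some sentence $\Phi$ of variable width $\le k$; let $d_0$ be its quantifier depth. For each $i$ the sentence $\Phi$ distinguishes $G_i\in\classc$ from $H_i\notin\classc$, and $\Phi$ has $\le k$ variables and quantifier depth $d_0$, whence $D^k(G_i,H_i)\le d_0$ for all $i$. This contradicts $D^k(G_i,H_i)\to\infty$.

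For the direction ``$\Rightarrow$'' it suffices to produce, for every $d$, a pair of graphs $G_d\in\classc$, $H_d\notin\classc$ with $D^k(G_d,H_d)>d$; running $d$ through $1,2,\dots$ then gives the required sequence. Assume, towards a contradiction, that for some fixed $d$ we have $D^k(G,H)\le d$ for all $G\in\classc$ and $H\notin\classc$; we shall derive $W(\classc)\le k$. For each such pair fix a sentence $\Phi_{G,H}$ of variable width $\le k$ and quantifier depth $\le d$ that distinguishes $G$ and $H$. Renaming bound variables, we may assume that $\Phi_{G,H}$ uses only the variables $x_1,\dots,x_k$; negating it if necessary (which changes neither the variable width nor the quantifier depth), we may assume that it is true on $G$ and false on $H$. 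Now copy the construction from the proof of Lemma~\ref{lem:DDGH}: for $G\in\classc$ put $\Phi_G\feq\bigwedge_{H\notin\classc}\Phi_{G,H}$, and then $\Phi\feq\bigvee_{G\in\classc}\Phi_G$. Each $\Phi_G$ is true on $G$ and false on every $H\notin\classc$, mentions only $x_1,\dots,x_k$, and has quantifier depth $\le d$; hence $\Phi$ defines $\classc$, mentions only $x_1,\dots,x_k$, and has quantifier depth $\le d$, so $W(\classc)\le k$ once $\Phi$ is turned into a legitimate finite sentence.

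The only genuine point — and, exactly as in Lemma~\ref{lem:DDGH}, the step to be careful about — is precisely this last \emph{finitisation} of the infinitary conjunctions and disjunctions. It relies on the fact that there are only finitely many pairwise inequivalent first-order sentences about graphs of quantifier depth at most $d$ (\cite[Theorem~2.4]{PikhurkoV11}); in particular, this is true of those among them that in addition mention only $x_1,\dots,x_k$. Keeping just one representative $\Phi_{G,H}$ from each equivalence class makes the conjunction defining $\Phi_G$ finite without changing its variable width or quantifier depth, and the same manoeuvre applied to the $\Phi_G$'s makes the disjunction defining $\Phi$ finite. Note that the contradiction hypothesis already forces every $D^k(G,H)$ with $G\in\classc$, $H\notin\classc$ to be finite, so the value $\infty$ never intervenes in the construction; had instead some such pair satisfied $D^k(G,H)=\infty$ — equivalently, had $W^*(\classc)>k$ — that single pair would trivially witness $D^k>d$ for every $d$, and the ``$\Rightarrow$'' direction would hold with the constant sequence.
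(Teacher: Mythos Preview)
Your proof is correct and matches the paper's approach: the paper does not give an explicit proof of Lemma~\ref{lem:WDGH} but simply remarks that it ``can be proved similarly to Lemma~\ref{lem:DDGH}'', which is precisely what you carry out, with the added bookkeeping that the distinguishing sentences use only the variables $x_1,\dots,x_k$.
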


Lemmas \ref{lem:DDGH} and \ref{lem:WDGH} reduce estimating the logical depth and width to estimating 
the parameters $D(G,H)$ and $D^k(G,H)$ over $G\in\classc$ and $H\notin\classc$.
The first inequality in \refeq{WWD} can be used for obtaining lower bounds for $W(\classc)$
by estimating $W(G,H)$ over $G\in\classc$ and $H\notin\classc$.
The parameters $D(G,H)$, $D^k(G,H)$, and $W(G,H)$ have a very useful combinatorial characterization.

In the \emph{$k$-pebble Ehrenfeucht-Fra{\"\i}ss{\'e} game} (see, e.g., \cite[Chapter 11.2]{Libkin04}), 
the board consists of two vertex-disjoint graphs $G$ and $H$. 
Two players, \emph{Spoiler} and \emph{Duplicator} (or \emph{he} and \emph{she})
have equal sets of $k$ pairwise different pebbles.
In each round, Spoiler takes a pebble and puts it on a vertex in $G$ or in $H$; 
then Duplicator has to put her copy of this pebble on a vertex
of the other graph.
Note that the pebbles can be reused and change their positions during the play.
Duplicator's objective is to ensure that the pebbling determines a partial
isomorphism between $G$ and $H$ after each round; when she fails, she immediately loses.
The proof of the following facts can be found in~\cite{Immerman-book}:

\begin{enumerate}
\item 
$D^k(G,H)$ is equal to the
minimum $d$ such that Spoiler has a winning strategy in the $d$-round $k$-pebble
game on $G$ and~$H$.
\item 
 $D(G,H)$ is equal to the
minimum $d$ such that Spoiler has a winning strategy in the $d$-round $d$-pebble
game on $G$ and~$H$. 
\item 
 $W(G,H)$ is equal to the
minimum $k$ such that, for some $d$, Spoiler has a winning strategy in the $d$-round $k$-pebble
game on $G$ and~$H$. 
\end{enumerate}

We are interested in the property of containing a specified induced subgraph.
We write $F\sqsubset G$ to say that $G$ contains an induced subgraph isomorphic to $F$.
Thus, $\indsubgr F=\setdef{G}{F\sqsubset G}$.
Let $D[F]=D(\indsubgr F)$ and, similarly, $W[F]=W(\indsubgr F)$ 
and $W^*[F]=W^*(\indsubgr F)$. Thus,
$W^*[F]$ is the maximum $W(G,H)$ over all $G$ containing an induced copy of $F$
and all $H$ not containing such a copy.
As a particular case of \refeq{WWD}, we have
$$
W^*[F]\le W[F]\le D[F]\le\ell
$$
for every $F$ with $\ell$ vertices.

The vertex set of a graph $G$ will be denoted by $V(G)$.
Throughout the paper, we consider simple undirected graphs without loops.
Let $\compl G$ denote the complement of $G$, that is, $V(\compl G)=V(G)$
and two vertices are adjacent in $\compl G$ exactly when they are not adjacent in~$G$.

\begin{lem}\label{lem:compl}
$D[F]=D[\compl F]$, $W^*[F]=W^*[\compl F]$, and $W[F]=W[\compl F]$.
\end{lem}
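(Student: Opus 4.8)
The plan is to prove all three equalities simultaneously via a single structural observation about complementation. The key point is that complementing both graphs on the board of an Ehrenfeucht--Fra\"iss\'e game does not change the game at all: a pebbling determines a partial isomorphism between $G$ and $H$ if and only if it determines a partial isomorphism between $\compl G$ and $\compl H$, since the equality relation is preserved under complementation and adjacency in $G,H$ corresponds exactly to non-adjacency in $\compl G,\compl H$. Hence Spoiler has a winning strategy in the $d$-round $k$-pebble game on $(G,H)$ precisely when he has one on $(\compl G,\compl H)$, which gives $D^k(G,H)=D^k(\compl G,\compl H)$, $D(G,H)=D(\compl G,\compl H)$, and $W(G,H)=W(\compl G,\compl H)$ by the three characterizations quoted from \cite{Immerman-book}. (Alternatively, one observes syntactically that replacing every atom $x\sim y$ by $x\not\sim y$ in a distinguishing sentence turns a sentence distinguishing $G,H$ into one distinguishing $\compl G,\compl H$, preserving both quantifier depth and variable width; either route works.)

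The second ingredient is that complementation is an involution that maps $\indsubgr F$ onto $\indsubgr{\compl F}$. Indeed, $F\sqsubset G$ holds if and only if some induced subgraph of $G$ on $\ell$ vertices is isomorphic to $F$; taking complements, this is equivalent to the corresponding induced subgraph of $\compl G$ being isomorphic to $\compl F$, i.e.\ $\compl F\sqsubset\compl G$. Thus $G\in\indsubgr F \iff \compl G\in\indsubgr{\compl F}$, so the map $G\mapsto\compl G$ is a bijection between $\indsubgr F$ and $\indsubgr{\compl F}$ that also maps the complement of $\indsubgr F$ onto the complement of $\indsubgr{\compl F}$.

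Combining the two ingredients finishes the proof. For the depth, apply Lemma~\ref{lem:DDGH}:
$$
D[\compl F]=\max\setdef{D(G',H')}{G'\in\indsubgr{\compl F},\,H'\notin\indsubgr{\compl F}}
=\max\setdef{D(\compl G,\compl H)}{G\in\indsubgr F,\,H\notin\indsubgr F},
$$
where we substituted $G'=\compl G$, $H'=\compl H$ using the bijection above; since $D(\compl G,\compl H)=D(G,H)$, the right-hand side equals $D[F]$. The same substitution in the definitions of $W^*(\classc)$ and, via Lemma~\ref{lem:WDGH}, of $W(\classc)$ yields $W^*[F]=W^*[\compl F]$ and $W[F]=W[\compl F]$ respectively.

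There is no real obstacle here; the only thing to be careful about is the bookkeeping of which graph is ``inside'' the class and which is ``outside'' when one passes to complements, and to note that $\compl{\compl F}=F$ so that the argument is symmetric in $F$ and $\compl F$ and the ``$\le$'' obtained in one direction automatically gives equality.
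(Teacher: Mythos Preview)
Your proof is correct and follows essentially the same approach as the paper: establish $D(G,H)=D(\compl G,\compl H)$, $W(G,H)=W(\compl G,\compl H)$, and $D^k(G,H)=D^k(\compl G,\compl H)$, note that $F\sqsubset G\iff\compl F\sqsubset\compl G$, and then combine these with Lemma~\ref{lem:DDGH}, the definition of $W^*$, and Lemma~\ref{lem:WDGH} respectively. You supply a bit more justification for the invariance of the distinguishing parameters under complementation (via the Ehrenfeucht--Fra\"iss\'e game or the syntactic swap of $\sim$ and $\not\sim$), which the paper simply states; otherwise the arguments are the same.
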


\begin{proof}
The first equality follows from the equality $D(G,H)=D(\compl G,\compl H)$ by Lemma \ref{lem:DDGH}.
Indeed, $F\sqsubset G$ iff $\compl F\sqsubset\compl G$. Therefore,
\begin{multline*}
D[F]=\max\setdef{D(G,H)}{G\sqsupset F,\,H\not\sqsupset F}=
\max\setdef{D(\compl G,\compl H)}{\compl G\sqsupset \compl F,\,\compl H\not\sqsupset \compl F}\\=
\max\setdef{D(G,H)}{G\sqsupset\compl F,\,H\not\sqsupset\compl F}=D[\compl F].
\end{multline*}

The second equality follows similarly from the equality $W(G,H)=W(\compl G,\compl H)$ by
the definition of $W^*[F]$. The third equality follows from the equality 
$D^k(G,H)=D^k(\compl G,\compl H)$ by Lemma~\ref{lem:WDGH}.
\end{proof}

\subsection*{Further graph-theoretic definitions.}
A graph $G$ is called $F$-free if $F\not\sqsubset G$.
The vertex-disjoint union of graphs $G$ and $H$ will be denoted by $G+H$.
Correspondingly, $sG$ is the vertex-disjoint union of $s$ copies of $G$.
The \emph{lexicographic product} $A\cdot B$ of two
graphs $A$ and $B$ is defined as follows: $V(A\cdot B)=V(A)\times V(B)$, and $(u,v)$ and $(x,y)$ are adjacent
in $A\cdot B$ if $u$ and $x$ are adjacent in $A$ or if $u=x$ and $v$ and $y$
are adjacent in $B$. In other words, $A\cdot B$ is obtained from $A$
by substituting each vertex $u$ with an induced copy $B_u$ of $B$
and drawing all edges between $B_u$ and $B_x$ whenever $u$ and $x$ are adjacent.

A vertex is \emph{isolated} if it has no adjacent vertex
and \emph{universal} if it is adjacent to all other vertices in the graph.
Two vertices are called \emph{twins} if they have the same adjacency to
the rest of the graph.

Throughout the paper, $\log n$ means the logarithm base~2.

\section{The extension index and a lower bound for $W^*[F]$}\label{s:lower}

Let $k\ge2$.
By the \emph{$k$-extension property} we mean the first-order sentence $\ea k$ of quantifier depth $k$
(also called the \emph{$k$th extension axiom})
saying that, for every two disjoint sets $X,Y\subset V(G)$ with $|X\cup Y| < k$, 
there is a vertex $z\notin X\cup Y$ adjacent to all $x\in X$ and non-adjacent to all $y\in Y$.
Note that $\ea 2$ says exactly that a graph has
neither isolated nor universal vertex.
For convenience, we also set $\ea 1\feq\E z(z=z)$.

Note that, if $G\models \ea k$ and $F$ has at most $k$ vertices, then $F\sqsubset G$.
Suppose that $F$ has more than 1 vertex.
We define the \emph{extension index} of $F$, denoted by $\alice F$, as
the minimum $k$ such that $H\models \ea k$ implies $F\sqsubset H$.
Equivalently, $\alice F$ is
the maximum $k$ for which there is a graph $H$ such that
$H\models \ea{k-1}$ while $F\not\sqsubset H$. 
Note that $\alice F\le\ell$ for any $\ell$-vertex graph~$F$.

\begin{lem}\label{lem:alice}
$W^*[F]\ge \alice F$.  
\end{lem}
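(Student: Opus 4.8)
The goal is to show $W^*[F] \ge \alice F$, i.e., that if $k = \alice F$, then $\indsubgr F$ is not definable in $L^{k-1}_{\infty\omega}$; equivalently, by the characterization of $W^*(\classc)$ via the infinite Ehrenfeucht--Fra\"iss\'e game, that there exist graphs $G \sqsupset F$ and $H \not\sqsupset F$ on which Duplicator wins the $(k-1)$-pebble game with unboundedly many rounds — or better, a single pair witnessing $W(G,H) > k-1$. The natural candidate for $H$ comes directly from the definition of the extension index: since $\alice F = k$, there is a graph $H$ with $H \models \ea{k-1}$ yet $F \not\sqsubset H$. For $G$ I would take a graph that contains an induced copy of $F$ and \emph{also} satisfies $\ea{k-1}$ — for instance $G = H + F$ together with enough extra structure, or more cleanly a sufficiently large random-like graph, or simply any $G \models \ea{k-1}$ with $F \sqsubset G$ (such $G$ exists, e.g. a large graph satisfying $\ea\ell$).

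**Key steps in order.** First I would recall the combinatorial characterization from the excerpt: $W(G,H) \le k-1$ would mean Spoiler wins the $(k-1)$-pebble game on $G,H$ for some finite number of rounds. So it suffices to exhibit $G \sqsupset F$, $H \not\sqsupset F$ such that Duplicator survives arbitrarily many rounds of the $(k-1)$-pebble game. Second, I would observe the standard fact that if both $G$ and $H$ satisfy the extension axiom $\ea{k-1}$, then Duplicator wins the $(k-1)$-pebble game on $G$ and $H$ forever: at any position fewer than $k-1$ pebbles are on the board, Spoiler places a new pebble on a vertex $z$ in one graph defining an adjacency pattern over at most $k-2$ already-pebbled vertices, and $\ea{k-1}$ in the other graph supplies a vertex realizing exactly that pattern (distinct from the pebbled ones), so the partial map stays a partial isomorphism. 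Third, I would choose $H$ as guaranteed by $\alice F = k$ (so $H \models \ea{k-1}$, $F \not\sqsubset H$) and choose $G$ to be any graph with $G \models \ea{k-1}$ and $F \sqsubset G$; such a $G$ exists because, e.g., a large enough graph satisfying $\ea{\ell} \supseteq \ea{k-1}$ already contains $F$ as an induced subgraph — indeed $H + K_\ell$-like constructions or explicitly the graph on a large vertex set satisfying all extension axioms up to $\ell$ work. Then Duplicator wins the $(k-1)$-pebble game on $G,H$ for every number of rounds, so $D^{k-1}(G,H) = \infty$, hence $W(G,H) \ge k$, and therefore $W^*[F] = \max_{G \sqsupset F,\, H \not\sqsupset F} W(G,H) \ge k = \alice F$.

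**Main obstacle.** The only delicate point is making sure the pair $(G,H)$ is legitimate for the $W^*[F]$ maximum: we genuinely need $G \sqsupset F$ while $H \not\sqsupset F$, and simultaneously both satisfying $\ea{k-1}$. The graph $H$ is handed to us by the definition of $\alice F$; the work is in producing a companion $G$ that keeps $\ea{k-1}$ but gains an induced $F$. I expect this to be routine — one can take $G$ to be a finite graph satisfying enough extension axioms (all of $\ea 1,\dots,\ea{\ell}$), which forces $F \sqsubset G$ and in particular $G \models \ea{k-1}$; finite such graphs exist and can be exhibited, for instance, by a probabilistic argument on $\ergraph$ for large $n$, or by an explicit Paley-type construction. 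With $G$ and $H$ in hand, the Duplicator strategy above is immediate, and the inequality follows.
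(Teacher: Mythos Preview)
Your proposal is correct and follows essentially the same approach as the paper's own proof: take $H$ from the definition of $\alice F$ (so $H\models\ea{k-1}$ and $F\not\sqsubset H$), take $G$ to be any graph satisfying $\ea{k-1}$ with $F\sqsubset G$ (the paper uses a random graph $\ergraph$ for this), and invoke the standard fact that two graphs both satisfying $\ea{k-1}$ are indistinguishable in the $(k-1)$-pebble game. The paper's argument is slightly terser but the content is identical.
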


\begin{proof}
As easily seen, if both $G$ and $H$ have the $(k-1)$-extension property, then 
Duplicator has a winning strategy in the $(k-1)$-pebble Ehrenfeucht-Fra{\"\i}ss{\'e} 
game on graphs $G$ and $H$ and, hence, $W(G,H)\ge k$. 
Therefore, it suffices to show that there are $G$ and $H$ such that $F\sqsubset G$, 
$F\not\sqsubset H$, and both of them satisfy $\ea{k-1}$ for $k=\alice F$. Such a graph $H$ exists by the
definition of the extension index. Such a graph $G$ exists because,
as very well known (see, e.g., \cite{Spencer-book}), for fixed $k$ and $\ell$ 
a random graph $\ergraph$ has 
the $k$-extension property and contains every $\ell$-vertex graph as an induced subgraph
with probability approaching 1 as $n$ increases.
Recall that $G(n,p)$ refers to the probability distribution on graphs with
vertex set $\{1,\ldots,n\}$ where each two vertices are adjacent with probability $p$
independently of the other pairs.
\end{proof}

\begin{exa}\label{ex:3-vertex}\hfill
  \begin{enumerate}[\bf 1.]
  \item 
$\alice{P_3}=3$ because $H=2K_2$ is $P_3$-free and satisfies $\ea 2$.
By Lemma \ref{lem:alice}, $W^*[P_3]=W[P_3]=3$. 
\item 
$\alice{K_3}=3$, as also certified by $H=2K_2$ (or by $H=C_4$).
  \end{enumerate}
\end{exa}

We can determine $\alice{K_\ell}$ for any $\ell$
using a relationship between $\alice F$ and the chromatic number of~$F$.

\begin{thm}\label{thm:chi}
$\alice F\ge\chi(F)$.
\end{thm}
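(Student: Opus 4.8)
The plan is to show that the chromatic number is a lower bound for the extension index by exhibiting, for $k=\chi(F)-1$, a graph $H$ that satisfies the $k$-extension axiom $\ea k$ but is $F$-free. By the definition of $\alice F$ as the maximum $k$ such that some graph satisfies $\ea{k-1}$ while being $F$-free, producing such an $H$ with $k=\chi(F)-1$ immediately gives $\alice F\ge\chi(F)$. First I would recall the classical fact from extremal graph theory that there exist graphs of arbitrarily high girth and arbitrarily high chromatic number (Erd\H os); equivalently, for any $r$ and any $g$ there is a graph $H$ with $\chi(H)\le r$ — more precisely we want $\chi(H)< \chi(F)$ — yet with no short cycles, so that $H$ is locally very sparse. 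The point is that an induced copy of $F$ inside $H$ would require $\chi(F)$ colors on those $\ell$ vertices, which is impossible once $\chi(H)<\chi(F)$; hence such an $H$ is automatically $F$-free.

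The remaining and main task is to arrange simultaneously that $H$ satisfies $\ea{k}$ for $k=\chi(F)-1$. The $k$-extension property is a density/richness condition, so it is in tension with having small chromatic number; the resolution is to use a random construction. Concretely I would take $H=\ergraph$ for large $n$ and recall (as in the proof of Lemma~\ref{lem:alice}, citing \cite{Spencer-book}) that $\ergraph$ satisfies $\ea k$ with probability tending to $1$. The difficulty is that $\ergraph$ has chromatic number $\Theta(n/\log n)$, far too large. So instead I would combine the extension property with a sparsification or a blow-up trick: either (a) invoke a sparse-random-graph model $G(n,p)$ with $p=n^{-\varepsilon}$ tuned so that with positive probability the graph still has the $k$-extension property (which holds as long as $p$ does not decay too fast relative to $k$, e.g.\ $p\gg n^{-1/(2k)}$ after a union bound over the $O(n^{k})$ choices of $X,Y$ and the concentration of the number of common neighbors) while every subgraph on few vertices is sparse enough to be $(k+1)$-colorable — wait, this still risks small chromatic number being violated — or (b), more robustly, note we only need $\chi(H)<\chi(F)$, not $\chi(H)$ bounded, so it suffices to take a graph $H$ of chromatic number exactly $\chi(F)-1$ that also has the $k$-extension property, and such graphs exist by a probabilistic argument on the class of $(\chi(F)-1)$-colorable graphs with prescribed color-class densities.

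Let me restructure the core step. I expect the cleanest route is: fix $r=\chi(F)-1$ and $k=r$, and consider a random $r$-partite graph $H$ on parts $V_1,\dots,V_r$ each of size $n$, with every cross-pair edge present independently with probability $1/2$. Such an $H$ is $r$-colorable by construction, hence $\chi(H)\le r<\chi(F)$, hence $F\not\sqsubset H$. It remains to verify that $H\models\ea{k}$ with probability $\to 1$; since $k=r$, given disjoint $X,Y$ with $|X\cup Y|<r$, the pigeonhole principle guarantees some part $V_j$ disjoint from $X\cup Y$, and then for a uniformly random $z\in V_j$ the events ``$z\sim x$'' (for $x\notin V_j$) and ``$z\not\sim y$'' (for $y\notin V_j$) are independent fair coin flips, while any $x$ or $y$ lying in $V_j$ is automatically non-adjacent to $z$ — so the extension attempt fails only for those $X,Y$ whose required neighborhood pattern is inconsistent with $z\in V_j$, i.e.\ contains an $x\in V_j$. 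This is the subtle point: the extension axiom as stated does not restrict to $r$-partite targets, so I must check the axiom is not vacuously false; I would handle it by a slightly more careful random construction (e.g.\ making parts of geometrically growing sizes, or adding a sparse random perturbation that never creates a $K_{r+1}$ but fixes the in-part extension requests). A union bound over the $n^{O(k)}$ pairs $(X,Y)$, using exponential concentration of $|\{z\in V_j: z\sim X, z\not\sim Y\}|$ around $n/2^{|X\cup Y|}$, then finishes the estimate. The main obstacle is exactly this reconciliation — guaranteeing the $k$-extension property for a deliberately low-chromatic-number graph — and I would expect the write-up to spend most of its effort there, with the $F$-freeness being a one-line chromatic-number observation.
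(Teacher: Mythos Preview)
Your final construction---a random $r$-partite graph on equal parts $V_1,\dots,V_r$ with $r=\chi(F)-1$ and each cross-edge present independently with probability $1/2$---is exactly the paper's proof, and the argument you sketch (pigeonhole gives a part $V_j$ disjoint from $X\cup Y$; for $z\in V_j$ the adjacencies to $X\cup Y$ are independent fair coins; union bound over the $O((rn)^{k-1}2^{k-1})$ choices of $(X,Y)$) is precisely how the paper finishes.

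Your ``subtle point'' at the end is a phantom, and the proposed fixes (geometrically growing parts, sparse perturbations) are unnecessary. You already observed that since $|X\cup Y|\le k-1=r-1$, pigeonhole guarantees a part $V_j$ with $V_j\cap(X\cup Y)=\emptyset$. Hence there is \emph{no} $x\in X$ lying in $V_j$, so the worry that ``$z$ is automatically non-adjacent to such an $x$'' never arises: every vertex of $X\cup Y$ lies in some other part, and its adjacency to $z\in V_j$ is a genuine fair coin. The $k$-extension axiom is therefore satisfied with high probability with no further modification; the paper carries out exactly this union bound and shows the failure probability is at most $\binom{kn}{k-1}2^{k-1}(1-2^{-k+1})^n\to 0$.
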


\begin{proof}
Let $k=\chi(F)-1$. We have to show that there is a graph $G$ having the $k$th
extension property and containing no induced copy of $F$.

Let $T_{k,n}$ denote the $k$-partite Turán graph with $kn$ vertices.
The vertex set of $T_{k,n}$ is split into $k$ vertex classes $V_1,\ldots,V_k$,
each consisting of $n$ vertices. Two vertices of $T_{k,n}$ are adjacent if and
only if they belong to different vertex classes. Obviously, $\chi(T_{k,n})=k$.
Since $\chi(T_{k,n})<\chi(F)$, the graph $T_{k,n}$ itself and any of its subgraphs
do not contain an induced copy of~$F$. Let $\rturan$ be a random subgraph of $T_{k,n}$,
obtained from $T_{k,n}$ by deleting each edge with probability $1/2$, independently
of the other edges. In order to prove the theorem, it suffices to show that
$\rturan$ has the $k$th extension property with nonzero probability if $n$
is chosen sufficiently large.

Consider two disjoint vertex sets $X,Y$ in $\rturan$ such that $|X\cup Y|=k-1$
and estimate the probability that they violate $\ea k$.
Fix a vertex class $V_m$ disjoint with $X\cup Y$.
A particular vertex $z\in V_m$ is adjacent to all $x\in X$ and non-adjacent to all $y\in Y$
with probability $2^{-k+1}$, and the converse happens with probability $1-2^{-k+1}$.
The probability that none of the vertices in $V_m$ has the ``right'' adjacency pattern
to $X$ and $Y$ is equal to $(1-2^{-k+1})^n$. 
Using the inequalities
${a\choose b}<\of{\frac{a\,\mathrm{e}}b}^b$ and
\begin{equation}
  \label{eq:exp}
  1-x\le\mathrm{e}^{-x}\text{ for all reals }x,
\end{equation}
we conclude that two sets $X,Y$
violating $\ea k$ exist with probability at most
$$
{kn\choose k-1}2^{k-1}(1-2^{-k+1})^n\le\of{\frac{kn\,\mathrm{e}}{k-1}}^{k-1}2^{k-1}\mathrm{e}^{-2^{-k+1}n}=
\of{\frac{2k\,\mathrm{e}}{k-1}}^{k-1}\mathrm{e}^{(k-1)\ln n-2^{-k+1}n},
$$
which approaches $0$ as $n$ increases (since $k$ is fixed).
It follows that $\ea k$ is violated by $\rturan$ with probability strictly less
than $1$ if $n$ is chosen sufficiently large.
\end{proof}

\begin{cor}\label{cor:complete}
$\alice{K_\ell}=\ell$.  
\end{cor}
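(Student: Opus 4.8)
The plan is to obtain the two inequalities $\alice{K_\ell}\le\ell$ and $\alice{K_\ell}\ge\ell$ from facts already in hand, so that the corollary drops out by sandwiching. The upper bound requires nothing new: it was observed immediately before Lemma~\ref{lem:alice} that $\alice F\le\ell$ for every graph $F$ on $\ell$ vertices, and in particular for $F=K_\ell$. For the matching lower bound I would simply apply Theorem~\ref{thm:chi} with $F=K_\ell$. Since the chromatic number of a complete graph satisfies $\chi(K_\ell)=\ell$, the theorem yields $\alice{K_\ell}\ge\chi(K_\ell)=\ell$. Putting the two bounds together gives $\alice{K_\ell}=\ell$.

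Because both ingredients are already established, there is essentially no obstacle; the only point worth making explicit is what Theorem~\ref{thm:chi} specializes to in this case. Concretely, with $k=\ell-1$ its proof produces a witness for $\alice{K_\ell}>\ell-1$, namely a random subgraph $\mathbb{T}_{\ell-1,n}$ of the $(\ell-1)$-partite Turán graph $T_{\ell-1,n}$ for sufficiently large $n$: such a graph is $(\ell-1)$-colourable, hence contains no $K_\ell$ even as a (not necessarily induced) subgraph, while with positive probability it satisfies the extension axiom $\ea{\ell-1}$. This re-derivation is optional, but it makes the corollary self-contained and highlights that the value $\ell$ is already forced by the chromatic number alone.

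In writing this up I would therefore keep the argument to a couple of sentences, citing the pre-Lemma~\ref{lem:alice} remark for $\alice{K_\ell}\le\ell$ and Theorem~\ref{thm:chi} together with $\chi(K_\ell)=\ell$ for the reverse inequality, and optionally recalling the Turán-graph witness for emphasis.
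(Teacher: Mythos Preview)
Your proposal is correct and matches the paper's intended argument exactly: the corollary is stated without proof precisely because it follows immediately from the general upper bound $\alice F\le\ell$ noted before Lemma~\ref{lem:alice} together with Theorem~\ref{thm:chi} and $\chi(K_\ell)=\ell$.
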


It may seem plausible at first glance that $\alice F=\ell$ for every $F$ with $\ell$ vertices.
Nevertheless, in Section \ref{s:small} we will see that this is not always the case as,
for example, $\alice F=3$ for $F$ being the paw and the path on 4 vertices.
The best general lower bound for $\alice F$ we can show is given by the following lemma.

\begin{lem}\label{lem:alice-lower}
Let $F$ be a graph with $\ell\ge2$ vertices. Then
$$
\alice F\ge\lfloor\frac12\,\ell-2\log_2\ell+3\rfloor.
$$
\end{lem}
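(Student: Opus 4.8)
The plan is to prove the stronger-looking but equivalent statement that $\alice F\ge k$ for $k=\lfloor\frac12\,\ell-2\log_2\ell+3\rfloor$, which by the definition of the extension index means producing a single graph $H$ with $H\models\ea{k-1}$ and $F\not\sqsubset H$. When $\ell$ is small the claimed bound is at most $2$, and it then holds trivially, since the one-vertex graph satisfies $\ea 1$ and contains no induced copy of $F$ as soon as $\ell\ge 2$. So we may assume $\ell$ is large.

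The candidate for $H$ is a random graph $\ergraph$ with $n$ as large as possible subject to a first-moment bound, concretely $n=\lfloor 2^{(\ell-1)/2-1/\ell}\rfloor$. This is the counterpoint to the proof of Theorem~\ref{thm:chi}: there the extension index was capped at $\chi(F)$ because the random graph was drawn from a fixed bipartite-type template, whereas here we draw from the complete ``template'' and instead forbid induced copies of $F$ probabilistically, paying only a factor of roughly $\sqrt2^{\,\ell}$ in the number of available vertices. Two estimates are needed. First, since a fixed injective map $V(F)\to V(\ergraph)$ is an induced embedding with probability exactly $2^{-\binom\ell2}$, the expected number of induced copies of $F$ in $\ergraph$ is at most $n^\ell 2^{-\binom\ell2}$, and the choice of $n$ makes this at most $\tfrac12$; hence $\Pr[F\sqsubset\ergraph]\le\tfrac12$. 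Second, the standard union-bound estimate for the extension axiom (exactly as in the proof of Theorem~\ref{thm:chi}, now over the complete template) gives
$$
\Pr[\ergraph\not\models\ea{k-1}]\le\binom{n}{k-2}2^{k-2}\of{1-2^{-(k-2)}}^{\,n-k+2}\le n^{k-2}2^{k-2}\,\mathrm{e}^{-2^{-(k-2)}(n-k+2)},
$$
and one checks that for these values of $n$ and $k$ this is strictly less than $\tfrac12$: with $k-2\approx\frac\ell2-2\log_2\ell$ one has $2^{k-2}\approx 2^{\ell/2}/\ell^2$, so $2^{-(k-2)}n=\Theta(\ell^2)$ grows, and it dominates $\log_2(n^{k-2}2^{k-2})$ with room to spare. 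Combining the two bounds by a union bound, $\Pr[F\sqsubset\ergraph\text{ or }\ergraph\not\models\ea{k-1}]<1$, so there is a graph $H$ on $n$ vertices that avoids $F$ as an induced subgraph and satisfies $\ea{k-1}$, and therefore $\alice F\ge k$.

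The delicate point, and the one that pins down the additive constant $3$ rather than something smaller, is that the two requirements on $n$ must be simultaneously satisfiable: $n$ must be small enough ($n\lesssim 2^{(\ell-1)/2}$) for the first-moment bound yet large enough ($n\gtrsim 2^{k-2}\cdot\mathrm{poly}(\ell)$, since the worst case of $\ea{k-1}$ involves sets of size $k-2$) for the extension axiom. Because the allowed window for $n$ is only of width about a constant factor, replacing the $-2\log_2\ell$ correction by one with a larger additive constant would empty this window. The actual verification is routine bookkeeping with $\binom{n}{b}\le n^b$, the inequality $1-x\le\mathrm{e}^{-x}$ from \refeq{exp}, and the crude estimate $\ln(2n)=O(\ell)$, together with choosing $n$ near the upper end of its range; I expect this numerical tightening — not the underlying idea — to be the main thing requiring care.

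Finally, one should record that the argument covers all $\ell\ge 2$: for every $\ell$ for which $k\ge 3$ the probabilistic estimates already go through, and for the finitely many remaining $\ell$ the bound does not exceed $2$ and is covered by the trivial observation above.
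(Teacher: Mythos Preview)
Your proposal is correct and follows essentially the same approach as the paper: both take $H=\ergraph$ for $n$ close to $2^{(\ell-1)/2}$, bound $\Pr[F\sqsubset\ergraph]$ by the first moment and $\Pr[\ergraph\not\models\ea{k-1}]$ by the standard union bound using \refeq{exp}, and conclude by the probabilistic method, with the small-$\ell$ cases handled trivially. The only cosmetic differences are that the paper splits into even and odd $\ell$ (taking $n=2^{\ell/2-1}$ and $n=2^{(\ell-3)/2}$ respectively) rather than your single choice $n=\lfloor 2^{(\ell-1)/2-1/\ell}\rfloor$, and it proves the sharper estimate $\Pr[F\sqsubset\ergraph]<1/(2n)$ before carrying out the numerical bookkeeping you anticipated.
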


\begin{proof}
The lemma is trivially true if $\ell\le15$ because in this case it just states that $\alice F\ge2$.
We, therefore, suppose that $\ell\ge16$.

Denote $k=\lfloor\frac12\,\ell-2\log \ell+2\rfloor$.
Suppose that $\ell$ is even and set $n=2^{\ell/2-1}$.
It suffices to show that the random graph $\ergraph$
with a non-zero probability has the $k$-extension property
and simultaneously contains no induced copy of~$F$.

The probability of $F\sqsubset\ergraph$ is bounded from above by the value of
$$
p(\ell,n)=n(n-1)(n-2)\cdots(n-\ell+1)2^{-\ell(\ell-1)/2}.
$$
By the choice of $n$,
$$
2^{-\ell(\ell-1)/2}=(2n)^{-\ell+1}=2^{-\ell+1}n^{-\ell+1}=\frac12\,n^{-\ell-1}.
$$
Therefore,
$$
p(\ell,n)=\frac{n(n-1)(n-2)\cdots(n-\ell+1)}{2\,n^{\ell+1}}<\frac1{2n}.
$$
It remains to prove that $\ergraph$ has the $k$-extension property
with probability at least~$1/(2n)$.

The probability of $\ergraph\not\models \ea k$ is bounded from above by the value of
$$
q(n,k)={n \choose k-1}2^{k-1}(1-2^{-k+1})^{n-k+1}.
$$
In its turn,
$$
q(n,k) < 4\,n^{k-1}(1-2^{-k+1})^n.
$$
By \refeq{exp}, the last value is bounded from above by
$$
q'(n,k)=4\,\exp\of{\ln n(k-1)-n\,2^{-k+1}}.
$$
Denote $k'=\frac12\,\ell-2\log\ell+2$.
Since the function $f(x)=\ln n\, x-n\,2^{-x}$ is monotonically increasing,
\begin{multline*}
q(n,k)<q'(n,k')=4\,n^{k'-1}\exp\of{-\frac n{2^{\ell/2-2\log\ell+1}}}=
4\,n^{k'-1}\exp\Big(-\frac{\ell^2}4\Big)\\[2mm]
=4\,n^{k'-1}(2n)^{-\log\mathrm{e}\,\ell/2}=
4\,n^{k'-1}2^{-\log\mathrm{e}\,\ell/2}n^{-\log\mathrm{e}\,\ell/2}=
4\,n^{k'-1}(2n)^{-\log\mathrm{e}}n^{-\log\mathrm{e}\,\ell/2}\\[1.5mm]
=4\,\mathrm{e}^{-1}n^{-\ell(\log\mathrm{e}-1)/2-2\log\ell-\log\mathrm{e}+1}.
\end{multline*}
For $\ell\ge16$, this gives us $q(n,k)<2\,n^{-11}$.
Therefore, $\ergraph$ has the $k$-extension property
with probability more than $1-2\,n^{-11}$. This is well more than $1/(2n)$, as desired.

If $\ell$ is odd, set $n=2^{(\ell-3)/2}$ and proceed similarly to above.
\end{proof}

\begin{rem}
The bound of Lemma \ref{lem:alice-lower} cannot be much improved as long as
the argument is based on $\ergraph$. Indeed, it is known \cite{JLR-book}
that there is a function $\ell_0(n)=2\log n-2\log\log n+\Theta(1)$
such that the clique number of $\ergraph$ is equal to $\ell_0(n)$ or to $\ell_0(n)+1$
with probability $1-o(1)$. In \cite{KimPSV05} it is shown that there is a function
$k(n)=\log n-2\log\log n+\Theta(1)$ such that, with probability $1-o(1)$,
$\ergraph$ satisfies $\ea{k(n)}$ but does not satisfy $\ea{k(n)+6}$.
It follows that, if $n$ is chosen so that $\ergraph$ does not contain
a subgraph $K_\ell$ with high probability, then $\ergraph$ satisfies $\ea k$ 
with non-negligible probability for, at best, $k=\frac12\ell-\log\ell+\Theta(1)$. 
\end{rem}

As an immediate consequence of Lemmas \ref{lem:alice} and \ref{lem:alice-lower}
we obtain the following result.

\begin{thm}\label{thm:W^*[F]}
Let $F$ be a graph with $\ell\ge2$ vertices. Then
$$
W^*[F]>\frac12\,\ell-2\log_2\ell+2.
$$
\end{thm}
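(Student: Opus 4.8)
The plan is to obtain Theorem~\ref{thm:W^*[F]} directly from the two ingredients just established, namely Lemma~\ref{lem:alice} (stating $W^*[F]\ge\alice F$) and Lemma~\ref{lem:alice-lower} (giving the explicit lower bound $\alice F\ge\lfloor\frac12\ell-2\log_2\ell+3\rfloor$). Chaining these two inequalities immediately yields $W^*[F]\ge\lfloor\frac12\ell-2\log_2\ell+3\rfloor$, and the only remaining task is a cosmetic one: to replace the floor of one expression by a clean strict inequality involving another.

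The key step is therefore the elementary estimate $\lfloor\frac12\ell-2\log_2\ell+3\rfloor>\frac12\ell-2\log_2\ell+2$. This holds because for any real $x$ we have $\lfloor x\rfloor>x-1$; applying this with $x=\frac12\ell-2\log_2\ell+3$ gives $\lfloor\frac12\ell-2\log_2\ell+3\rfloor>\frac12\ell-2\log_2\ell+2$, which is exactly the bound claimed. Combining this with the chained inequality $W^*[F]\ge\lfloor\frac12\ell-2\log_2\ell+3\rfloor$ completes the argument. Since $W^*[F]$ is an integer while $\frac12\ell-2\log_2\ell+2$ need not be, the strict inequality is the natural way to phrase the conclusion, and no further care is needed.

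There is essentially no obstacle here: the theorem is explicitly flagged in the text as "an immediate consequence of Lemmas~\ref{lem:alice} and \ref{lem:alice-lower}", so the proof is a two-line deduction. The only point deserving a moment's attention is making sure the direction of the floor inequality is used correctly (we need a \emph{lower} bound on the floor, so $\lfloor x\rfloor>x-1$ is the right tool, not $\lfloor x\rfloor\le x$). For completeness one would also note that the hypothesis $\ell\ge2$ is exactly the one required by Lemma~\ref{lem:alice-lower}, so nothing extra needs to be checked.

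\begin{proof}
By Lemma~\ref{lem:alice}, $W^*[F]\ge\alice F$, and by Lemma~\ref{lem:alice-lower}, $\alice F\ge\lfloor\frac12\,\ell-2\log_2\ell+3\rfloor$. Hence
$$
W^*[F]\ge\left\lfloor\tfrac12\,\ell-2\log_2\ell+3\right\rfloor.
$$
Since $\lfloor x\rfloor>x-1$ for every real $x$, applying this with $x=\frac12\,\ell-2\log_2\ell+3$ gives $\lfloor\frac12\,\ell-2\log_2\ell+3\rfloor>\frac12\,\ell-2\log_2\ell+2$. Combining the two displayed facts, we conclude that
$$
W^*[F]>\tfrac12\,\ell-2\log_2\ell+2,
$$
as claimed.
\end{proof}
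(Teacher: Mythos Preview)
Your proof is correct and matches the paper's approach exactly: the paper simply states that the theorem is an immediate consequence of Lemmas~\ref{lem:alice} and~\ref{lem:alice-lower}, and you have spelled out precisely that chain together with the routine floor estimate $\lfloor x\rfloor>x-1$.
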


\section{Four-vertex subgraphs}\label{s:small}

Our next goal is to determine the values of $D[F]$, $W[F]$, and $W^*[F]$
for all graphs $F$ with at most 4 vertices. It is enough to consider
connected $F$, as follows from Lemma \ref{lem:compl} and the fact that 
the complement of a disconnected graph is connected.
The two connected 3-vertex graphs are considered in Example \ref{ex:3-vertex},
and we now focus on connected graphs with 4 vertices.
Recall that $W^*[K_4]=4$ by Corollary~\ref{cor:complete}
(or just because $W(K_4,K_3)=4$).

\subsection{The paw subgraph ($K_3+e$)}\label{ss:paw}

\begin{lem}[Olariu \cite{Olariu88}]\label{lem:olariu}
A graph $H$ is paw-free if and only if each connected component of $H$
is triangle-free or complete multipartite.
\end{lem}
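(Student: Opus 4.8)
The plan is to prove Olariu's characterization in both directions, with the non-trivial direction being that paw-freeness of $H$ forces every connected component to be either triangle-free or complete multipartite. The easy direction is routine: if a component is triangle-free it clearly contains no paw (a paw has a triangle), and if a component is complete multipartite, then any three mutually adjacent vertices lie in three distinct parts, and any fourth vertex is adjacent to at least two of them (it can miss only vertices in its own part, which contains at most one of the three) — hence no induced paw. So the work is in the forward direction.

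For the forward direction I would work inside a single connected component $C$ of the paw-free graph $H$, and I may assume $C$ contains a triangle (otherwise it is triangle-free and we are done). The goal is to show $C$ is complete multipartite, equivalently that non-adjacency is an equivalence relation on $V(C)$, equivalently that $C$ contains no induced $P_3$ (the path on three vertices) and no induced $\compl{K_3}$ — actually the clean statement is: a graph is complete multipartite iff it has no induced $P_3$. So the heart of the argument is: a connected, paw-free graph with at least one triangle has no induced $P_3$. First I would observe that paw-freeness plus connectivity already rules out a vertex at distance exactly $2$ from a triangle having an ``asymmetric'' neighborhood in that triangle: if $T=\{a,b,c\}$ is a triangle and $v$ is adjacent to exactly one or exactly two vertices of $T$, one gets an induced paw directly. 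Hence every vertex adjacent to some vertex of a triangle $T$ must be adjacent to either none or all of $T$. Using connectivity, I would then propagate this: starting from a triangle and walking along edges, show that every vertex of $C$ is adjacent to all-or-none of any given triangle, and in fact that the whole component is covered by triangles in a way that forces the no-induced-$P_3$ conclusion.

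More concretely, suppose for contradiction that $C$ is paw-free, connected, contains a triangle, but has an induced $P_3$, say $x - y - z$ with $x \not\sim z$. Using connectivity between $\{x,y,z\}$ and a triangle and the ``all-or-none on each triangle'' lemma above, I would derive a short path or a specific small configuration that contains an induced paw, contradicting the hypothesis; the key local moves are: (i) if $y$ lies in a triangle $\{y,p,q\}$, analyze the adjacencies of $x$ and $z$ to $p,q$ using the all-or-none rule and the fact $x\not\sim z$ to produce a paw; (ii) if $y$ lies in no triangle but some nearby vertex does, push the triangle along a shortest path toward $\{x,y,z\}$ and reduce to case (i). I expect the main obstacle to be organizing this case analysis cleanly so that connectivity is used just enough to reach a triangle and no more — i.e., turning the global hypothesis ``connected with a triangle'' into a purely local contradiction at an induced $P_3$. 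One clean way to handle the bookkeeping is to first prove the auxiliary claim ``in a paw-free graph, if $v$ has a neighbor in a triangle $T$ then $v$ is adjacent to all of $T$'' and the claim ``a connected paw-free graph containing a triangle in which some triangle $T$ is dominating (every vertex has a neighbor in $T$) is complete multipartite,'' and then show by a connectivity/propagation argument that some triangle must be dominating — after which complete multipartiteness follows because non-adjacency, restricted via the dominating triangle, becomes transitive.

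Since this lemma is quoted directly from Olariu \cite{Olariu88}, I would in the paper simply cite that reference for the proof rather than reproduce the case analysis; the sketch above is the route one would take to verify it independently.
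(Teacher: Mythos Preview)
The paper does not prove this lemma; it simply cites Olariu's paper, exactly as you suggest in your final paragraph. In that narrow sense your proposal matches the paper.

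However, the independent sketch you give contains two genuine errors that would make the argument fail.

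First, and most seriously, you have confused $P_3$ with its complement. A graph is complete multipartite if and only if it contains no induced $\compl{P_3}=K_2+K_1$, not ``no induced $P_3$''; the paper states this explicitly right after the lemma. Your ``heart of the argument'' --- that a connected paw-free graph with a triangle has no induced $P_3$ --- is simply false: $K_{2,2,2}$ is connected, paw-free, full of triangles, and full of induced $P_3$'s. What you should be aiming for is that such a graph has no induced $K_2+K_1$ (equivalently, that non-adjacency is transitive). Your contradiction setup ``$x-y-z$ with $x\not\sim z$'' is a $P_3$, not a $K_2+K_1$, so the subsequent case analysis is attacking the wrong configuration.

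Second, your auxiliary claim that a vertex with a neighbour in a triangle $T$ must be adjacent to \emph{all} of $T$ is wrong: in the diamond $K_4\setminus e$ (which is paw-free) the non-adjacent pair of vertices are each adjacent to exactly two vertices of the opposite triangle. The correct local lemma is that such a vertex is adjacent to \emph{at least two} vertices of $T$ (adjacency to exactly one yields a paw; adjacency to exactly two yields a diamond, which is fine). With this corrected lemma, a clean route is: show by a shortest-path argument that every vertex of the component is adjacent to at least two vertices of every triangle; deduce that every edge lies in a triangle; then, given a hypothetical induced $K_2+K_1$ on $\{u,v,w\}$ with $u\sim v$ and $w\not\sim u,v$, extend $uv$ to a triangle and observe that $w$ is adjacent to at most one of its vertices, a contradiction.
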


Note that a graph $B$ is complete multipartite iff the complement of $B$
is a vertex-disjoint union of complete graphs. The latter condition means exactly that
$\compl B$ is $P_3$-free. Thus, $B$ is complete multipartite iff
it is $(K_2+K_1)$-free, where $K_2+K_1=\compl{P_3}$.

\begin{thm}\label{thm:paw}
$D[K_3+e]=W[K_3+e]=W^*[K_3+e]=\alice{K_3+e}=3$.
\end{thm}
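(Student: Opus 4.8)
The plan is to pin down all four quantities by squeezing them between $3$ and $3$. By Lemma~\ref{lem:alice} together with \refeq{WWD} we already have
$$
\alice{K_3+e}\le W^*[K_3+e]\le W[K_3+e]\le D[K_3+e],
$$
so it suffices to establish the lower bound $\alice{K_3+e}\ge3$ and the upper bound $D[K_3+e]\le3$. The lower bound is immediate from Theorem~\ref{thm:chi}: $\chi(K_3+e)=3$, since the paw contains a triangle and three colours suffice (colour the pendant vertex like a triangle vertex non-adjacent to it). Equivalently, $2K_2$ is paw-free, being triangle-free, and satisfies $\ea2$, which already forces $\alice{K_3+e}\ge3$.

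For the upper bound I would verify that the three-variable, quantifier-depth-$3$ sentence $\Phi$ displayed in the introduction defines $\indsubgr{K_3+e}$. Recall that $\Phi$ asserts the existence of a vertex $v$ lying in a triangle that can be accompanied by a vertex $u$ with $v\not\sim u$, a vertex $w$ adjacent to both $v$ and $u$, and a vertex $w'$ adjacent to $v$ but not to $u$. If $K_3+e\sqsubset G$, say $G$ contains an induced paw on $\{a,b,c,d\}$ with triangle $\{a,b,c\}$ and $d$ adjacent only to $c$, then $v=b$ with triangle witnesses $a,c$, the pair $(u,w)=(d,c)$, and $w'=a$ satisfy all conjuncts of $\Phi$; as every conjunct speaks only of these four vertices, $G\models\Phi$ irrespective of the rest of $G$. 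Hence $\indsubgr{K_3+e}\subseteq\{G:G\models\Phi\}$.

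Conversely, suppose $G\models\Phi$ with witnesses $v,u,w,w'$ and, towards a contradiction, that $G$ is paw-free. The first conjunct places $v$ on a triangle, so the connected component $C$ of $G$ containing $v$ is not triangle-free; by Olariu's characterization (Lemma~\ref{lem:olariu}) $C$ is complete multipartite. Since $w$ is adjacent to both $v$ and $u$, the vertex $u$ lies in $C$, and $v\not\sim u$ forces $v$ and $u$ into the same part of the multipartite partition of $C$. By the definition of complete multipartite graphs recalled just before the theorem, two vertices of the same part are twins — every other vertex is adjacent to both or to neither — so no vertex can be adjacent to $v$ but not to $u$. This contradicts the existence of $w'$, which is distinct from $v$ and from $u$ ($G$ is loopless and $v\not\sim u$). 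Hence $G$ is not paw-free, i.e.\ $K_3+e\sqsubset G$; so $\Phi$ defines $\indsubgr{K_3+e}$ and $D[K_3+e]\le3$. Combining the two bounds with the displayed chain gives $\alice{K_3+e}=W^*[K_3+e]=W[K_3+e]=D[K_3+e]=3$.

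The only step with genuine content is the converse direction of the last argument: it is exactly here that Olariu's structural description of paw-free graphs enters, in the packaged form ``in a component containing a triangle, non-adjacent vertices must be twins'', which is precisely what the last two conjuncts of $\Phi$ jointly rule out. Everything else is bookkeeping with the inequality chain and a routine check that the witnessing four vertices of an induced paw satisfy $\Phi$.
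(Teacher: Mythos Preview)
Your proof is correct and follows essentially the same route as the paper: the key step in both is Olariu's characterization, used to show that in a paw-free graph any triangle-containing component is complete multipartite, whence two non-adjacent vertices in it are twins. The only difference is presentational: the paper establishes $D[K_3+e]\le3$ by describing Spoiler's winning strategy in the $3$-round Ehrenfeucht--Fra{\"\i}ss{\'e} game (pebble a degree-$2$ triangle vertex; then either exploit triangle-freeness or pebble the pendant vertex and a distinguishing neighbour), whereas you verify directly that the depth-$3$ sentence $\Phi$ from the introduction defines $\indsubgr{K_3+e}$. These are dual formulations of the same argument --- Spoiler's three moves correspond exactly to the three nested quantifiers of $\Phi$ --- so neither buys anything the other does not. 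For the lower bound the paper simply exhibits $C_4$ as a paw-free model of $\ea2$, while you invoke Theorem~\ref{thm:chi} via $\chi(K_3+e)=3$; both are immediate.
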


\begin{proof}
We have $\alice{K_3+e}>2$ because, for example, $C_4$ satisfies the 2nd extension
axiom and does not contain $K_3+e$ as a subgraph.

In order to prove that $D[K_3+e]\le3$, we have to describe a winning strategy for Spoiler in
the 3-round Ehrenfeucht-Fra{\"\i}ss{\'e} game on graphs $G\sqsupset K_3+e$ and $H\not\sqsupset K_3+e$.
Let $v_1,v_2,v_3,v_4$ be vertices spanning a paw in $G$.
We suppose that $v_1$ and $v_2$ have degree 2, $v_3$ has degree 3, and $v_4$
has degree 1 in this subgraph.
In the first round Spoiler pebbles $v_1$.
Suppose that Duplicator responds with a vertex $u_1$ in a connected component $B$ of $H$.
By Lemma~\ref{lem:olariu}, $B$ is either $K_3$-free or a multipartite graph with at least three parts.
In the former case Spoiler wins by pebbling $v_2$ and $v_3$.
In the latter case Spoiler pebbles $v_4$ in the second round.
The distance between $v_1$ and $v_4$ in $G$ is 2.
If Duplicator responds in a connected component of $H$ other than $B$, then
he loses in the next round. Therefore, Duplicator is forced in the second round to pebble a vertex $u_2$
in the same part of $B$ that contains $u_1$.
In this case, Spoiler wins by pebbling the vertex $v_2$.
Indeed, this vertex is adjacent to $v_1$ and not adjacent to $v_4$, while
$u_1$ and $u_2$ have the same adjacency to any other vertex in~$H$.
\end{proof}

\subsection{The path subgraph ($P_4$)}\label{ss:path}

$F=P_4$ is a remarkable example showing that
the parameters $W^*[F]$ and $W[F]$ can have different values.
Specifically, we prove that $W^*[P_4]=3$ and $W[P_4]=4$.

\begin{thm}\label{thm:P_4-WGH}
$W^*[P_4]=\alice{P_4}=3$.  
\end{thm}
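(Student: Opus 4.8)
The plan is to establish two things: first the easy lower bound $\alice{P_4}\ge 3$ (hence $W^*[P_4]\ge\alice{P_4}\ge 3$ by Lemma~\ref{lem:alice}), and then the matching upper bound $W^*[P_4]\le 3$, which is the substantive part. For the lower bound I would exhibit a $P_4$-free graph satisfying the 2nd extension axiom: the graph $C_4$ works, since it has neither an isolated nor a universal vertex (so $C_4\models\ea 2$) and is $P_4$-free (being complete bipartite $K_{2,2}$, all its induced subgraphs on 3 or 4 vertices are $P_3$, $K_2+K_1$, or $C_4$ itself). Thus $\alice{P_4}>2$, and combined with $\alice{P_4}\le\ell=4$ we only need to rule out the value 4, which follows from the upper bound argument below.

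For $W^*[P_4]\le 3$, by the definition of $W^*[\cdot]$ and the pebble-game characterization (item 3 before the paragraph on containing induced subgraphs, together with Lemma~\ref{lem:WDGH}), it suffices to show that whenever $G\sqsupset P_4$ and $H\not\sqsupset P_4$, Spoiler wins the $k$-pebble Ehrenfeucht--Fra{\"\i}ss{\'e} game on $G$ and $H$ for $k=3$ (with no bound on the number of rounds). So I would fix an induced path $v_1v_2v_3v_4$ in $G$ and design a 3-pebble strategy for Spoiler exploiting the structure of $P_4$-free graphs. The key structural fact is that $P_4$-free graphs are exactly the \emph{cographs}, which are built from single vertices by disjoint union and join (complementation of disjoint union); equivalently, every cograph on $\ge 2$ vertices is either disconnected or the complement of a disconnected graph. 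I expect the strategy to run roughly as follows: Spoiler first pebbles $v_2$ and $v_3$ (the two ``middle'' vertices of the path, which are adjacent and each have a private neighbour $v_1$ resp.\ $v_4$ that is non-adjacent to the other). Duplicator must answer with an edge $u_2u_3$ in $H$. Now Spoiler frees one pebble and pebbles $v_1$, adjacent to $v_2$ but not $v_3$; Duplicator must produce $u_1$ adjacent to $u_2$ but not $u_3$. The situation is now ``locally'' a $P_3$ on the $H$-side, $u_1u_2u_3$, and Spoiler must continue to force a contradiction using only 3 pebbles; here is where the cograph structure is used: in a cograph, the induced $P_3$ on $u_1,u_2,u_3$ cannot be extended to an induced $P_4$, which (via the join/union decomposition) constrains how vertices outside $\{u_1,u_2,u_3\}$ attach, and Spoiler keeps pebbling the appropriate $v_i$ / freeing the ``oldest'' pebble, walking along a potential path in $G$ while forcing Duplicator into positions that are not realisable in a cograph.

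The main obstacle is precisely the fact that $W[P_4]=4$ while $W^*[P_4]=3$, which tells us that \emph{no bounded number of rounds} suffices with 3 pebbles, so Spoiler's 3-pebble strategy must be genuinely unbounded: he cannot simply win in a constant number of moves but instead drives the game toward an ever-shrinking invariant until Duplicator breaks. The delicate point is to identify a monovariant --- something like ``the length of the shortest induced path in $H$ containing the currently pebbled configuration,'' or a graph parameter of the cograph decomposition of $H$ that strictly decreases with each ``phase'' of Spoiler's play --- and to verify that Duplicator, constrained to a $P_4$-free $H$, must decrease it. I would organise the proof around such an invariant, proving a small lemma that in a cograph, if three pebbled vertices induce a $P_3$ then Spoiler can (re)use the third pebble to force Duplicator into a $P_3$ ``closer to the boundary'' of the cograph decomposition, and iterate; since the decomposition tree is finite, Duplicator eventually loses.
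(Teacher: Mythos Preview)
Your lower bound via $C_4$ matches the paper exactly. For the upper bound, you correctly see that an unbounded 3-pebble strategy is needed and that the cograph decomposition of $H$ should supply the monovariant, but your plan stops short of identifying that monovariant or specifying how Spoiler forces it to decrease; ``walking along the $P_4$'' and ``forcing a $P_3$ closer to the boundary'' are suggestive but do not yet constitute a strategy, and it is not clear that sliding a window of three consecutive path vertices makes any cograph parameter drop.

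The idea you are missing is that $P_4$ is self-complementary. The paper tracks only \emph{two} pebbled vertices $x,y\in U$ (the copy of $P_4$ in $G$), and its monovariant is the least depth $l$ at which their $H$-counterparts satisfy $\envi_l(x')\ne\envi_l(y')$; such an $l$ exists because $H$ is a cograph, so the decomposition eventually reaches singletons. If $l=0$, then $x',y'$ are in different components of $H$ while $x,y$ are at distance at most $3$ in $G[U]$, and Spoiler wins in two extra moves with the third pebble. If $l\ge 1$ and $\envi_{l-1}(x')=\envi_{l-1}(y')=E$, then $x',y'$ lie in different components of $\compl E$; because $\compl{G[U]}\cong P_4$, Spoiler runs the \emph{same} distance argument in the complement, still moving only inside $U$. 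Duplicator must either lose within two moves or step outside $V(E)$, which drops the separation depth to at most $l-1$. This is Lemma~\ref{lem:cographs}. The self-complementarity of $P_4$ is precisely what lets Spoiler handle the alternating union/complement layers of the cograph decomposition with a single uniform argument; without it there is no evident mechanism by which your proposed $P_3$-invariant strictly descends.
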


The proof of Theorem \ref{thm:P_4-WGH} is based on a well-known characterization
of the class of $P_4$-free graphs.
A graph is called a \emph{cograph} if 
it can be built from copies of the single-vertex graph $K_1$ by using
disjoint unions and complementations.
It is known \cite{CorneilLS81} that a graph is $P_4$-free if and only if it is a cograph.
For the proof of Theorem \ref{thm:P_4-WGH} we need the following definitions, that we borrow from \cite{PikhurkoSV07}.

We call $G$ \emph{complement-connected} if both $G$ and $\compl G$
are connected. An inclusion-maximal complement-connected induced subgraph
of $G$ will be called a \emph{complement-connected component} of $G$
or, for brevity, a \emph{cocomponent} of $G$. Cocomponents have no common 
vertices and their vertex sets form a partition of $V(G)$.
Note that $G$ is a cograph if and only if all cocomponents of $G$ are single-vertex graphs.

The {\em decomposition\/} of $G$, denoted by $\deco G$, is the set of
all connected components of $G$.
Furthermore, given $i\ge0$, we define the
{\em depth $i$ decomposition\/} $\dec iG$ of $G$ by
 $$
 \dec 0G=\deco G\quad \mbox{and}\quad
\dec{i+1}G=\bigcup_{E\in\dec iG}\deco\compl E.
 $$
Note that 
\begin{equation}
  \label{eq:Pi}
\Pi_i=\setdef{V(E)}{E\in\dec iG}  
\end{equation}
 is a partition of $V(G)$, and $\Pi_{i+1}$ refines $\Pi_i$.
Once the partition stabilizes, that is, $\Pi_{i+1}=\Pi_i$,
it coincides with the partition of $G$ into its cocomponents.
The {\em depth $i$ environment\/} of a vertex $v\in V(G)$,
denoted by $\env i{v}$, is the graph $E$ in $\dec iG$ containing~$v$. 

\begin{lem}\label{lem:cographs}
Suppose that a graph $G$ contains an induced copy of $P_4$
and let $U\subseteq V(G)$ be such that $G[U]\cong P_4$.
Consider the 3-pebble Ehrenfeucht-Fra{\"\i}ss{\'e} game on $G$ and another graph $H$.
Let $x,y\in V(G)$ and $x',y'\in V(H)$, and assume that
the pairs $x,x'$ and $y,y'$ were selected by the players in the same rounds.
If $x,y\in U$ and $\env l{x'}\ne\env l{y'}$, then Spoiler has a strategy
allowing him to win in this position playing all the time in $U$
and making no more than $2l+2$ moves.
\end{lem}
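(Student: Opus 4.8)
The plan is to prove the lemma by induction on $l$. We may assume that the current position is a partial isomorphism, for otherwise Spoiler has already won without a further move; in the strategy to be described Spoiler always pebbles a vertex of $U$, so that Duplicator is forced to answer in $H$. Write $P_4=abcd$ with edges $ab,bc,cd$ for the pattern $G[U]$, and let $\sim_G,\sim_H$ denote adjacency in the two graphs. For the base case $l=0$, the images $x'$ and $y'$ lie in distinct connected components of $H$; since $x\sim_G y$ would force $x'\sim_H y'$, we must have $x\not\sim_G y$, so $\{x,y\}$ is a non-edge of $P_4$. If $x$ and $y$ have a common neighbour in $U$ (the cases $\{a,c\}$ and $\{b,d\}$), Spoiler pebbles it, and any legal answer would be $H$-adjacent to both $x'$ and $y'$, hence would lie in both of their components --- impossible --- so Spoiler wins in one move. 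If $\{x,y\}=\{a,d\}$, say $x=a$, Spoiler first pebbles $b$ --- the answer then lands in $x'$'s component but outside $y'$'s --- and then moves the pebble from $x$ onto $c$ (keeping $y$ pebbled); a legal answer must now be $H$-adjacent both to the image of $b$ and to $y'$, which lie in different components --- impossible. Thus Spoiler wins within $2=2\cdot0+2$ moves.

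For the inductive step $l\ge1$: if $\env{l-1}{x'}\ne\env{l-1}{y'}$, the induction hypothesis finishes the game within $2(l-1)+2=2l$ moves, so assume $E:=\env{l-1}{x'}=\env{l-1}{y'}$, put $B:=V(E)$, and let $D_x\ni x'$ and $D_y\ni y'$ be the (distinct) connected components of $\compl E$ containing them. Since $E=H[B]$ when $l$ is odd and $E=\compl{H[B]}$ when $l$ is even, for $l$ odd the sets $D_x,D_y$ are components of $\compl{H[B]}$ --- so $x'\sim_H y'$ and, the position being a partial isomorphism, $\{x,y\}$ is an edge of $P_4$ --- while for $l$ even they are components of $H[B]$, so $x'\not\sim_H y'$ and $\{x,y\}$ is a non-edge of $P_4$. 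We shall use the following observation: for $w'\in B$, if $l$ is odd then $w'\not\sim_H x'$ implies $w'\in D_x$ (and likewise with $y$ in place of $x$), while if $l$ is even then $w'\sim_H x'$ implies $w'\in D_x$ (and likewise with $y$); indeed, a non-adjacency (resp.\ adjacency) keeps $w'$ in the same component of $\compl{H[B]}$ (resp.\ $H[B]$) as $x'$.

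Spoiler now plays in $U$ according to the structure of $P_4$. Suppose first that $\{x,y\}$ is one of $\{a,b\},\{c,d\}$ (for $l$ odd) or one of $\{a,c\},\{b,d\}$ (for $l$ even). Then some $z\in U\setminus\{x,y\}$ is $G$-non-adjacent to both $x$ and $y$ ($l$ odd), resp.\ $G$-adjacent to both ($l$ even); Spoiler pebbles $z$. By the observation a legal answer $z'\in B$ would lie in $D_x\cap D_y=\emptyset$, which is impossible; so Duplicator either has no legal answer and loses, or plays $z'\notin B$, in which case the pebbled pair $x,z$ (both in $U$) has $\env{l-1}{x'}\ne\env{l-1}{z'}$ and the induction hypothesis finishes within $2l$ further moves --- at most $2l+1$ in all. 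The remaining case is the ``prime'' pair of $P_4$: $\{x,y\}=\{b,c\}$ for $l$ odd and $\{x,y\}=\{a,d\}$ for $l$ even, a pair having neither a common neighbour nor a common non-neighbour in $U$; here Spoiler uses two moves. Let $z_1,z_2$ be the two vertices of $U\setminus\{x,y\}$, with $z_1$ chosen $G$-adjacent to $x$ and $G$-non-adjacent to $y$ (for $l$ odd), resp.\ $G$-adjacent to $y$ and $G$-non-adjacent to $x$ (for $l$ even). Spoiler pebbles $z_1$; if the answer $z_1'$ lies in $B$ then the observation forces $z_1'\in D_y$. Spoiler then moves the pebble from $y$ onto $z_2$ --- harmless, since the component $D_y$ is now witnessed by $z_1'$. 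Here $z_2$ is $G$-non-adjacent to $x$ and to $z_1$ for $l$ odd (the second because $\{z_1,z_2\}=\{a,d\}$ is a non-edge of $P_4$), resp.\ $G$-adjacent to $x$ and to $z_1$ for $l$ even (because $\{z_1,z_2\}=\{b,c\}$ is an edge). A legal answer $z_2'\in B$ would lie in $D_x$ by the observation; but then, lying in a component of $\compl{H[B]}$ different from that of $z_1'\in D_y$ (for $l$ odd), it would be forced $H$-adjacent to $z_1'$, contradicting Spoiler's choice of $z_2$ --- and symmetrically for $l$ even. Hence $z_2'\notin B$, the pebbled pair $x,z_2$ (both in $U$) has $\env{l-1}{x'}\ne\env{l-1}{z_2'}$, and the induction hypothesis yields a win within $2l$ further moves, $2l+2$ in all.

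The main difficulty I expect is the bookkeeping of this case analysis: tracking at each parity whether the decisive refinement of the block $B$ goes by the components of $H[B]$ or of $\compl{H[B]}$; checking, in every branch, that the adjacency the partial isomorphism imposes on Duplicator's answer is exactly the one that confines that answer to $D_x$ or to $D_y$ (so that either the two-component intersection is empty or the adjacency to $z_1'$ is forced the wrong way); and verifying that the two-move manoeuvre for the prime pair can be carried out with only three pebbles --- the point being that once the first move has pinned down the component $D_y$, the pebble on $y$ is no longer needed and may safely be reused for $z_2$.
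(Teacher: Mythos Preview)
Your argument is correct in outline and follows the same inductive skeleton as the paper: induction on $l$, the same base case, and in the inductive step the dichotomy ``Duplicator stays inside $V(E)$ and loses within two moves'' versus ``Duplicator leaves $V(E)$ and the hypothesis at level $l-1$ applies''. There is one small omission: in the prime-pair subcase you only analyse what happens when Duplicator's first reply $z_1'$ lands in $B$, and your subsequent contradiction for $z_2'\in B$ uses $z_1'\in D_y$. You should also record the branch $z_1'\notin B$: then already $\env{l-1}{x'}\ne\env{l-1}{z_1'}$, and the induction hypothesis applied to the pair $(x,z_1)$ finishes within $2l$ further moves, giving $2l+1$ in total. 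With that one line added, the argument is complete and the move count is tight.

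Where you differ from the paper is in the execution of the ``stay inside $E$'' part. You split on the parity of $l$ to decide whether $\compl E$ equals $H[B]$ or $\compl{H[B]}$ and then do a case analysis on the position of $\{x,y\}$ in $P_4$. The paper avoids all of this by a single observation: since $P_4$ is self-complementary, $\compl{G[U]}\cong P_4$, and the partial-isomorphism condition is invariant under simultaneous complementation of both sides. Hence the base-case argument (two vertices of an induced $P_4$ whose counterparts lie in different connected components) applies verbatim inside $E$ with respect to $\compl E$, regardless of parity. Your explicit case analysis is perfectly valid, but the self-complementarity trick collapses your six subcases into a single sentence and makes the parity bookkeeping that you flag as the ``main difficulty'' disappear entirely.
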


\begin{proof}
We use induction on $l$. In the base case of $l=0$,
the vertices $x'$ and $y'$ lie in different connected components of $H$,
while the distance between $x$ and $y$ in $G$ is at most 3.
Therefore, Spoiler is able to win with one extra pebble in 2 moves.

Let $l\ge1$. Suppose that $\env{l-1}{x'}=\env{l-1}{y'}=E$ (for else Spoiler wins by the
induction assumption). Note that $\env l{x'}$ and $\env l{y'}$ are connected
components of $\compl E$. Since $P_4$ is self-complementary, $\compl{G[U]}\cong P_4$.
Therefore, if Duplicator makes further moves only in $V(E)$, Spoiler will win in
at most 2 next moves. Once Duplicator makes one of these moves outside $V(E)$,
this creates a position with two vertices $\tilde x$ and $\tilde y$
pebbled by Spoiler in $U$ such that $\env{l-1}{\tilde x'}\ne\env{l-1}{\tilde y'}$
for the corresponding vertices $\tilde x'$ and $\tilde y'$ pebbled by
Duplicator in $H$. The induction assumption applies.
\end{proof}

\begin{proof}[Proof of Theorem \ref{thm:P_4-WGH}]
Consider graphs $G\sqsupset P_4$ and $H\not\sqsupset P_4$.
Since $H$ is a cograph, $\dec iH$ for some $i$ consists of single-vertex graphs.
By Lemma \ref{lem:cographs}, this readily implies that $W(G,H)\le3$.
Indeed, when Spoiler pebbles two vertices on an induced $P_4$ in $G$,
then whatever Duplicator responds, this creates a position as in Lemma~\ref{lem:cographs}.
Thus, $W^*[P_4]\le3$.

The lower bound $\alice{P_4}>2$ is certified, for example, by~$C_4$.
\end{proof}

\begin{rem}
Whereas $\alice{P_4}\le W^*[P_4]=3$, the upper bound $\alice{P_4}\le3$
follows also from a more direct argument.
The recursive definition of a cograph implies 
that every $P_4$-free graph $H$
contains twins. A pair of twins in $H$ prevents $H\models \ea 3$.
\end{rem}

Theorem \ref{thm:P_4-WGH} implies that the class of graphs containing an induced $P_4$
is definable in the infinitary logic $L^3_{\infty\omega}$. It turns out that this
class is not definable in 3-variable first-order logic.

\begin{thm}\label{thm:P_4-W}
$W[P_4]=4$.  
\end{thm}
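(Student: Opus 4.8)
The plan is to use Lemma~\ref{lem:WDGH}: to show $W[P_4]>3$ it suffices to exhibit a sequence of graph pairs $(G_i,H_i)$ with $G_i\sqsupset P_4$ and $H_i\not\sqsupset P_4$ such that $D^3(G_i,H_i)\to\infty$, i.e., Duplicator survives arbitrarily many rounds of the $3$-pebble Ehrenfeucht--Fra{\"\i}ss{\'e} game even though Spoiler has a winning strategy with more pebbles. The natural choice for $H_i$ is a large cograph (so $P_4\not\sqsubset H_i$), and for $G_i$ a graph that is ``locally'' indistinguishable from $H_i$ in the $3$-pebble game but contains an induced $P_4$. A convenient construction is to take $H_i$ to be a deep balanced ``cotree'' graph --- built by alternately taking disjoint unions and joins (complements of disjoint unions) of smaller copies, so that $\dec j{H_i}$ stabilizes only after $\approx i$ levels --- and to let $G_i$ be the graph obtained from the analogous cotree but with one leaf-level gadget replaced by an induced $P_4$ (equivalently, attach a pendant $P_4$ deep inside). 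Because $P_4$ sits at depth $\Theta(i)$ in the decomposition hierarchy, Spoiler cannot ``reach'' it with only $3$ pebbles in fewer than $\Omega(i)$ rounds.

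The core of the argument is a Duplicator strategy for the $3$-pebble game on $G_i$ and $H_i$ lasting $r_i\to\infty$ rounds. I would define it level-by-level along the decomposition hierarchy, maintaining an invariant of the following shape: at every point, the (at most three) pebbled pairs are ``coherent'' with respect to the partitions $\Pi_0\supseteq\Pi_1\supseteq\dots$ up to some depth $d$ that decreases by at most $1$ each round and starts at $\Theta(i)$; coherent means that for each depth $j\le d$, two pebbled vertices of $G_i$ lie in the same block of $\Pi_j$ iff their partners in $H_i$ do, and moreover inside each relevant block the pebbled configurations extend to a partial isomorphism respecting adjacency. Since adjacency between two vertices in a cotree graph is determined by the union/join label at the least common ancestor of their blocks, coherence up to a sufficiently large depth forces the pebbling to be a partial isomorphism; and because $G_i$ and $H_i$ look identical except inside one block at depth $\Theta(i)$, Duplicator can always maintain coherence at one less depth than before, responding in the structurally matching block (using standard pigeonhole: with only $3$ pebbles, at most two are ``committed'' to a block, and there are many isomorphic sibling blocks to hide in).

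The main obstacle I anticipate is making the Duplicator invariant precise and verifying that it is genuinely maintainable for $\Omega(i)$ rounds --- in particular, handling the round in which Spoiler picks up a pebble and places it in a fresh block, where Duplicator must decide whether to ``follow'' Spoiler toward the anomalous $P_4$-block or to stay in a safe isomorphic region, and showing that the safe region always has enough room (enough pairwise-isomorphic branches that are pebble-free) so that she never gets cornered before round $r_i$. This is the usual hybrid/back-and-forth bookkeeping for hierarchical EF games; the self-complementarity of $P_4$ (already exploited in Lemma~\ref{lem:cographs}) is what lets the same gadget work symmetrically under the complementation steps of the cotree. Once the $\Omega(i)$-round Duplicator strategy is established, $D^3(G_i,H_i)\to\infty$ follows, Lemma~\ref{lem:WDGH} gives $W[P_4]\ge4$, and the matching upper bound $W[P_4]\le D[P_4]\le 4$ is immediate from \refeq{WWD}, completing the proof that $W[P_4]=4$.
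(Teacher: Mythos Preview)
Your high-level plan is exactly the paper's: invoke Lemma~\ref{lem:WDGH}, take $H_i$ to be the balanced cotree $(K_1)^i$ so that $H_i$ is $P_4$-free, and exhibit companions $G_i\sqsupset P_4$ with $D^3(G_i,H_i)\to\infty$. The upper bound $W[P_4]\le 4$ is also handled the same way.

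Where you diverge is in the construction of $G_i$ and the Duplicator strategy. You propose a local surgery on $H_i$ (replace one deep block by a $P_4$); the paper instead sets $G_i=H_i\cdot(A\cdot H_i)$, a \emph{uniform} lexicographic blow-up where $A$ is a fixed graph satisfying $\ea 3$. This choice is not incidental. It makes every cocomponent of $G_i$ a copy of the complement-connected graph $A\cdot H_i$, so $G_i$ inherits the same cotree hierarchy as $H_i$; the paper then equips both graphs with the ultrametric $d$ (Claims~\ref{cl:auto}--\ref{cl:d<d}) and proves that Duplicator can maintain an ``$s$-safe'' configuration whose parameter $s$ drops by at most $2$ per round (Claim~\ref{cl:ssafe}). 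The $3$-extension property of $A$ is used repeatedly (e.g.\ Subcases~C-1.3$'$, C-2.3$'$) to give Duplicator a third vertex with any prescribed adjacency pattern to two pebbled vertices---this is precisely the freedom your local-modification $G_i$ lacks.

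Your sketch hides the real difficulty. The slogan ``coherence depth decreases by at most~$1$ each round'' is not obviously maintainable: once Spoiler pebbles three vertices of your planted $P_4$ forming $K_2+K_1$, Duplicator cannot match inside a depth-$d$ block of $H_i$ that is already down to four vertices (every $3$-subset of $C_4$ induces $P_3$), so she must jump to a shallower block; conversely, Spoiler can alternate between patterns that push her distances up (common neighbour of a non-adjacent pair forces $d$ to grow) and down, and controlling this drift with only the isosceles ultrametric structure of $H_i$ requires exactly the careful case analysis the paper carries out. The ``pigeonhole among isomorphic sibling blocks'' intuition does not apply here, since Spoiler never needs to leave your $4$-vertex gadget. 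Your approach may well be salvageable, but what remains is essentially the whole proof; the paper's uniform construction with the extension-axiom graph $A$ is what turns the sketch into an argument.
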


Our proof of Theorem \ref{thm:P_4-W} is based on Lemma \ref{lem:WDGH}.
It suffices to exhibit a sequence of graph pairs $G_i,H_i$ such that $G_i$
contains an induced copy of $P_4$, $H_i$ does not, and $D^3(G_i,H_i)\to\infty$ as $i$ increases.

Given a graph $X$, we define its $i$th power $X^i$ by $X^1=X$ and $X^{i+1}=\compl{X^i+X^i}$.
Now, let $H_i=(K_1)^i$. This is a cograph and, therefore, $P_4\not\sqsubset H_i$ (which is also easy to see
directly, using induction and the fact that $P_4$ is self-complementary).

In order to construct $G_i$, we use the lexicographic product of graphs; see Section \ref{s:prel}.
Fix a graph $A$ satisfying the 3rd extension axiom and containing $P_4$ as an
induced subgraph (a large enough random graph has both of these properties
with high probability). Now, let $G_i=H_i\cdot(A\cdot H_i)$.
Obviously, $P_4\sqsubset G_i$. Theorem \ref{thm:P_4-W} immediately follows by Lemma \ref{lem:WDGH}
from the following estimate.

\begin{lem}\label{lem:P_4-D^3}
$D^3(G_{4m+3},H_{4m+3}) \ge m+2$.
\end{lem}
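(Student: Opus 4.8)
The plan is to prove the lower bound $D^3(G_i,H_i)\ge m+2$ for $i=4m+3$ by exhibiting a winning strategy for Duplicator in the $(m+1)$-round $3$-pebble Ehrenfeucht--Fra{\"\i}ss{\'e} game on $G_i$ and $H_i$. By item~1 in the list of game characterizations in Section~\ref{s:prel}, a Duplicator win in the $(m+1)$-round $3$-pebble game shows $D^3(G_i,H_i)>m+1$, i.e. $D^3(G_i,H_i)\ge m+2$. The key structural idea is that both $G_i$ and $H_i$ are built up by the same recursive recipe (alternating disjoint unions and complementations), so they carry a natural nested tree of partitions: for $H_i=(K_1)^i$ the partition $\Pi_j$ from \refeq{Pi} has blocks of size roughly $2^{i-j}$, while for $G_i=H_i\cdot(A\cdot H_i)$ the analogous depth-$j$ environments are again built from powers of $K_1$ but with the graph $A$ inserted at one intermediate level. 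Since $A$ satisfies the $3$rd extension axiom, inside that inserted level Duplicator can answer any single Spoiler move while keeping two pebbles free, and this is exactly where the ``three variables are not enough'' phenomenon comes from.

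Concretely, I would set up the invariant that after round $r$ (with $r\le m+1$, using the three pebbles $p_1,p_2,p_3$), for each currently-pebbled pebble $p_t$ sitting on $a_t\in G_i$ and $b_t\in H_i$, the pair $(a_t,b_t)$ ``agrees down to depth $d_r$'' for some depth $d_r$ that decreases by at most $2$ with each round and stays nonnegative as long as $r\le m+1$ — this uses that $i=4m+3$ gives roughly $2$ levels of ``slack'' per round. More precisely: the pebbled vertices of $G_i$ lie in environments $\env{d}{a_t}$ and the pebbled vertices of $H_i$ lie in environments $\env{d}{b_t}$ such that (i) the induced subgraph on the pebbled vertices is the same in $G_i$ and $H_i$ (partial isomorphism), and (ii) for every pair $t,t'$, $\env{d}{a_t}=\env{d}{a_{t'}}$ iff $\env{d}{b_t}=\env{d}{b_{t'}}$, i.e. the ``coincidence pattern'' of deep environments matches. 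When Spoiler moves pebble $p_t$ to a new vertex $a$ in $G_i$ (the $H_i$ side is symmetric, essentially by the same recursive structure since $G_i$ is also assembled from unions and complements), Duplicator walks down the partition tree: at each level she either copies Spoiler's branching choice (when a block splits into a disjoint union or its complement splits) or, when she reaches the level where the graph $A$ was substituted into $G_i$, she invokes the $3$rd extension axiom of $A$ to realise the required adjacency pattern to the at most two other pebbled vertices while still having room to continue descending; symmetrically, a Spoiler move in $H_i$ is answered by descending the parallel tree in $G_i$, and here no extension axiom is needed because every block of $H_i$ at a not-yet-exhausted depth is itself a nontrivial power $(K_1)^{j}$, which also satisfies enough of the extension property when $j\ge 2$. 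The budget $i=4m+3$ is chosen so that after $m+1$ rounds the accumulated depth loss of $2$ per round still leaves the relevant environments nontrivial, so a partial isomorphism is maintained throughout and Duplicator survives all $m+1$ rounds.

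The main obstacle I expect is bookkeeping the interaction between the two kinds of levels in $G_i$: the ``$H_i$-type'' levels (pure powers of $K_1$, where Duplicator has essentially no freedom and must mirror Spoiler exactly) and the single ``$A$-type'' level (where she uses the extension axiom of $A$). One has to verify that when Spoiler plays in $H_i$, the environment-coincidence pattern he can force is always realisable in $G_i$ — that is, $G_i$ never runs out of fresh branches faster than $H_i$ does — and conversely that Spoiler cannot exploit the asymmetry (the extra $A$ inside $G_i$) to separate the graphs, which is true precisely because $A$ is homogeneous enough (satisfies $\mathit{EA}_3$) that its presence is invisible to a $3$-pebble game of bounded length. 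Making the ``depth $d_r$ decreases by at most $2$ per round'' accounting precise — and checking that the recursive definition $X^{i+1}=\compl{X^i+X^i}$ together with the lexicographic-product structure of $G_i$ really gives parallel partition trees of matched heights — is the technical heart, but it is a finite induction on the round number with the invariant above, and at each step the verification reduces to the standard one-round analysis of disjoint union, complement, and the extension axiom.
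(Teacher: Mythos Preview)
Your high-level plan is on the same track as the paper's proof --- a Duplicator invariant that loses about two ``levels'' per round and exploits $\ea3$ for $A$ --- but two pieces of your setup are off and would prevent the induction from going through as you describe it.

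First, the structure of $G_i$ is not ``powers of $K_1$ with $A$ inserted at one intermediate level.'' For odd $i$ one has $G_i=(A\cdot H_i)^i$, so $G_i$ has \emph{exactly the same} recursive tree of environments as $H_i=(K_1)^i$, but with each leaf (cocomponent) being $A\cdot H_i$ instead of $K_1$. In particular, $A$ sits at the bottom, and inside each cocomponent there is a further copy of the \emph{entire} $H_i$-metric. This is why the paper works with \emph{two} distance functions on $V(G)$: the outer distance $D$ coming from the first $H_i$-factor and the inner distance $d$ coming from the last $H_i$-factor. Your single-tree ``coincidence pattern of environments'' invariant does not see this two-scale structure, and without it you cannot handle the dangerous configurations: in $H$, two pebbles at $d$-distance $1$ are twins (Spoiler kills you by a vertex adjacent to exactly one of the $G$-vertices), while two pebbles at $d$-distance $t$ have no common non-neighbour. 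The paper's $s$-safe condition is precisely a \emph{two-sided} bound --- small distances are matched exactly, large $D$-distances are matched exactly, and in the middle both sides stay safely away from the extremes --- and losing only this middle band by $2$ per round is what makes the induction work. Your one-sided ``depth $d_r$ stays nonnegative'' is not enough.

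Second, the claim that ``when Spoiler moves in $H_i$ \ldots\ no extension axiom is needed'' is false. When Spoiler pebbles $z'\in H$ with, say, $z'$ adjacent to both $x'$ and $y'$ (or to neither), and the two $G$-pebbles $x,y$ lie in the same $H$-copy inside a cocomponent of $G$ with $d(x,y)$ large, then $H$-levels alone cannot supply a common neighbour/non-neighbour in that copy; Duplicator must escape into a different $H$-copy of the same cocomponent, and choosing the right adjacencies there is exactly where $A\models\ea3$ is used (cf.\ the paper's Subcases C-1.3$'$, C-2.3$'$). So the extension axiom is needed on \emph{both} sides of the game, not only when Spoiler plays in~$G$.
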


The proof of Lemma \ref{lem:P_4-D^3} takes the rest of this subsection.

\subsubsection*{Proof of Lemma \protect\ref{lem:P_4-D^3}}

We have to describe a strategy of Duplicator in the 3-pebble game on $G=G_{4m+3}$ and $H=H_{4m+3}$
allowing her to resist during $m+1$ rounds.
This will be convenient to do in terms of a nonstandard metric on $H$, which we
introduce now.

Let $t=4m+2$; thus, $H=H_{t+1}$. 
Consider the partitions of $\Pi_0,\ldots,\Pi_t$ of $V(H)$ defined by \refeq{Pi}.
Note that $\Pi_0$ is the trivial partition
with the single partition element $V(H)$, and $\Pi_t$ is the complete
partition of $V(H)$ into singletons.
Let $x$ and $y$ be vertices of $H$. Define $\bar d(x,y)$ to be the maximum $k$ such that
$x$ and $y$ belong to the same element of $\Pi_k$ or, equivalently, $\env kx=\env ky$.
Furthermore, we set $d(x,y)=t-\bar d(x,y)$. Note that $0\le d(x,y)\le t$.
As easily seen, $d(x,y)=0$ iff $x=y$, and $d(x,y)=1$ iff $x$ and $y$ are twins.

Note that adjacency between two vertices $x$ and $y$ is completely determined by $d(x,y)$,
and it changes when $d(x,y)$ increases or decreases by~$1$.

\begin{claim}\label{cl:auto}
  \begin{enumerate}[1.]
  \item 
$H$ is vertex-transitive.
\item 
Every automorphism $\alpha$ of $H$ preserves the function $d$, i.e., $d(\alpha(x),\alpha(y))=d(x,y)$ for any $x,y\in V(H)$.
\item 
Let $x,y,x',y'\in V(H)$. There exists an automorphism $\alpha$ of $H$ such that $\alpha(x)=x'$ and $\alpha(y)=y'$
if and only if $d(x,y)=d(x',y')$.
  \end{enumerate}
\end{claim}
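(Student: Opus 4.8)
The plan is to make the recursive structure of $H=H_{t+1}$ completely explicit and then read all three parts off it. Unwinding the definition $H_{i+1}=\compl{H_i+H_i}$ together with the recursion $\dec{i+1}G=\bigcup_{E\in\dec iG}\deco\compl E$, I would first prove by induction on $i$ (for $0\le i\le t$) that $\dec iH$ consists of exactly $2^i$ graphs, each isomorphic to $H_{t+1-i}$, whose vertex sets are the blocks of the partition $\Pi_i$ from \refeq{Pi}, and that $\Pi_{i+1}$ refines $\Pi_i$ by splitting every block into two. In the inductive step one uses $\compl{H_j}=H_{j-1}+H_{j-1}$ for every $j\ge2$, so that $\deco\compl E$ has exactly two components whenever $E\cong H_j$ with $j\ge2$; the one point needing care is that the alternating complementations in the definition of $\dec{i+1}{}$ interact correctly with the complementations built into $H_{i+1}$, and this bookkeeping is the only genuinely fiddly part of the argument. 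It follows that the vertices of $H$ are in a natural bijection with the leaves of the complete rooted binary tree $T_t$ of depth $t$, that $\bar d(x,y)$ — hence $d(x,y)=t-\bar d(x,y)$ — equals the depth of the least common ancestor $\mathrm{LCA}(x,y)$ in $T_t$, and, as the same picture makes transparent (cf.\ the observation recorded just before the claim), that adjacency in $H$ is a function of $d(x,y)$ alone (indeed only of its parity).

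For part~(1) and the ``if'' direction of part~(3) I would work with tree automorphisms. Since adjacency in $H$ depends only on the least-common-ancestor depth, and a rooted-tree automorphism of $T_t$ preserves all least-common-ancestor relations (in particular their depths), every automorphism of $T_t$ induces an automorphism of the graph $H$ via the above bijection; alternatively, the same automorphisms are produced directly from the inclusion $\mathrm{Aut}(H_{i+1})\supseteq\mathrm{Aut}(H_i)\wr S_2$, using $\mathrm{Aut}(G)=\mathrm{Aut}(\compl G)$. Now $\mathrm{Aut}(T_t)$ is transitive on leaves — to send $x$ to $x'$, descend from the root and, at each level where $x$ and $x'$ currently lie in different child subtrees, swap those two subtrees — which is part~(1). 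For the ``if'' direction of~(3), suppose $d(x,y)=d(x',y')$, so $a=\mathrm{LCA}(x,y)$ and $a'=\mathrm{LCA}(x',y')$ have the same depth $k$; pick a tree automorphism $\beta$ with $\beta(a)=a'$ (transitivity on level-$k$ nodes), and, composing if needed with the swap of the two subtrees below $a'$, arrange that $\beta$ carries the child subtree of $a$ containing $x$ onto the child subtree of $a'$ containing $x'$ and hence the $y$-side subtree onto the $y'$-side subtree; finally correct $\beta$ inside those two depth-$(t-k-1)$ child subtrees by leaf-automorphisms taking $\beta(x)\mapsto x'$ and $\beta(y)\mapsto y'$, extended by the identity. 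The resulting tree automorphism $\alpha$ satisfies $\alpha(x)=x'$ and $\alpha(y)=y'$, and by the previous remark it is an automorphism of $H$.

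Part~(2), which also yields the ``only if'' direction of~(3), needs none of the tree picture. Since $\deco(\cdot)$ and complementation are graph-isomorphism invariants, a straightforward induction on $k$ shows that every $\alpha\in\mathrm{Aut}(H)$ permutes the members of $\dec kH$, and hence preserves the partition $\Pi_k$; equivalently $\env k{\alpha(x)}=\alpha(\env kx)$ for every vertex $x$. Therefore $\env k{\alpha(x)}=\env k{\alpha(y)}$ holds exactly when $\env kx=\env ky$, so $\bar d(\alpha(x),\alpha(y))=\bar d(x,y)$ and thus $d(\alpha(x),\alpha(y))=d(x,y)$, which is part~(2) and is precisely the ``only if'' half of~(3). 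Combined with the second paragraph this completes part~(3). The main obstacle is the structural induction of the first paragraph; once it is in hand, parts~(1)--(3) follow by routine reasoning, the only construction required being the explicit automorphism of the second paragraph.
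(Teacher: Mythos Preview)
Your argument is correct. The structural induction you outline in the first paragraph goes through exactly as you say (the only input needed is $\compl{H_j}=H_{j-1}+H_{j-1}$, so the ``fiddly bookkeeping'' is in fact harmless), and the tree picture you extract is accurate: vertices of $H$ are leaves of a depth-$t$ binary tree, $\bar d$ is the LCA depth, and adjacency depends only on $d$. Your construction of the required automorphisms via $\mathrm{Aut}(T_t)$ is sound, as is your abstract argument for part~(2) using the isomorphism-invariance of $\deco$ and complementation.

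The paper's own proof reaches the same conclusions by a bare induction on the index $i$ in $H_i$, never naming the tree or the wreath-product group. For part~(1) it just observes that $\compl{H_i}$ has an automorphism swapping its two components and inducts; for part~(2) it notes that any automorphism of $H_i$ either fixes or swaps the two $H_{i-1}$-halves and inducts into a half; for part~(3) it does a short case split on whether $x,y$ (and $x',y'$) lie in the same or different halves, again inducting. Your approach and the paper's are really the same induction seen from two angles: you front-load the work by making the tree structure explicit once and then reading off all three parts, while the paper keeps the structure implicit and re-derives the needed piece of it in each part. Your version is a bit longer to set up but arguably more transparent, and it makes clear that the automorphisms you use all lie in the iterated wreath product $S_2\wr\cdots\wr S_2\le\mathrm{Aut}(H)$; the paper's version is terser but leaves that group unnamed. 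Either way the content is the same.
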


\begin{subproof}
The claim holds true for every $H=H_i$ and follows by a simple induction on $i$.
Part 1 follows from an observation that $H_i=\compl{H_{i-1}+H_{i-1}}$ has an automorphism
transposing the two connected components of~$\compl{H_i}$.

Let $d_i$ denote the distance function on $H_i$ defined similarly to the
distance function $d$ on $H$. Thus, $d=d_{4m+3}$ and $d_i$ takes on the values
from $0$ to $i-1$. Moreover, the restriction of $d_i$ to a connected component
of $\compl{H_i}$ coincides with $d_{i-1}$ on this component. The parameterized
version of Part 2 claims that $d_i$ is preserved by every automorphism $\alpha$
of $H_i$. Note that $\alpha$ either maps each of the two connected components
of $\compl{H_i}$ onto itself or swaps them. If $x$ and $y$ are in different 
connected components of $\compl{H_i}$, the same holds true for $\alpha(x)$ and 
$\alpha(y)$. In this case, $d_i(\alpha(x),\alpha(y))=i-1=d_i(x,y)$.
If $x$ and $y$ are in the same connected component of $\compl{H_i}$,
we have $d_i(\alpha(x),\alpha(y))=d_{i-1}(\alpha(x),\alpha(y))=d_{i-1}(x,y)=d_i(x,y)$ 
by the induction assumption.

In the parameterized version of Part 3, we have to prove that an automorphism of $H_i$
taking $x$ to $x'$ and $y$ to $y'$ exists if and only if $d_i(x,y)=d_i(x',y')$.
Consider two cases.
First, suppose that $x$ and $y$ are in different $H_{i-1}$-components of $\compl{H_i}$,
that is, $d_i(x,y)=i-1$.
If $d_i(x,y)=d_i(x',y')$, the vertices $x'$ and $y'$ are also in different $H_{i-1}$-components,
and we use an automorphism transposing these components and the vertex-transitivity
of $H_{i-1}$ established in Part 1. If $d_i(x,y)\ne d_i(x',y')$, then such an automorphism
does not exist by Part 2.
The case that $x'$ and $y'$ are in different $H_{i-1}$-components of $\compl{H_i}$ is symmetric.
There remain the cases that all four $x,y,x',y'$ are in the same $H_{i-1}$-component of $\compl{H_i}$
or that $x,y$ are in one $H_{i-1}$-component of $\compl{H_i}$ and $x',y'$ are in the other component.
They are covered by the induction assumption.
\end{subproof}

\begin{claim}\label{cl:ddd}
$d(x,y)=d(x,z)=d(y,z)$ is impossible for any pairwise distinct $x,y,z\in V(H)$.
\end{claim}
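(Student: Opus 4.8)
The plan is to exploit the laminar (nested) structure of the partitions $\Pi_0,\dots,\Pi_t$ of $V(H)$, together with the fact that each refinement step is \emph{binary}. First I would record the structural observation that for every $i$ with $0\le i\le t$, each $E\in\dec iH$ is isomorphic to $H_{t+1-i}$. This follows by a straightforward induction on $i$: the base case $i=0$ holds because $\dec 0H=\deco H=\{H\}$ and $H=H_{t+1}$; for the inductive step, $\dec{i+1}H$ is obtained by replacing each $E\in\dec iH$ with the connected components of $\compl E$, and when $i<t$ we have $\compl E\cong\compl{H_{t+1-i}}=H_{t-i}+H_{t-i}$, a disjoint union of exactly two connected graphs (each a copy of $H_{t-i}=H_{t+1-(i+1)}$). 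Reading this off at the level of partitions: for every $i<t$, each block of $\Pi_i$ is the disjoint union of \emph{exactly two} blocks of $\Pi_{i+1}$.

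With this in hand the claim is a pigeonhole argument. Suppose towards a contradiction that $d(x,y)=d(x,z)=d(y,z)$ for pairwise distinct $x,y,z\in V(H)$; equivalently, $\bar d(x,y)=\bar d(x,z)=\bar d(y,z)=c$ for some $c$. Then $x$ and $y$ lie in a common block of $\Pi_c$, and so do $x$ and $z$; since the block of $\Pi_c$ containing $x$ is unique, all three of $x,y,z$ lie in one block $B$ of $\Pi_c$. Because $x,y,z$ are distinct, $B$ is not a singleton, so $c<t$, and by the observation above $B$ is the disjoint union of exactly two blocks $B_0,B_1$ of $\Pi_{c+1}$. By pigeonhole, two of the three vertices — say $x$ and $y$ — fall into the same $B_j$, and then $\bar d(x,y)\ge c+1$, contradicting $\bar d(x,y)=c$.

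I do not expect a genuine obstacle; the argument is bookkeeping with nested partitions. The one place where something real is used is the binariness of the refinement, which rests on the specific recursion $H_{i+1}=\compl{H_i+H_i}$ (a union of precisely \emph{two} copies): with three or more copies the three vertices could be separated into three distinct sub-blocks and an ``equilateral'' triple would become possible. In fact the same proof yields the slightly stronger statement that among $d(x,y),d(x,z),d(y,z)$ the two largest values always coincide — the ultrametric-type behaviour that the subsequent Duplicator strategy will use — but for the present claim the weaker non-equality above is all that is needed.
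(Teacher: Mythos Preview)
Your proof is correct and follows essentially the same route as the paper: both arguments observe that all three vertices lie in a common block of $\Pi_c$ (where $c=\bar d$ is the common value), and then invoke the fact that this block splits into exactly two blocks of $\Pi_{c+1}$ to obtain a contradiction. You are more explicit than the paper in justifying the binary-refinement property via the induction on the recursion $H_{i+1}=\compl{H_i+H_i}$, whereas the paper simply asserts that ``a depth-$k$ environment has only two depth-$(k+1)$ subenvironments.''
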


\begin{subproof}
These equalities would imply that $\bar d(x,y)=\bar d(x,z)=\bar d(y,z)=k$
for some $k<t$.
This would mean that $\env{k+1}x$, $\env{k+1}y$, and $\env{k+1}z$ are
pairwise disjoint while $\env{k}x=\env{k}y=\env{k}z$. This is a contradiction
because a depth-$k$ environment has only two depth-$(k+1)$ subenvironments.
\end{subproof}

\begin{claim}\label{cl:d<d}
If $d(x,y)<d(x,z)$, then $d(y,z)=d(x,z)$.
\end{claim}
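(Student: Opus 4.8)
The plan is to observe that $d$ is an \emph{ultrametric} on $V(H)$ and that Claim~C is the classical ``every triangle is isosceles'' property of ultrametrics. The single fact that makes this work has already been recorded just after \refeq{Pi}: the partitions are nested, $\Pi_{k+1}$ refining $\Pi_k$. Consequently the relation $\env kx=\env ky$ holds for every $k\le\bar d(x,y)$ and fails for every $k>\bar d(x,y)$, since the set $\setdef{k}{\env kx=\env ky}$ is downward closed.

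First I would establish the strong triangle inequality $d(x,z)\le\max(d(x,y),d(y,z))$, equivalently $\bar d(x,z)\ge\min(\bar d(x,y),\bar d(y,z))$. This is immediate: with $k=\min(\bar d(x,y),\bar d(y,z))$, nesting gives $\env kx=\env ky$ and $\env ky=\env kz$, hence $\env kx=\env kz$, so $\bar d(x,z)\ge k$. Given this, Claim~C follows in two lines. From $d(x,y)<d(x,z)$ and the strong triangle inequality applied to $y,x,z$ we get $d(y,z)\le\max(d(y,x),d(x,z))=d(x,z)$. For the reverse inequality, apply the strong triangle inequality to $x,y,z$: $d(x,z)\le\max(d(x,y),d(y,z))$; since $d(x,y)<d(x,z)$, the maximum cannot be $d(x,y)$, so it equals $d(y,z)$ and $d(x,z)\le d(y,z)$. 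Combining the two inequalities, $d(y,z)=d(x,z)$.

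I do not expect a genuine obstacle here; the statement is purely about the combinatorics of the cotree-like decomposition and is equivalent to refinement of the chain $\Pi_0,\dots,\Pi_t$. The only points to watch are bookkeeping ones: the direction of the correspondence (large $d$ corresponds to a shallow common environment), and checking that no degenerate case intervenes — e.g.\ $\bar d(x,z)=t$ would force $x=z$ and hence $d(x,z)=0$, which is incompatible with $d(x,y)<d(x,z)$, so the argument never runs off the end of the chain. Alternatively, one can give a direct environment-based proof avoiding the word ``ultrametric'': put $k=\bar d(x,z)$, note $\env{k+1}x\ne\env{k+1}z$ while $\env{k+1}x=\env{k+1}y$ (because $\bar d(x,y)\ge k+1$ by hypothesis), and read off $\env ky=\env kz$ but $\env{k+1}y\ne\env{k+1}z$, i.e.\ $\bar d(y,z)=k=\bar d(x,z)$.
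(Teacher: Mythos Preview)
Your proposal is correct. The alternative environment-based argument you sketch at the end is essentially the paper's own proof: the paper pivots on $k=\bar d(x,y)$ (rather than your $k=\bar d(x,z)$), observes that $\env kx=\env ky$ is disjoint from $\env kz$, and concludes that $\env mz=\env my$ iff $\env mz=\env mx$ for every $m$, whence $\bar d(y,z)=\bar d(x,z)$. Your main route, however, reverses the paper's logical order. The paper first proves Claims~\ref{cl:ddd} and~\ref{cl:d<d} by hand and only then deduces the strong triangle inequality \refeq{d-max} from them; you instead prove \refeq{d-max} directly from the nesting of the partitions $\Pi_0,\dots,\Pi_t$ and read off Claim~\ref{cl:d<d} as the standard ``isosceles'' consequence of an ultrametric. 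This is a cleaner packaging for Claim~\ref{cl:d<d}, though note that Claim~\ref{cl:ddd} (no equilateral triangles) does \emph{not} follow from the ultrametric property alone and still needs the binary-splitting structure of the $\Pi_k$, so the paper's order has the virtue of treating both claims uniformly at the environment level.
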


\begin{subproof}
Denote $k=\bar d(x,y)$. Then $\env{k}y=\env{k}x$ has an empty intersection with $\env{k}z$.
It follows that $\env{m}z=\env{m}y$ iff $\env{m}z=\env{m}x$.
\end{subproof}

Claims \ref{cl:ddd} and \ref{cl:d<d} imply that, with respect to the distance function $d$,
any three vertices of $H$ form an isosceles triangle where the base is shorter than the two legs.
This implies that $d$ is a graph metric because
\begin{equation}
  \label{eq:d-max}
d(x,y)\le\max(d(x,z),d(z,y))  
\end{equation}
and, hence, $d(x,y)\le d(x,z)+d(z,y)$.

Assume that, in the course of the 3-pebble game on $G$ and $H$, the players pebble
vertices $x$ and $y$ in $G$ and the corresponding vertices $x'$ and $y'$ in $H$
so that the position is not an immediate loss for Duplicator. 
Note that there are two configurations that are potentially dangerous
for Duplicator. If $d(x',y')=1$, then Duplicator loses whenever
Spoiler pebbles a vertex $z$ in $G$ adjacent to exactly one of $x$ and $y$.
If $d(x',y')=t$, then Duplicator loses whenever
Spoiler pebbles a vertex $z$ in $G$ adjacent neither to $x$ nor to~$y$.
Any other configuration is non-losing for Duplicator if Spoiler moves in $G$,
since any other pair $x',y'$ has all possible extensions: a common neighbor, a common non-neighbor, 
a vertex adjacent only to $x'$ and a vertex adjacent only to $y'$. 
In particular, we will use the following fact.

\begin{claim}\label{cl:nonlosing}
Suppose that $2<d(x',y')<t-2$. Then, whatever vertex $z$ is pebbled by Spoiler in $G$,
Duplicator has a non-losing move in $H$. Moreover, there is
a vertex $z'\in V(H)$ with any desired adjacency to $x'$ and to $y'$ such that
\begin{equation}
  \label{eq:dd2}
|d(z',x')-d(x',y')|\le2\text{ and }|d(z',y')-d(x',y')|\le2.  
\end{equation}
\end{claim}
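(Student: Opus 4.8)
The plan is to prove Claim~\ref{cl:nonlosing} by producing, for each of the four possible adjacency patterns Spoiler might demand of $z'$ relative to $x'$ and $y'$, an explicit vertex $z'\in V(H)$ witnessing \refeq{dd2}. Recall that by the remark preceding the claim, adjacency between two vertices is determined by their $d$-distance, and it changes each time that distance goes up or down by $1$. So once we know $d(z',x')$ and $d(z',y')$ modulo the parity/threshold structure that governs adjacency in $H$, we know how $z'$ is joined to $x'$ and $y'$; the task is purely one of realizing the needed distance profile while keeping both distances within $2$ of $d(x',y')$.

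First I would set $r=d(x',y')$ and $k=\bar d(x',y')=t-r$, and work inside the depth-$(k-1)$ environment $E=\env{k-1}{x'}=\env{k-1}{y'}$, whose two depth-$k$ subenvironments contain $x'$ and $y'$ separately. The key structural input is the ``isosceles with short base'' property established via Claims~\ref{cl:ddd} and \ref{cl:d<d}: for any third vertex $z'$ we automatically have $d(z',x')=d(z',y')$ unless one of them equals $r$ exactly. I would then distinguish where $z'$ is placed: (i) choosing $z'$ in the same depth-$(k+1)$ subenvironment as $x'$ but a different depth-$(k+2)$ one gives $d(z',x')=t-(k+1)=r-1$ and, by Claim~\ref{cl:d<d}, $d(z',y')=r$; (ii) symmetrically near $y'$ gives $d(z',x')=r$, $d(z',y')=r-1$; (iii) taking $z'$ outside $E$ but inside $\env{k-2}{x'}$ gives $d(z',x')=d(z',y')=r+1$; and (iv) pushing one level deeper gives $d(z',x')=d(z',y')=r+2$. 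Using $2<r<t-2$ guarantees there is ``room'' above and below — i.e.\ the environments $\env{k+2}{x'}$, $\env{k-2}{x'}$ and the next one exist and are nontrivial — so all four choices are legitimate, and each distance stays within $2$ of $r$, giving \refeq{dd2}.

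Second, I would check that these four distance profiles realize all four adjacency patterns. Since adjacency flips with each unit change of $d$, the pair $(d(z',x'),d(z',y'))$ taking the values $(r-1,r)$, $(r,r-1)$, $(r+1,r+1)$, $(r+2,r+2)$ produces, respectively: $z'$ adjacent to exactly one of $x',y'$ in two complementary ways, $z'$ with one adjacency type to both, and $z'$ with the opposite adjacency type to both. To see that these four exhaust ``common neighbor, common non-neighbor, neighbor of $x'$ only, neighbor of $y'$ only'', note that moving $d(z',x')$ from $r-1$ to $r+1$ (while $d(z',y')$ passes through analogous values) toggles the adjacency to $x'$, and similarly for $y'$; since $2<r<t-2$, both $r\pm1$ and $r\pm2$ are strictly between $0$ and $t$, so these are all genuine distances and the induced adjacencies are exactly as claimed. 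A short case table completes this part.

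The main obstacle I anticipate is bookkeeping around the boundary conditions: I must be careful that the hypothesis $2<d(x',y')<t-2$ is used in exactly the right place, namely to ensure that the relevant deeper and shallower environments are nonempty (a depth-$j$ environment splits into two depth-$(j+1)$ subenvironments only while $j<t$, and has a strictly larger containing environment only while $j>0$), and that the ``$\pm2$'' slack in \refeq{dd2} is not accidentally violated when $r$ is close to the forbidden thresholds. The underlying combinatorics is entirely governed by the binary-tree structure of the environments of $H=H_{t+1}$, so once the environment indices are tracked correctly the verification is routine; I would present it as the four-case analysis above, citing Claims~\ref{cl:ddd} and \ref{cl:d<d} for the distance computations and the remark before Claim~\ref{cl:nonlosing} for the adjacency consequences.
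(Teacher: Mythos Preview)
Your approach is essentially identical to the paper's: both exhibit the four distance profiles $(r-1,r)$, $(r,r-1)$, $(r+1,r+1)$, $(r+2,r+2)$, with $r=d(x',y')$, by placing $z'$ in suitably nested environments, and both observe that these profiles realise all four adjacency patterns while keeping each coordinate within $2$ of~$r$.

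There is, however, a systematic off-by-one slip in your environment indices. With $k=\bar d(x',y')$ you set $E=\env{k-1}{x'}$ and assert that its two depth-$k$ subenvironments separate $x'$ and $y'$. This is false: by definition of $\bar d$ we have $\env{k}{x'}=\env{k}{y'}$, and the separation first occurs at depth $k+1$. Your cases (i) and (ii) are unaffected because there you correctly work at depths $k+1$ and $k+2$. But in case (iii), taking $z'\in\env{k-2}{x'}\setminus E=\env{k-2}{x'}\setminus\env{k-1}{x'}$ gives $\bar d(z',x')=k-2$ and hence $d(z',x')=r+2$, not $r+1$ as you claim; ``pushing one level deeper'' in case (iv) then produces $r+3$, which violates \refeq{dd2}. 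The correction is simply to shift the indices: take $E=\env{k}{x'}$, so that case (iii) uses $z'\in\env{k-1}{x'}\setminus\env{k}{x'}$ (yielding $r+1$) and case (iv) uses $z'\in\env{k-2}{x'}\setminus\env{k-1}{x'}$ (yielding $r+2$). With this adjustment your construction coincides with the paper's, and the hypothesis $2<r<t-2$ guarantees that all the environments in question are nonempty.
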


\begin{subproof}
Denote $k=\bar d(x',y')$.
Assume that $x'\sim y'$ and, hence $x\sim y$ (the case that $x'\not\sim y'$ is symmetric).
If $z\sim x$ and $z\not\sim y$, then Duplicator pebbles a vertex $z'$ in $\env{k+1}{y'}$.
This will ensure that $z'\sim x'$. More precisely,
Duplicator chooses $z'\in\env{k+1}{y'}\setminus\env{k+2}{y'}$ to ensure also that $z'\not\sim y'$.
Note that $\bar d(z',x')=k$ and $\bar d(z',y')=k+1$.
This implies \refeq{dd2}. The case that $z\not\sim x$ and $z\sim y$ is symmetric.

If $z\not\sim x$ and $z\not\sim y$, then Duplicator pebbles a vertex $z'$ in $\env{k-1}{x'}\setminus\env{k}{x'}$
(these environments are the same for $y'$). This ensures that $z'\not\sim x'$ and $z'\not\sim y'$.
Moreover, $\bar d(z',x')=\bar d(z',y')=k-1$.
If $z\sim x$ and $z\sim y$, then Duplicator pebbles a vertex $z'$ in $\env{k-2}{x'}\setminus\env{k-1}{x'}$.
Note that $\bar d(z',x')=\bar d(z',y')=k-2$ in this case.
\end{subproof}

We now make a few useful remarks on the structure of~$G$.

\begin{claim}\label{cl:coconn}
For every $i$, the graph $A\cdot H_i$ is complement-connected.
\end{claim}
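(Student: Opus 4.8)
The plan is to prove that $A \cdot H_i$ is complement-connected, i.e., that both $A \cdot H_i$ and its complement are connected, for every graph $A$ (in our application, $A$ satisfies $\ea 3$, but connectivity alone should follow from the structure of the lexicographic product together with properties of $H_i$). First I would recall the basic fact about lexicographic products: if $A$ is connected, then $A \cdot B$ is connected for every non-empty $B$, since a path in $A$ lifts to a path in $A \cdot B$ through arbitrary representatives in the fibers, and within a single fiber $B_u$ any two vertices are joined via a detour through a neighboring fiber $B_x$ (using $u \sim x$ in $A$). So I would first check that $A$ and $\compl A$ are each connected when $A$ satisfies $\ea 3$: indeed $\ea 2$ already guarantees no isolated and no universal vertex, and more importantly $\ea 3$ makes $A$ complement-connected by a short direct argument (or one can simply take $A$ to be a random graph, which is complement-connected with high probability — this is the context in which the claim is used in the proof of Lemma~\ref{lem:P_4-D^3}).

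Next, for connectedness of $A \cdot H_i$: since $A$ is connected and $H_i$ is non-empty, the general fact above gives that $A \cdot H_i$ is connected. For the complement, I would use the identity $\compl{A \cdot B} = \compl A \cdot \compl B$, which is immediate from the definition of the lexicographic product (two vertices $(u,v), (x,y)$ are non-adjacent in $A \cdot B$ iff $u \neq x$ and $u \not\sim_A x$, or $u = x$ and $v \not\sim_B y$ — exactly adjacency in $\compl A \cdot \compl B$). Thus $\compl{A \cdot H_i} = \compl A \cdot \compl{H_i}$. Now $\compl A$ is connected (since $A$ is complement-connected), and $\compl{H_i}$ is non-empty, so again by the general fact $\compl A \cdot \compl{H_i}$ is connected. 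Hence $A \cdot H_i$ is complement-connected.

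The one point requiring care — and the main (minor) obstacle — is making sure the base cases and the hypotheses on $A$ are in place: one must verify that whatever graph $A$ was fixed in the construction is genuinely complement-connected, not merely that it has no isolated/universal vertex. I would handle this by observing explicitly that a graph satisfying $\ea 3$ is complement-connected: given any two vertices $a, b$, the $3$-extension property supplies a vertex adjacent to both and a vertex adjacent to neither, so $a$ and $b$ lie in the same component of both $A$ and $\compl A$ — in fact any two vertices are at distance at most $2$ in each. Alternatively, since the ambient construction only needs \emph{some} suitable $A$, it suffices to note that a sufficiently large random graph is complement-connected with high probability. With $A$ complement-connected and $\compl{H_i} \neq \emptyset$ (which holds for $i \geq 1$, and for $i = 0$ one checks $A \cdot K_1 = A$ directly), both halves of the argument go through, and the claim follows.
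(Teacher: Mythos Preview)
Your proposal is correct and follows essentially the same route as the paper: establish that $A$ and $\compl A$ are connected (indeed of diameter~$2$) via the $3$-extension property, use the identity $\compl{A\cdot H_i}=\compl A\cdot\compl{H_i}$, and invoke the fact that a lexicographic product with connected first factor (on more than one vertex) is connected. The paper's proof is just a terser version of what you wrote; the only cosmetic difference is that the paper states the ``more than one vertex'' hypothesis on the first factor explicitly, whereas you use it implicitly in your detour argument.
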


\begin{subproof}
By the 3-extension property, both $A$ and $\compl A$ have diameter 2 and, hence, are connected.
Note that $\compl{A\cdot H_i}=\compl A\cdot\compl{H_i}$. The claim follows from the fact that
the lexicographic product is connected whenever the first factor is connected and has more than one vertex.
\end{subproof}

Note that $G_i=H_i\cdot(A\cdot H_i)$ is obtained from $H_i$ by substituting 
each vertex for a copy of $A\cdot H_i$. It follows by Claim \ref{cl:coconn} that 
all cocomponents of $G_i$ are isomorphic to $A\cdot H_i$ or its complement,
depending on the parity of $i$.
Furthermore, define an operation on graphs $f$ by $f(X)=\compl{X+X}$.
Note that $X^i=f^{i-1}(X)$ and $H_i=f^{i-1}(K_1)$, where $f^k$ denotes the $k$-fold composition of $f$.
Using the equality
$$
f(X)=\compl{(K_1+K_1)\cdot X}=K_2\cdot\compl{X}=f(K_1)\cdot\compl{X},
$$
an easy inductive argument yields
$$
f^j(X)=
\begin{cases}
f^j(K_1)\cdot\compl{X} & \text{if $j$ is odd},\\
f^j(K_1)\cdot X & \text{if $j$ is even}.
\end{cases}
$$
Suppose now that $i$ is odd (like in our case of $i=4m+3$). Then
$$
G_i=H_i\cdot(A\cdot H_i)=f^{i-1}(K_1)\cdot(A\cdot H_i)=f^{i-1}(A\cdot H_i)=(A\cdot H_i)^i.
$$
This reveals that $G_i=(A\cdot H_i)^i$ has the same recursive structure
as $H_i=(K_1)^i$, where each cocomponent of $G_i$ is isomorphic to $A\cdot H_i$.

Being isomorphic to $A\cdot H$, each cocomponent of $G$ consists, therefore,
of $v_A$ copies of $H$ (and edges between them), where $v_A$ is the number of vertices in $A$.
On each of the $v_A$ copies of $H$ we consider the metric $d$ introduced above.
We also consider this metric on the first factor of $H$ in $G=H\cdot(A\cdot H)$,
using the different notation $D$ for it. Moreover, we extend the distance functions $D$ and $d$
to the entire set $V(G)^2$ as follows. Recall that a vertex of $G$ is a triple $(h,a,h')$
where $h,h'\in V(H)$ and $a\in A$. Then
$$
D((h_1,a_1,h'_1),(h_2,a_2,h'_2))=d(h_1,h_2)
$$
and
$$
d((h_1,a_1,h'_1),(h_2,a_2,h'_2))=
\begin{cases}
d(h'_1,h'_2)&\text{if }(h_1,a_1)=(h_2,a_2),\\
\infty&\text{otherwise}.
\end{cases}
$$
Similarly to the dangerous configurations for Duplicator that are mentioned above,
note that Duplicator can lose in the next round if two vertices $x$ and $y$
are pebbled in $G$ such that $d(x,y)=1$ or $D(x,y)=t$.
On the other hand, a configuration $x,y\in V(G)$ with $d(x,y)=t$
is not dangerous because if Spoiler pebbles $z'\in V(H)$ with the same
adjacency to $x'$ and $y'$, then Duplicator can respond with $z$ 
in the same $(A\cdot H)$-cocomponent of $G$ but in a different copy of $H$
(not containing $x$ and $y$). She can ensure adjacency or non-adjacency to
both $x$ and $y$ because $A$ has the 3-extension property.
A configuration with $D(x,y)=1$ also does not pose any threat to Duplicator
because she can find a vertex $z$ adjacent to exactly one of $x$ and $y$
either in the $H$-part of a cocomponent of $G$ containing $x$ or in the
$H$-part of a cocomponent of $G$ containing $y$ (that is, either $d(z,x)<\infty$
or $d(z,y)<\infty$).

Speaking of a \emph{2-pebble configuration $x/x'$, $y/y'$}, where $x,y\in V(G)$
and $x',y'\in V(H)$, we mean that the vertices $x$ and $y$ are pebbled in $G$
and the corresponding pebbles occupy the vertices $x'$ and $y'$ in $H$.
Let $0\le s\le 2m$.
We call a configuration $x/x'$, $y/y'$ \emph{$s$-safe} (for Duplicator)
if the following three conditions are true:
\begin{itemize}
\item 
$x\sim y$ iff $x'\sim y'$;
\item 
$d(x,y)=d(x',y')$ or both the distances are sufficiently large: $d(x,y)>s$ and $d(x',y')>s$.
\item 
$D(x,y)=d(x',y')$ or both the distances are sufficiently small: $D(x,y)<t-s$ and $d(x',y')<t-s$.
\end{itemize}
A 3-pebble configuration $x/x'$, $y/y'$, $z/z'$ is \emph{$s$-safe}
if every 2-pebble subconfiguration of it is $s$-safe.

We are now ready to describe Duplicator's strategy.

\smallskip

\textit{1st round.}
Duplicator's response is arbitrary. If she moves in $H$, recall that $H$ is vertex-transitive.
If she moves in $G$, there can be non-isomorphic choices but, as we will see, a particular choice does
not influence the game.\footnote{In fact, we can make $G$ vertex-transitive too
by choosing $A$ to be the Paley graph of order~$13$.}
Suppose that the players have pebbled
vertices $x\in V(G)$ and $x'\in V(H)$.

\smallskip

\textit{2nd round.}
The vertices pebbled in this round will be denoted by $y$ and $y'$.
Duplicator ensures a $(2m-1)$-safe configuration $x/x'$, $y/y'$ as follows:
\begin{itemize}
\item 
If $d(x,y)\le2m-1$ or $d(x',y')\le2m-1$ (depending on the graph Spoiler moves in),
then Duplicator responds so that $d(x,y)=d(x',y')$. Note that this guarantees that
$x\sim y$ iff $x'\sim y'$.
\item 
If $D(x,y)\ge2m+3$ or $d(x',y')\ge2m+3$ (depending on the graph Spoiler moves in),
then Duplicator responds so that $D(x,y)=d(x',y')$. Again, this guarantees that
$x\sim y$ iff $x'\sim y'$.
\item 
Otherwise, Duplicator takes care that none of the above conditions is true,
that is, both $d(x,y)$ and $d(x',y')$ are at least $2m$
and both $D(x,y)$ and $d(x',y')$ are at most $2m+2$.
If moving in $H$, she achieves this by pebbling a vertex $y'$ such that
$d(x',y')=2m$ or $d(x',y')=2m+1$, depending on whether or not $x$ and $y$ are adjacent.
If moving in $G$, Duplicator can achieve the desired adjacency between $x$ and $y$,
for example, by pebbling $y$ such that $d(x,y)=\infty$ and $D(x,y)=0$
(due to the 2-extension property of~$A$). 
\end{itemize}

The core of Duplicator's strategy in further rounds is stated in the following claim.

\begin{claim}\label{cl:ssafe}
Let $2\le s\le 2m-1$. Suppose that $x/x'$, $y/y'$ is an $s$-safe 2-pebble configuration.
Then, whatever Spoiler does in the next round, Duplicator can respond so that the
resulting configuration $x/x'$, $y/y'$, $z/z'$ is $(s-2)$-safe.
\end{claim}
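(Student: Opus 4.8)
The plan is to argue by case analysis on which graph Spoiler moves in and, within each graph, on the adjacency pattern Spoiler forces between $z$ and the already-pebbled vertices $x,y$; in each case I will exhibit Duplicator's response and verify the three conditions in the definition of $(s-2)$-safety for each of the three $2$-pebble subconfigurations $x/x'$, $z/z'$ and $y/y'$, $z/z'$ (the subconfiguration $x/x'$, $y/y'$ is already $s$-safe, hence $(s-2)$-safe since the thresholds $s$ and $t-s$ only get relaxed). First I would dispose of the case that Spoiler pebbles in $H$: then the new vertex is $z'\in V(H)$, and Duplicator must find $z\in V(G)$ with the matching adjacencies. If $d(x',z')\le s-2$ or $d(y',z')\le s-2$, I would use Claim~\ref{cl:auto} (vertex-transitivity and the transitivity of $d$-preserving automorphisms on pairs at equal distance) together with the recursive product structure $G=(A\cdot H_i)^i$ to place $z$ inside the appropriate copy of $H$ so that $d(x,z)=d(x',z')$ and/or $d(y,z)=d(y',z')$ exactly, which also fixes adjacency; symmetrically if $d(x',z')$ or $d(y',z')$ is $\ge t-(s-2)$ one matches $D$-distances using a fresh $(A\cdot H)$-cocomponent and the $3$-extension property of $A$. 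When all the relevant $H$-distances are in the ``middle'' range, Duplicator keeps them middling on the $G$-side as well, exploiting that $A$ has the $2$- and $3$-extension properties to realize any of the four adjacency patterns by moving into a different copy of $H$ within a cocomponent (making the $d$-distance $\infty$) or a different cocomponent.

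Next I would handle the case that Spoiler pebbles $z\in V(G)$. Here the key technical tool is Claim~\ref{cl:nonlosing}: since $x/x'$, $y/y'$ is $s$-safe with $s\ge2$, either $d(x',y')=d(x,y)$ or $d(x',y')>s\ge2$ and $d(x',y')<t-s$, i.e. in the genuinely ambiguous case we have $2<d(x',y')<t-2$, so Claim~\ref{cl:nonlosing} supplies a $z'\in V(H)$ with \emph{any} prescribed adjacency to $x'$ and $y'$ and, crucially, with $|d(z',x')-d(x',y')|\le2$ and $|d(z',y')-d(x',y')|\le2$. Duplicator picks $z'$ with the adjacency pattern dictated by how $z$ sits relative to $x$ and $y$ in $G$. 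Then I verify $(s-2)$-safety of $x/x'$, $z/z'$: adjacency matches by construction; for the $d$-condition, if $d(x,z)$ is small I must show $d(x,z)=d(x',z')$, using that in $G$'s product structure the value $d(x,z)$ together with the $s$-safety hypothesis $d(x,y)=d(x',y')$ (or $d(x,y)>s$) and the near-isosceles metric inequality \refeq{d-max}, combined with the $\pm2$ slack from Claim~\ref{cl:nonlosing}, forces equality or pushes both distances past the relaxed threshold $s-2$; similarly the $D$-condition uses $D(x,y)$ matching or both being $<t-s$, again with a $\pm2$ loss accounted for by moving the threshold from $t-s$ to $t-(s-2)$. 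The subconfiguration $y/y'$, $z/z'$ is symmetric.

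The main obstacle will be the bookkeeping in the ambiguous middle case when Spoiler plays in $G$: I must simultaneously control $d$, $D$, and adjacency for \emph{two} pairs $(x,z)$ and $(y,z)$ using a single choice of $z'$, and the $\pm2$ perturbation from Claim~\ref{cl:nonlosing} is exactly what eats two units of the safety margin, which is why the claim degrades $s$-safety to $(s-2)$-safety rather than $(s-1)$. The delicate point is to check that whenever one of the real distances $d(x,z)$, $d(y,z)$ lands just below the new threshold $s-2$, the ultrametric-like identities from Claims~\ref{cl:ddd}, \ref{cl:d<d} (in the form of \refeq{d-max}) propagate equality from the $x/x'$, $y/y'$ match to the new pair, so that Duplicator is never caught in a position where a real distance is small but the $H$-distance is not equal to it — and symmetrically for large distances against $D$. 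Once this propagation lemma is isolated, the remaining verifications are routine triangle-inequality estimates, and the claim follows.
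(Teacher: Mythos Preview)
Your organization (split first on which graph Spoiler moves in, then on the distance profile of the new vertex) is a legitimate alternative to the paper's organization (split first on whether $d(x',y')$ is small, large, or in the middle). The treatment of the case ``Spoiler moves in $H$'' is essentially right and matches what the paper does in the primed subcases.

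The gap is in your treatment of ``Spoiler moves in $G$''. You propose to use Claim~\ref{cl:nonlosing} as the universal response whenever $2<d(x',y')<t-2$, and then to \emph{verify} $(s-2)$-safety after the fact, hoping that the ultrametric identities together with the $\pm2$ slack will ``force equality or push both distances past the relaxed threshold $s-2$''. This cannot work, because the response supplied by Claim~\ref{cl:nonlosing} is simply the wrong move in some subcases. Concretely: take $s=4$, $d(x',y')=d(x,y)=10$ (so we are in the middle range and Claim~\ref{cl:nonlosing} applies), and let Spoiler pebble $z\in V(G)$ with $d(z,x)=1$. Claim~\ref{cl:nonlosing} hands you a $z'$ with $d(z',x')\ge d(x',y')-2=8$. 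For the pair $x/x'$, $z/z'$ to be $(s-2)$-safe you need $d(x,z)=d(x',z')$ or both $>s-2=2$; but $d(x,z)=1\le2$ while $d(x',z')\ge8$, so $(s-2)$-safety fails outright. No amount of ultrametric bookkeeping can repair this, because the \emph{response} is wrong, not the verification.

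The paper's proof handles this by branching, \emph{also} on the position of $z$, not only on $d(x',y')$: when $d(z,x)\le s-2$ (or symmetrically for $y$), Duplicator matches $d(z',x')=d(z,x)$ exactly using the isomorphism from Claim~\ref{cl:auto}; when $D(z,x)\ge t-s+2$, she matches $d(z',x')=D(z,x)$ exactly; only when all four quantities $d(z,x),d(z,y),D(z,x),D(z,y)$ are in the middle range does she invoke Claim~\ref{cl:nonlosing} (or the $3$-extension property of $A$). You already do exactly this branching when Spoiler moves in $H$; you need the mirror-image branching when Spoiler moves in $G$. Once you add those ``extreme-distance'' subcases on the $G$ side and reserve Claim~\ref{cl:nonlosing} for the genuinely middle subcase, the rest of your plan goes through, and the $\pm2$ loss in Claim~\ref{cl:nonlosing} is indeed precisely what accounts for the drop from $s$-safe to $(s-2)$-safe.
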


Claim \ref{cl:ssafe} readily implies that Duplicator survives not only
in the first and the second rounds but also in at least $m-1$ subsequent rounds,
which yields the bound in Lemma \ref{lem:P_4-D^3}.
It remains to prove this claim.

\smallskip

\begin{subproof}
\Case A{$d(x',y')\le s$}.
Since the configuration $x/x'$, $y/y'$ is $s$-safe, we have $d(x,y)=d(x',y')$
and, hence, $x\sim y$ iff $x'\sim y'$.
We split our analysis into several subcases depending on
the mutual position of the vertices $z$ and $x$ (or $z'$ and $x'$).
The mutual position of the vertices $z$ and $y$ (or $z'$ and $y'$)
will be determined in each of the subcases
because the vertices $x$ and $y$ (and  $x'$ and $y'$) are close to each other in the metric~$d$.

\Subcase{A-1}{Spoiler pebbles $z\in V(G)$ such that $d(z,x)\le s$.}
It follows by \refeq{d-max} that $d(z,y)\le s$.
Let $H'$ be the $H$-part of a cocomponent of $G$ containing the vertices $x$ and $y$.
By Claim \ref{cl:auto}, there is an isomorphism $\alpha$ from $H'$ to $H$ such that
$\alpha(x)=x'$ and $\alpha(y)=y'$. Duplicator pebbles the vertex $z'=\alpha(z)$.
The adjacencies are respected automatically. Since 
\begin{eqnarray*}
d(z',x')=d(z,x)&\le&s<t-s\text{\ \ and}\\
d(z',y')=d(z,y)&\le&s<t-s,  
\end{eqnarray*}
the configuration $x/x'$, $y/y'$, $z/z'$ is $s$-safe and, hence, $(s-2)$-safe.

\Subcase{A-1$\,{}'$}{Spoiler pebbles $z'\in V(H)$ such that $d(z',x')\le s$.}
Duplicator responds in $G$ mirroring Subcase~A-1.

\Subcase{A-2}{Spoiler pebbles $z\in V(G)$ such that $D(z,x)\ge t-s$.}
Since $x$ and $y$ are in the same cocomponent and $z$ is in a different cocomponent of $G$,
we have $D(z,x)=D(z,y)$, and $z\sim x$ iff $z\sim y$.
Duplicator pebbles $z'\in V(H)$ such that $d(z',x')=D(z,x)$. 
This implies that $z'\sim x'$ iff $z\sim x$. Since
$d(z',x')\ge t-s>s\ge d(x',y')$, Claim \ref{cl:d<d} implies that $d(z',y')=d(z',x')=D(z,x)=D(z,y)$.
As a consequence, $z'\sim y'$ iff $z\sim y$.
It follows that the resulting configuration is $s$-safe.

\Subcase{A-2$\,{}'$}{Spoiler pebbles $z'\in V(H)$ such that $d(z',x')\ge t-s$.}
Duplicator responds in $G$ mirroring Subcase~A-2.

\Subcase{A-3.1}{Spoiler pebbles $z\in V(G)$ such that $s<d(z,x)<\infty$.}
Since $d(z,y)<\infty$ and $d(z,x)>d(x,y)$, we have $d(z,y)=d(z,x)$ by Claim \ref{cl:d<d},
and $z\sim y$ iff $z\sim x$.
Duplicator responds with $z'\in V(H)$ such that $d(z',x')=2m$ or $d(z',x')=2m+1$.
More specifically, she
chooses the option ensuring that $z'\sim x'$ iff $z\sim x$. In either case, $d(z',x')>d(x',y')$ and, hence,
$d(z',y')=d(z',x')$. Therefore, 
$$
z'\sim y'\iff z'\sim x'\iff z\sim x\iff z\sim y,
$$
and the new configuration remains $s$-safe.

\Subcase{A-3.2}{Spoiler pebbles $z\in V(G)$ such that $0<D(z,x)<t-s$.}
Note that $D(z,y)=D(z,x)$. Duplicator plays like in Subcase~A-3.1.

\Subcase{A-3.3}{Spoiler pebbles $z\in V(G)$ such that $d(z,x)=\infty$ and $D(z,x)=0$.}
Note that $d(z,y)=\infty$ and $D(z,y)=0$, which implies that $z\sim y$ iff $z\sim x$. 
Duplicator plays like in Subcase~A-3.1.

\Subcase{A-3$\,{}'$}{Spoiler pebbles $z'\in V(H)$ such that $s<d(z',x')<t-s$.}
Since $d(z',x')>s\ge d(x',y')$, we have $d(z',y')=d(z',x')$ and, hence,
$z'\sim y'$ iff $z'\sim x'$.
Duplicator can, for example, mirror Subcase~A-3.3 by pebbling $z\in V(G)$
such that $d(z,x)=\infty$, $D(z,x)=0$, and $z\sim x$ iff $z'\sim x'$.
The last equivalence can be achieved due to the 2-extension property of the graph~$A$.
The equivalence $z\sim y\iff z'\sim y'$ will be true automatically
(by Claim \ref{cl:d<d} and the structure of~$G$).

\Case B{$d(x',y')\ge t-s$}. Since the configuration $x/x'$, $y/y'$ is $s$-safe, $D(x,y)=d(x',y')$ 
and $x\sim y$ iff $x'\sim y'$ in this case.
We split analysis of this case into several subcases depending on
the mutual position of the vertices $z$ and $x$ (or $z'$ and $x'$).
There are also symmetric subcases depending on
the mutual position of $z$ and $y$ (or $z'$ and $y'$),
that are omitted below.

\Subcase{B-1}{Spoiler pebbles $z\in V(G)$ such that $d(z,x)\le s-2$.}
Note that $D(z,y)=D(x,y)$. Duplicator pebbles $z'$ in $H$ such that
$d(z',x')=d(z,x)$. Since $d(z',x')\le s-2<t-s\le d(x',y')$, we have
$d(z',y')=d(x',y')=D(x,y)=D(z,y)$, which implies the right adjacencies.
As easily seen, the new configuration is $(s-2)$-safe.

\Subcase{B-1$\,{}'$}{Spoiler pebbles $z'\in V(H)$ such that $d(z',x')\le s-2$.}
This case is, in a sense, mirror-symmetric to Subcase~B-1. 
Note that $d(z',y')=d(x',y')$. Duplicator pebbles $z$ in $G$ such that
$d(z,x)=d(z',x')$, which implies that $D(z,y)=D(x,y)$.

\Subcase{B-2}{Spoiler pebbles $z\in V(G)$ such that $D(z,x)>s-2$ and $D(z,y)>s-2$.}
Duplicator plays according to an isomorphism from the first $H$-factor in $G=H\cdot(A\cdot H)$
to the graph $H$ taking $x$ to $x'$ and $y$ to $y'$, which exists by Claim~\ref{cl:auto}.

\Subcase{B-2$\,{}'$}{Spoiler pebbles $z'\in V(H)$ such that $d(z',x')>s-2$ and $d(z',y')>s-2$.}
Duplicator mirrors her strategy from Subcase~B-2.

\Subcase{B-3}{Spoiler plays in $G$ but differently from Subcases B-1 and B-2. That is,
he pebbles $z\in V(G)$ such that $0<D(z,x)\le s-2$ or such that 
$D(z,x)=0$ and $s-2<d(z,x)\le\infty$.}
Since $D(x,y)\ge t-s>s-2\ge D(z,x)$, we have
$D(z,y)=D(y,x)$ and $z\sim y$ iff $y\sim x$. Duplicator pebbles $z'$ in $H$ such that
$d(z',x')=2m$ or $d(z',x')=2m+1$;
more specifically, she chooses the option ensuring that $z'\sim x'$ iff $z\sim x$.
Since $d(z',x')<t-s\le d(x',y')$, we have $d(x',y')=d(z',y')$ and, therefore,
$$
z'\sim y'\iff x'\sim y'\iff x\sim y\iff z\sim y,
$$
yielding the right adjacencies.
It is straightforward to check that the configuration $x/x'$, $y/y'$, $z/z'$ is $(s-2)$-safe.

\Case C{$s<d(x',y')<t-s$.}
Note that in this case we have $d(x,y)>s$ and $D(x,y)<t-s$.

\Subcase{C-1}{$s<d(x,y)<\infty$.}

\Subcase{C-1.1}{Spoiler pebbles $z\in V(G)$ such that $d(z,x)\le s-2$ or $d(z,y)\le s-2$.}
Duplicator pebbles $z'\in V(H)$ with $d(z',x')=d(z,x)$ or $d(z',y')=d(z,y)$
respectively. 

\Subcase{C-1.1$\,{}'$}{Spoiler pebbles $z'\in V(H)$ such that $d(z',x')\le s-2$ or $d(z',y')\le s-2$.}
This is the counterpart of Subcase~C-1.1.

\Subcase{C-1.2}{Spoiler pebbles $z\in V(G)$ such that $D(z,x)=D(z,y)\ge t-s+2$.}
Duplicator responds with $z'\in V(H)$ such that $d(z',x')=d(z',y')=D(z,x)=D(z,y)$.

\Subcase{C-1.2$\,{}'$}{Spoiler pebbles $z'\in V(H)$ such that $d(z',x')=d(z',y')\ge t-s+2$.}
This is the counterpart of Subcase~C-1.2.

\Subcase{C-1.3}{Spoiler pebbles $z\in V(G)$ such that $s-2<d(z,x)<\infty$ and $s-2<d(z,y)<\infty$
or he pebbles $z\in V(G)$ 
such that 
$d(z,x)=d(z,y)=\infty$ and
$D(z,x)=D(z,y)<t-s+2$.}
Duplicator succeeds by Claim~\ref{cl:nonlosing}.

\Subcase{C-1.3$\,{}'$}{Spoiler pebbles $z'\in V(H)$ such that $s-2<d(z',x')<t-s+2$ and 
$s-2<d(z',y')<t-s+2$.}
If $z'$ is adjacent to exactly one of $x'$ and $y'$, then
Duplicator has a successful move $z$ 
such that $d(z,x)<\infty$ (and hence $d(z,y)<\infty$);
see the proof of Claim~\ref{cl:nonlosing} that ensures that $d(z,x)\ge d(x,y)-1>s-2$
and $d(z,y)\ge d(x,y)-1>s-2$. If $z'$ is adjacent to both or none of $x'$ and $y'$, then
Claim~\ref{cl:nonlosing} is non-applicable if $d(x,y)\ge t-2$. In this case,
Duplicator has a successful choice of $z$ in the cocomponent of $G$ containing $x$ and $y$
(due to the 3-extension property of the graph $A$).

\Subcase{C-2}{$d(x,y)=\infty$ and $D(x,y)=0$.}

\Subcase{C-2.1}{Spoiler pebbles $z\in V(G)$ such that $d(z,x)\le s-2$ or $d(z,y)\le s-2$.}
Duplicator pebbles $z'\in V(H)$ with $d(z',x')=d(z,x)$ or $d(z',y')=d(z,y)$
respectively.

\Subcase{C-2.1$\,{}'$}{Spoiler pebbles $z'\in V(H)$ such that $d(z',x')=d(z,x)$ or $d(z',y')=d(z,y)$.}
This is the counterpart of Subcase~C-2.1.

\Subcase{C-2.2}{Spoiler pebbles $z\in V(G)$ such that $D(z,x)=D(z,y)\ge t-s+2$.}
Duplicator responds with $z'\in V(H)$ such that $d(z',x')=d(z',y')=D(z,x)=D(z,y)$.

\Subcase{C-2.2$\,{}'$}{Spoiler pebbles $z'\in V(H)$ such that $d(z',x')=d(z',y')\ge t-s+2$.}
This is the counterpart of Subcase~C-2.2.

\Subcase{C-2.3}{Spoiler plays in $G$ but differently from Subcases C-2.1 and C-2.2.}
Duplicator has an appropriate move in $H$ by Claim~\ref{cl:nonlosing}.

\Subcase{C-2.3$\,{}'$}{Spoiler pebbles $z'\in V(H)$ such that $s-2<d(z',x')<t-s+2$ and 
$s-2<d(z',y')<t-s+2$.}
Duplicator pebbles a vertex $z$ in the cocomponent of $G$ containing $x$ and $y$
but in an $H$-part of this cocomponent different from the $H$-parts containing $x$ and $y$.
She is able to obey the adjacency relation due to the 3-extension property of the graph~$A$.

\Subcase{C-3}{$0<D(x,y)<t-s$.}

Subcases C-3.1, C-3.1${}'$, C-3.2, C-3.2$\,{}'$, and C-3.3 are as in Subcase~C-2.

\Subcase{C-3.3$\,{}'$}{Spoiler pebbles $z'\in V(H)$ such that $s-2<d(z',x')<t-s+2$ and 
$s-2<d(z',y')<t-s+2$.}

If $d(z',x')=d(z',y')>d(x',y')$, Duplicator pebbles a vertex $z$ such that
$D(z,x)=D(z,y)$ is equal either to $D(x,y)+1$ or to $D(x,y)+2$ so that
$z$ has the required adjacency to $x$ and $y$. Note that $D(z,x)=D(z,y)<t-s+2$,
which implies that the configuration is $(s-2)$-safe.

If $d(z',x')<d(z',y')=d(x',y')$, Duplicator pebbles a vertex $z$ such that
$d(z,x)=2m$ or $d(z,x)=2m+1$. 
The configuration is $(s-2)$-safe as $D(z,x)=0$ and $D(z,y)=D(x,y)$.

The case when $z'$ is closer to $y'$ than to $x'$ in the metric $d$ is symmetric.
\end{subproof}

\subsection{The claw ($K_{1,3}$) and the diamond ($K_4\setminus e$) subgraphs}

A \emph{strongly regular graph} with parameters
$(n,k,\lambda,\mu)$ is a regular graph with $n$ vertices of degree $k$
such that every two adjacent vertices have $\lambda$ common neighbors
and every two non-adjacent vertices have $\mu$ common neighbors.
The simplest examples are 
 $sK_t$ (the vertex-disjoint union of $s$ copies of the complete graph $K_t$)
and their complements (complete $s$-partite graphs with each vertex class
of size $t$). We call such strongly regular graphs \emph{trivial}.
A strongly regular graph is non-trivial exactly if $0<\mu<k<n-1$.

An example of a non-trivial strongly regular graph, that will be useful for us below,
is the \emph{$m\times m$-rook graph}. The vertex set of this graph is
$\setdef{(a,b)}{1\le a,b\le m}$, and two vertices $(a_1,b_1)$ and $(a_2,b_2)$
are adjacent if $a_1=a_2$ or $b_1=b_2$. In other words, each vertex represents a square
of the $m\times m$ chess board, and two squares are adjacent if one is reachable from the
other by a move of the rook. The $m\times m$-rook graph is strongly regular with
parameters $(m^2,2m-2,m-2,2)$.

The condition $\lambda=0$ means that a strongly regular graph is $K_3$-free.
Every complete bipartite graph $K_{n,n}=\compl{2K_n}$ has 
this property and seven other triangle-free non-trivial graphs are known.
It is open whether there is yet another such graph~\cite{Godsil95}.

Suppose that $H$ is a non-trivial non-triangle-free strongly regular graph
with parameters $(n,k,\lambda,\mu)$. Thus, $\mu<k$ and it is also not hard to see that 
$\lambda<k-1$ (otherwise every two adjacent vertices were twins and, by connectedness,
every two vertices would be adjacent twins, implying that $H$ is complete).
These two inequalities readily imply that $H$ satisfies the 3rd extension axiom.

\begin{thm}\label{thm:K_13-W-ind}
  $W[K_{1,3}]=W^*[K_{1,3}]=\alice{K_{1,3}}=4$ and $W[K_4\setminus e]=W^*[K_4\setminus e]=\alice{K_4\setminus e}=4$.
\end{thm}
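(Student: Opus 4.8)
The plan is to establish, for each of the two pattern graphs $F\in\{K_{1,3},K_4\setminus e\}$, that $\alice F\ge 4$ and $W[F]\le 4$; together with $W^*[F]\le W[F]\le D[F]\le 4$ and Lemma~\ref{lem:alice} this pins all four parameters to $4$. The upper bound $W[F]\le 4$ is free since $F$ has four vertices. So the real content is the lower bound $\alice F\ge 4$, i.e.\ exhibiting, for each $F$, a graph $H$ that satisfies the $3$rd extension axiom but contains no induced copy of $F$. The excerpt has been carefully staged for this: it observes that a non-trivial, non-triangle-free strongly regular graph satisfies $\ea 3$ (because $\mu<k$ and $\lambda<k-1$), and it introduces the $m\times m$-rook graph as a concrete such graph with parameters $(m^2,2m-2,m-2,2)$.

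First I would handle $F=K_4\setminus e$ (the diamond). I claim the $m\times m$-rook graph $R_m$ (for, say, $m\ge 3$) is diamond-free while satisfying $\ea 3$. Two adjacent vertices of $R_m$ lie in a common row or column; a short case analysis of when four vertices $(a_i,b_i)$ induce a diamond — one shared "rook line" giving the degree-2 pair — shows this is impossible in a rook graph because any two adjacent vertices sharing, say, a row, have their common neighbours structured so that a diamond's missing edge cannot be realised. (Equivalently: $R_m=K_m\times K_m$ in the Cartesian sense and its induced subgraphs on four vertices are easy to enumerate; none is $K_4\setminus e$.) Since $R_m$ is non-trivial strongly regular with $\lambda=m-2<k-1=2m-3$ and $\mu=2<k$, it satisfies $\ea 3$ by the discussion just before the theorem. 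Hence $\alice{K_4\setminus e}\ge 4$.

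Next I would handle $F=K_{1,3}$ (the claw). A claw is an induced $\overline{K_3+e}$, and by Lemma~\ref{lem:compl} it suffices to find a claw-free graph satisfying $\ea 3$ — or, what is likely cleanest given Lemma~\ref{lem:compl} equates $\alice{}$-type obstructions up to complementation, to use the complement of the rook graph, the complete bipartite-ish $\overline{R_m}$, which is again strongly regular, non-trivial, and (being the complement of a $K_3$-containing graph) still non-triangle-free for the relevant $m$; checking claw-freeness of $\overline{R_m}$ reduces to checking diamond-freeness of $R_m$, already done. Alternatively, I would invoke the line graph $L(K_m)$: line graphs are claw-free by a classical fact, and $L(K_m)$ is exactly the triangular (Johnson) graph $T(m)$, which is non-trivial strongly regular and non-triangle-free, hence satisfies $\ea 3$. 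Either route gives $\alice{K_{1,3}}\ge 4$.

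The main obstacle I anticipate is purely the bookkeeping in verifying that the chosen strongly regular graphs genuinely omit the forbidden four-vertex pattern — one must be a little careful about small $m$ (e.g.\ $R_2=C_4$ is trivial and does contain neither, but one wants $m$ large enough that the graph is non-trivial and the extension axiom argument applies cleanly) — and in confirming the inequalities $\mu<k$ and $\lambda<k-1$ for the specific parameters, which is what the text has already flagged as the trigger for $\ea 3$. Once the right witness graph is fixed, everything else is immediate from the results quoted above.
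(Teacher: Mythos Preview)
Your overall strategy is exactly the paper's: since $W^*[F]\le W[F]\le 4$ is automatic for a four-vertex $F$, the content is $\alice F\ge 4$, and for this one exhibits a non-trivial non-triangle-free strongly regular graph omitting $F$. For the diamond you use the rook graph $R_m$, which is correct and is precisely what the paper does.

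For the claw, however, your option (a) contains two errors. First, the claw is not $\overline{K_3+e}$: the complement of the paw is $P_3+K_1$, whereas $\overline{K_{1,3}}=K_3+K_1$. Second, and consequently, claw-freeness of $\overline{R_m}$ is equivalent to $(K_3+K_1)$-freeness of $R_m$, not to diamond-freeness of $R_m$; and $R_m$ is certainly not $(K_3+K_1)$-free for $m\ge 3$ (any triangle inside a row, together with any vertex in a different row and column, gives an induced $K_3+K_1$). So route (a) does not work. Your option (b) via $L(K_m)=T(m)$ is valid, but only for $m\ge 5$: for $m=4$ one gets the octahedron $K_{2,2,2}$, a trivial strongly regular graph with $\mu=k$, so the $\ea 3$ argument does not apply there.

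The paper's proof is simpler than either of your claw options: the $3\times 3$-rook graph is \emph{itself} claw-free. In any $R_m$ the open neighbourhood of a vertex is $2K_{m-1}$ (the remaining $m-1$ vertices in its row form one clique, those in its column form another, with no edges between them), whose independence number is $2$; hence no vertex has three pairwise non-adjacent neighbours. Thus the single witness $R_3$ --- non-trivial, non-triangle-free, claw-free, and diamond-free (the latter since $\lambda=1$) --- disposes of both $K_{1,3}$ and $K_4\setminus e$ at once, and that is the entirety of the paper's proof.
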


\begin{proof}
We have to show that $\alice{K_{1,3}}>3$ and $\alice{K_4\setminus e}>3$.
By the discussion above, it suffices to exhibit a non-trivial non-triangle-free strongly regular graph
that neither contains the claw graph nor the diamond graph as induced subgraphs.
The $3\times3$-rook graph suits these needs.
\end{proof}

\subsection{The cycle subgraph ($C_4$)}

Let $G$ be a connected graph. Given $u,v\in V(G)$, let $f^G_{i,j}(u,v)$ denote
the number of vertices at distance $i$ from $u$ and at distance $j$ from $v$.
The graph $G$ is called \emph{distance-regular} if the number $f^G_{i,j}(u,v)=f^G_{i,j}(d)$
depends only on $i$, $j$, and the distance $d=d(u,v)$ between $u$ and $v$.
Note that such a graph is regular of degree $f^G_{1,1}(0)$.
We call two distance-regular graphs $G$ and $H$ \emph{similar} if 
\begin{equation}
  \label{eq:dist-reg}
f^G_{i,j}(d)=0\iff f^H_{i,j}(d)=0.  
\end{equation}

\begin{lem}\label{lem:dist-reg}
If $G$ and $H$ are similar distance-regular graphs, then $W(G,H)>3$.
\end{lem}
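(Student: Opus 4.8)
The plan is to use the characterization of the parameter $W(G,H)$ via the pebble game recalled in Section~\ref{s:prel}: to prove $W(G,H)>3$ it suffices to give Duplicator a strategy that keeps her alive in the $3$-pebble Ehrenfeucht-Fra{\"\i}ss{\'e} game on $G$ and $H$ for an arbitrary number of rounds. I would have her maintain the following invariant after every round: if the pebbles currently lie on vertices $a_1,\dots,a_r$ of $G$ and $a'_1,\dots,a'_r$ of $H$ (with $r\le3$ and $a_i$ paired with $a'_i$), then $d_G(a_i,a_j)=d_H(a'_i,a'_j)$ for all $i,j$. Since in any graph two vertices are adjacent exactly when they are at distance $1$, this invariant forces the pebbling to be a partial isomorphism, so Duplicator never loses while it holds.

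First I would record a small consequence of similarity. Putting $u=v$ in the definition, $f^G_{\delta,\delta}(0)$ is just the number of vertices at a fixed distance $\delta$ from a given vertex, which in a distance-regular graph is positive precisely when $\delta$ does not exceed the diameter (take a shortest path to a vertex of maximal distance). Hence \refeq{dist-reg} gives $\operatorname{diam}(G)=\operatorname{diam}(H)$ and, more generally, that for every vertex of $H$ and every $\delta\le\operatorname{diam}(H)$ there is a vertex of $H$ at distance exactly $\delta$ from it (and symmetrically for $G$). This is what lets Duplicator handle the opening: her first move is arbitrary, and if Spoiler's second move places a pebble at distance $\delta$ from the first-pebbled vertex, she answers with any vertex at distance $\delta$ from the corresponding vertex of the other graph.

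The heart of the argument is the step in which two pebbles already sit at positions $u/u'$, $v/v'$ satisfying the invariant, so $d_G(u,v)=d_H(u',v')=:d$, and Spoiler makes a move. If he re-uses one of these two pebbles, it is first removed and we are back in the (easier) one-pebble situation handled above. Otherwise he places the third pebble, say on a vertex $w$ of $G$ with $d_G(w,u)=i$ and $d_G(w,v)=j$. Then $w$ itself witnesses $f^G_{i,j}(d)\ge1$, so by \refeq{dist-reg} we have $f^H_{i,j}(d)\ne0$, and Duplicator answers with a vertex $w'$ of $H$ such that $d_H(w',u')=i$ and $d_H(w',v')=j$; a move by Spoiler in $H$ is handled symmetrically. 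This restores the invariant and, once again because adjacency is the same as distance~$1$, keeps the pebbling a partial isomorphism. Iterating, Duplicator survives arbitrarily many rounds, which gives $W(G,H)>3$.

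I do not expect a genuine obstacle here: the lemma is essentially the definition of similarity read off against the pebble game. The only points needing care are administrative — checking that when Spoiler reuses an on-board pebble the configuration to be extended has at most two pebbled pairs, so that the step above really covers all cases, and keeping in mind throughout that controlling all pairwise distances amounts to controlling the partial isomorphism precisely because in these graphs ``adjacent'' means ``at distance $1$''.
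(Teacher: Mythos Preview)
Your proposal is correct and follows essentially the same route as the paper: Duplicator maintains the invariant that all pairwise distances between pebbled vertices match across $G$ and $H$, using $f_{\delta,\delta}(0)>0$ for the second pebble and the similarity condition \refeq{dist-reg} on $f_{i,j}(d)$ for the third. Your write-up is slightly more explicit than the paper's about equal diameters, about pebble reuse reducing to the two-pebble case, and about why distance preservation forces a partial isomorphism, but the underlying strategy is identical.
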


\begin{proof}
We show a strategy allowing Duplicator to win the 3-pebble game on $G$ and $H$.
In the first round she responds Spoiler's move arbitrarily.
Let $x$ and $x'$ be the vertices pebbled in $G$ and $H$ respectively.
Suppose that in the second round Spoiler pebbles a vertex $y$ in $G$
(the case that Spoiler plays in $H$ is similar).
Duplicator responds with a vertex $y'$ in $H$ such that $d(x',y')=d(x,y)$,
which guarantees that she does not lose in this round. Such a choice of $y'$
is possible because \refeq{dist-reg} implies that $f^H_{i,i}(0)>0$ for $i=d(x,y)$.

For any subsequent round, assume that $x,y\in V(G)$ and $x',y'\in V(H)$
are occupied by two pairs of pebbles and that $d(x',y')=d(x,y)$.
Suppose that Spoiler puts the third pebble
on a vertex $z$ in $G$ (the case that Spoiler plays in $H$ is similar).
It is enough to notice that Duplicator can pebble a vertex $z'$ in $H$ such that
$d(z',x')=d(z,x)$ and $d(z',y')=d(z,y)$. Such a vertex exists because \refeq{dist-reg}
implies that $f^H_{i,j}(d)>0$ for $i=d(z,x)$, $j=d(z,y)$, and $d=d(x,y)$.
\end{proof}

\begin{thm}\label{thm:C_4}
 $W[C_4]=W^*[C_4]=4$. 
\end{thm}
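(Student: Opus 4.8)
The plan is to prove $W[C_4]=W^*[C_4]=4$ by establishing the lower bound $W^*[C_4]\ge 4$ (the upper bound $W[C_4]\le 4$ is free since $C_4$ has $4$ vertices). By the inequality $W^*[C_4]\ge\alice{C_4}$ of Lemma~\ref{lem:alice}, it suffices to exhibit a graph $H$ that satisfies the $3$rd extension axiom $\ea 3$ yet contains no induced $C_4$; this would give $\alice{C_4}\ge 4$, hence $W^*[C_4]\ge 4$. Alternatively, and more robustly, I would invoke Lemma~\ref{lem:dist-reg}: it is enough to produce two similar distance-regular graphs $G$ and $H$ such that $C_4\sqsubset G$ and $C_4\not\sqsubset H$, which yields $W^*[C_4]\ge W(G,H)>3$ directly.

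First I would look for a suitable $C_4$-free distance-regular graph to serve as $H$. A natural candidate is a distance-regular graph of girth at least $5$, for instance the Petersen graph (which is strongly regular with parameters $(10,3,0,1)$ and has girth $5$, so it contains no induced $C_4$), or a larger Moore-type graph. For $G$ I would take a distance-regular graph that is ``similar'' to $H$ in the sense of \refeq{dist-reg} but does contain an induced $C_4$; since similarity only constrains which intersection numbers $f_{i,j}(d)$ vanish, one has considerable freedom. Concretely, I would try to take $G$ to be the complement of the Petersen graph, or a strongly regular graph with the same diameter and the same zero-pattern of intersection numbers but with $\mu\ge 2$, which forces induced $4$-cycles. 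If a clean pair of strongly regular graphs is not immediately available, I would instead argue the extension-property route: verify that the chosen $C_4$-free graph $H$ (e.g.\ Petersen, or an incidence graph of a suitable combinatorial design) satisfies $\ea 3$ by checking that for every two disjoint vertex subsets of total size at most $2$ there is a vertex with the prescribed adjacency pattern, and then combine $\alice{C_4}\ge 4$ with Lemma~\ref{lem:alice}.

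The key steps, in order, are: (1) reduce to the lower bound via $W^*[C_4]\le W[C_4]\le 4$ and Lemma~\ref{lem:alice} (or Lemma~\ref{lem:dist-reg}); (2) exhibit the witness graph(s) — a $C_4$-free graph with the $3$-extension property, or a similar pair of distance-regular graphs with one containing and one avoiding an induced $C_4$; (3) verify the required combinatorial properties of the witness (the extension axiom, or distance-regularity plus the similarity condition \refeq{dist-reg}), which for strongly regular graphs reduces to comparing the parameter tuples $(n,k,\lambda,\mu)$; (4) conclude $\alice{C_4}\ge 4$ or $W(G,H)>3$, hence $W^*[C_4]\ge 4$, and combine with the trivial upper bound. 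The main obstacle I anticipate is step (2)–(3): finding an explicit $C_4$-free graph that genuinely satisfies $\ea 3$ (the Petersen graph has too few vertices for all extension patterns, so one likely needs a larger family, or an infinite sequence of graphs $(G_i,H_i)$ feeding Lemma~\ref{lem:WDGH}), and confirming the similarity condition so that Lemma~\ref{lem:dist-reg} applies cleanly. I would expect the cleanest write-up to use either a carefully chosen pair of strongly regular graphs of the same diameter $2$ with matching zero-patterns of intersection numbers, or a random/explicit construction analogous to the one used for $P_4$ in Section~\ref{ss:path}.
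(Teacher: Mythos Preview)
Your reduction is right and your main approach---invoke Lemma~\ref{lem:dist-reg} on a pair of similar distance-regular graphs, one containing an induced $C_4$ and one not---is exactly what the paper does. The difference is only in the choice of witnesses, and here your proposal is still groping. The extension-index alternative should be dropped: as you already notice, the Petersen graph fails $\ea3$ (adjacent vertices have $\lambda=0$ common neighbours), and it is not clear that \emph{any} $C_4$-free graph satisfies $\ea3$, so this route may simply not reach $\alice{C_4}=4$. Among your distance-regular suggestions, the complement of Petersen is not similar to Petersen (its $\lambda$ is nonzero, so the zero-patterns differ); your idea of pairing Petersen with another triangle-free strongly regular graph having $\mu\ge 2$ is sound and would work with, say, the Clebsch graph.

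The paper's witnesses are far simpler: take $G=Q_3$ (the $3$-cube) and $H=C_6$. Both are bipartite antipodal distance-regular graphs of diameter~$3$, and a direct check shows their intersection numbers $f_{i,j}(d)$ vanish for exactly the same triples $(i,j,d)$, so they are similar in the sense of \refeq{dist-reg}. The cube contains induced $4$-cycles (its $2$-dimensional faces) while $C_6$ has girth~$6$, and Lemma~\ref{lem:dist-reg} gives $W(G,H)>3$ in one line.
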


\begin{proof}
By Lemma \ref{lem:dist-reg}, it suffices to exhibit similar
distance-regular graphs $G$ and $H$ such that 
$G$ contains an induced copy of $C_4$ and $H$ does not.
We can take $G$ to be the cubic graph (the skeleton of the 3-dimensional cube)
and $H=C_6$.
\end{proof}

\section{Lower bounds over highly connected graphs}\label{s:highly}

As we discussed in Section \ref{s:intro},
in the case of a connected pattern graph $F$
it is algorithmically motivated to consider the parameter $W_\kappa[F]$,
which is the minimum variable width of a sentence defining 
the graph class $\indsubgr F$ correctly only over graphs of \emph{sufficiently large} connectedness.
More precisely, $W_\kappa[F]$ is equal to the minimum $k$
for which there is a $k$-variable sentence $\Phi$ and a number $s$ such that 
$G\models\Phi$ iff $F\sqsubset G$ for all $s$-connected graphs $G$.
Recall that any lower bound for $W_\kappa[F]$ is also a lower bound for $W_\tw[F]$,
which rules out some efficient approaches to \textsc{Induced Subgraph Isomorphism}
based on Courcelle's theorem.

Moreover, we define 
$$
W^*_\kappa[F]=\textstyle\min_s\max\setdef{W(G,H)}{F\sqsubset G,\,F\not\sqsubset H,\text{ and both $G$ and $H$ are $s$-connected}}.
$$
This parameter is an analog of $W_\kappa[F]$ for the infinitary logic, and we have 
$$
W^*_\kappa[F]\le W_\kappa[F]\le W[F]\text{ and }W^*_\kappa[F]\le W^*[F]\le W[F].
$$

The \emph{join} of graphs $A$ and $B$ is denoted by $A*B$.
Recall that this is the graph obtained from the disjoint union of $A$ and $B$
by adding all possible edges between a vertex of $A$ and a vertex of~$B$.

\begin{lem}\label{lem:join}
$W(A*B,A'*B)\ge W(A,A')$.  
\end{lem}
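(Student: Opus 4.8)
The plan is to prove $W(A*B,A'*B)\ge W(A,A')$ by the contrapositive: I will show that a winning Duplicator strategy on $A$ versus $A'$ in the $k$-pebble game can be lifted to a winning Duplicator strategy on $A*B$ versus $A'*B$ in the same $k$-pebble game, for every $k$ and every number of rounds. By the game characterization of $W(G,H)$ recalled after Lemma~\ref{lem:WDGH} (item 3), $W(A,A')>k$ means that Duplicator wins the $d$-round $k$-pebble game on $A$ and $A'$ for all $d$; the lifting will then give $W(A*B,A'*B)>k$, which is the desired inequality.

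First I would fix the structural observation that makes the lift possible: in $A*B$, the copy of $B$ is joined completely to $A$, and likewise in $A'*B$ with the \emph{same} graph $B$. So a partial map on $A*B \sqcup A'*B$ is a partial isomorphism precisely when (i) its restriction to the $A$-side pebbles is a partial isomorphism between $A$ and $A'$, (ii) its restriction to the $B$-side pebbles is a partial isomorphism of $B$ with itself that moreover is \emph{position-respecting} (maps each pebbled $B$-vertex to itself, or at least to a vertex with the same adjacencies within $B$), and (iii) the $A$--$B$ cross-edges automatically match because every $A$-vertex is adjacent to every $B$-vertex on both boards. Point (iii) is the crux: it decouples the two sides.

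Next I would describe Duplicator's lifted strategy. Duplicator maintains an auxiliary play of the $k$-pebble game on $A$ versus $A'$ using exactly those pebbles currently sitting on $A$-vertices. When Spoiler places a pebble on a vertex $v$ of $A*B$: if $v\in B$, Duplicator answers with the same vertex $v\in B\subseteq A'*B$ (and makes a corresponding ``pass'' in the auxiliary game by, say, retracting that pebble from the $A$-side if it was there); if $v\in A$, Duplicator simulates Spoiler's move in the auxiliary game on $A$ versus $A'$, gets Duplicator's response $v'\in A'$, and plays $v'$ on $A'*B$. The same is done symmetrically when Spoiler plays on the $A'*B$ board. One must check that whenever Spoiler reuses a pebble, the bookkeeping between the real and auxiliary games stays consistent — this is the routine part. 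The key invariant is that after each round the three conditions (i)–(iii) above hold, which by the structural observation means the pebbling is a partial isomorphism; condition (i) holds because Duplicator follows her winning auxiliary strategy, condition (ii) holds because $B$-pebbles are answered identically, and condition (iii) is free.

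The main obstacle I anticipate is purely organizational rather than mathematical: carefully handling pebbles that migrate between the $A$-part and the $B$-part across rounds (a single physical pebble may be on an $A$-vertex in one round and a $B$-vertex later), so that the auxiliary game on $A$ versus $A'$ is always a legitimate play with at most $k$ pebbles and Duplicator's winning strategy there is never asked about a position it did not produce. This is resolved by treating the auxiliary game as one in which, at each round, the set of ``active'' pebbles is exactly the set currently on the $A$-side; removing a pebble from the $A$-side (because Spoiler moved it to $B$) only helps Duplicator, and adding one back corresponds to a fresh Spoiler move there. With that convention the invariant is preserved and Duplicator never loses, giving $W(A*B,A'*B)>k$ whenever $W(A,A')>k$, hence $W(A*B,A'*B)\ge W(A,A')$.
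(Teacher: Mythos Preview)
Your proposal is correct and takes essentially the same approach as the paper: Duplicator mirrors Spoiler's moves in the $B$-part identically and follows her winning strategy on $A$ versus $A'$ for moves in the $A$-part. The paper's proof is the two-line version of exactly this argument; your added bookkeeping about pebbles migrating between the $A$- and $B$-parts is a reasonable elaboration, and your remark that removing a pebble from the auxiliary game only helps Duplicator is the right way to justify it (this is the standard downward-closure property of back-and-forth families).
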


\begin{proof}
In the game on $A*B$ and $A'*B$, Duplicator can use her strategy for the game on $A$ and $A'$.
Each time that Spoiler moves in the $B$ part of one graph, Duplicator just mirrors his move in 
the $B$ part of the other graph.
\end{proof}

\begin{lem}\label{lem:padding}
Let $F_0$ be obtained from $F$ by removing all universal vertices from this graph. Then
$W^*_\kappa[F]\ge W^*[F_0]$.
\end{lem}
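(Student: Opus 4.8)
The plan is to show that a winning sentence (or winning Spoiler strategy) for distinguishing $G_0$-type graphs from $H_0$-type graphs, where $G_0 \sqsupset F_0$ and $H_0 \not\sqsupset F_0$, can be ``lifted'' to highly connected graphs by adjoining a large clique as a set of universal vertices. Concretely, let $r$ be the number of universal vertices of $F$, so that $F = F_0 * K_r$. Given any pair $G_0 \sqsupset F_0$ and $H_0 \not\sqsupset F_0$ witnessing $W(G_0,H_0)$ being large, and given a target connectivity $s$, I would set $G = G_0 * K_N$ and $H = H_0 * K_N$ for $N$ chosen large enough (at least $s$, and at least $|V(H_0)|$ so that the join is genuinely $s$-connected). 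The join with a large clique makes both graphs $s$-connected, since a join $A * K_N$ is $\min(|V(A)| + N - 1, \ldots)$-connected and in any case adding $N$ universal vertices pushes connectivity past $s$ once $N \ge s$. So $G$ and $H$ lie in the restricted class over which $W^*_\kappa[F]$ is measured.

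Next I would check the two induced-subgraph conditions. Since $G_0$ contains an induced copy of $F_0$ on some vertex set $S$, and the $N$ new vertices are universal, the set $S$ together with any $r$ of the new clique vertices induces $F_0 * K_r = F$ in $G = G_0 * K_N$; hence $F \sqsubset G$. For the other direction I must argue $F \not\sqsubset H = H_0 * K_N$. Suppose $H$ contained an induced copy of $F$ on a vertex set $T$. The new clique vertices are universal in $H$, so in the induced subgraph $H[T]$ they act as universal vertices of $H[T] \cong F$; the only universal vertices of $F$ are its $r$ designated universal vertices, so at most $r$ of the vertices of $T$ can come from the new clique (and removing them from $T$ leaves an induced copy of $F_0$ in $H_0$ — more carefully, the $|T| - |T \cap K_N|$ remaining vertices lie in $H_0$ and induce a subgraph of $F$ obtained by deleting some universal vertices, which still contains an induced $F_0$ once we delete exactly the right ones). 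This contradicts $H_0 \not\sqsupset F_0$. The one subtlety here is that $F$ might have \emph{more} than $r$ universal vertices after we also count vertices that became universal only because $F_0$ had some; but by definition $F_0$ is $F$ with \emph{all} universal vertices removed, so $F_0$ has no universal vertex, and the universal vertices of $F$ are exactly the $r$ adjoined ones — this is the point to state cleanly.

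Finally I would invoke Lemma~\ref{lem:join}: $W(G,H) = W(G_0 * K_N, H_0 * K_N) \ge W(G_0, H_0)$. Taking $G_0, H_0$ to realize (or approach) $W^*[F_0] = \max\{W(G_0,H_0) : F_0 \sqsubset G_0,\ F_0 \not\sqsubset H_0\}$, and noting that $N$ was chosen freely large to meet the $s$-connectivity requirement for every $s$, we conclude $W^*_\kappa[F] = \min_s \max\{\ldots\} \ge W(G_0,H_0)$, and hence $W^*_\kappa[F] \ge W^*[F_0]$. The main obstacle, and the only place genuine care is needed, is the argument that $F \not\sqsubset H_0 * K_N$: one must handle the bookkeeping of which vertices of an alleged induced $F$ in $H$ sit in the clique part versus the $H_0$ part, using that $F_0$ is universal-vertex-free to pin down that the clique part contributes exactly the universal vertices of $F$. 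Everything else is routine once Lemma~\ref{lem:join} is in hand.
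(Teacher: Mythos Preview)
Your approach is essentially the same as the paper's: join the witness pair $(G_0,H_0)$ with a large clique, verify the induced-subgraph conditions, and apply Lemma~\ref{lem:join}. The one difference worth noting is that the paper handles the $H$-side more cleanly: rather than arguing $F \not\sqsubset H_0 * K_N$, it argues the stronger and simpler fact that $F_0 \not\sqsubset H_0 * K_N$. Since $F_0$ has no universal vertex, no vertex of an induced copy of $F_0$ in $H_0 * K_N$ can lie in the $K_N$ part, so any such copy would already sit inside $H_0$ --- a one-line contradiction. Your route (tracking how many vertices of an alleged induced $F$ fall into the clique part, then peeling them off to recover an induced $F_0$ in $H_0$) is correct but carries out exactly the bookkeeping you flagged as the main obstacle; the paper sidesteps it entirely.
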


\begin{proof}
Let $m$ denote the number of universal vertices in $F$.
Suppose that $W^*[F_0]=W(G,H)$, where $G$ contains an induced copy of $F_0$
and $H$ does not. Let $G'$ be obtained from $G$ by adding new $s>m$ universal vertices,
and let $H'$ be defined similarly, i.e, $G'=G*K_s$ and $H'=H*K_s$. Note that 
$G'$ contains an induced copy of $F$ and $H'$ still does not contain even an induced
copy of $F_0$ (no new vertex of $H'$ can appear in an induced copy of $F_0$
because it would be universal there). We have $W(G',H')\ge W(G,H)$
by Lemma \ref{lem:join}. This proves the claim because
$G'$ and $H'$ are $s$-connected and $s$ can be chosen arbitrarily large.  
\end{proof}

\begin{thm}\label{thm:kappa-collapse}\hfill
  \begin{enumerate}[\bf 1.]
  \item 
$W^*_\kappa[F]=W^*[F]$
whenever $F$ has no universal vertex.
\item 
$W^*_\kappa[F]=W^*[F]$ whenever $W^*[F]>3$ and $F$ has no adjacent twins or no non-adjacent twins.
  \end{enumerate}
\end{thm}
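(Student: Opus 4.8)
The plan is to prove both parts by exhibiting, for suitable witness pairs $G, H$ for $W^*[F]$, modified pairs $G', H'$ that are $s$-connected for arbitrarily large $s$, still separate $\indsubgr F$ from its complement, and satisfy $W(G',H') \ge W(G,H)$. Part~1 is essentially immediate from Lemma~\ref{lem:padding}: if $F$ has no universal vertex, then $F_0 = F$ in the statement of that lemma, so $W^*_\kappa[F] \ge W^*[F_0] = W^*[F]$, and the reverse inequality $W^*_\kappa[F] \le W^*[F]$ is part of the general chain of inequalities noted just before Lemma~\ref{lem:join}. So the real content is Part~2.

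For Part~2, suppose $W^*[F] = W(G,H) > 3$ with $F \sqsubset G$ and $F \not\sqsubset H$. The idea is to boost connectivity not by adding universal vertices (which fails when $F$ has a universal vertex) but by taking a \emph{lexicographic product with a large clique}: set $G' = G \cdot K_s$ and $H' = H \cdot K_s$. Each vertex is blown up into a clique of size $s$, with all edges between adjacent blobs; this makes both graphs $s$-connected (any two vertices in the blow-up of a single original vertex, together with the connectivity inherited from $G$ resp.\ $H$, forces high vertex-connectivity). One then needs two things. First, $G'$ still contains an induced copy of $F$ — this is clear, since $G \sqsubset G'$ via any transversal of the blobs. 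Second, and this is where the twin hypothesis enters, $H'$ must still be $F$-free. Here the point is that any induced copy of $F$ in $H \cdot K_s$ would have to place at most one vertex in each blob unless two of its vertices lie in the same blob; but two vertices in the same blob of the lexicographic product are \emph{adjacent twins} in $H'$, and restricted to any induced subgraph they remain twins there. So if $F$ has no adjacent twins, an induced $F$ in $H'$ uses a transversal and hence gives an induced $F$ in $H$, a contradiction. (If instead $F$ has no \emph{non-adjacent} twins, apply the same argument to complements, using Lemma~\ref{lem:compl}: $H \cdot K_s$ in the original is replaced by the construction on $\compl{H}$, equivalently one blows up with an independent set on the complement side; the complement of a lex product with $K_s$ is a lex product of complements with $\overline{K_s}$, and non-adjacent twins in $\compl{H'}$ are adjacent twins in $H'$.) Finally, $W(G',H') \ge W(G,H)$: Duplicator in the game on $G \cdot K_s$ versus $H \cdot K_s$ can simulate her strategy on $G$ versus $H$ by always answering a Spoiler move inside a blob with a move inside the corresponding blob — and here the hypothesis $W^*[F] > 3$ is used to guarantee that at least $4$ pebbles are in play, so that within any single blob (an $s$-clique, hence handling same-blob configurations is trivial) she never runs out of room; combined with the fact that same-blob vertices are twins, a straightforward induction on rounds shows the simulated strategy remains winning for her exactly as long as the original one does. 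Hence $W(G',H') \ge W(G,H) = W^*[F]$, and since $s$ is arbitrary, $W^*_\kappa[F] \ge W^*[F]$.

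The main obstacle I anticipate is the bookkeeping in the last step — verifying that Duplicator's blob-simulation strategy is genuinely non-losing in every configuration, including the mixed configurations where some pebbles sit in a common blob and others do not. The clean way to organize this is to show that the partition of $V(G')$ into blobs, together with Duplicator's strategy on $G$ versus $H$, induces a back-and-forth system on $G'$ versus $H'$: a partial map is "good" if the induced map on blob-representatives is a position consistent with Duplicator's $G$-vs-$H$ strategy and the map is injective within each blob. One then checks good positions extend to good positions, which reduces to (i) Duplicator's $G$-vs-$H$ strategy when Spoiler's move lands in a fresh blob, and (ii) pure clique play, trivially handled, when it lands in an occupied blob — and the inequality $W^*[F]>3$ is exactly what ensures at most $3$ pebbles can be "committed" elsewhere, leaving room inside any $s$-clique with $s$ large. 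I would also double-check the precise connectivity claim for lexicographic products with $K_s$ and state it as a small separate observation, since it is the formal justification that $G', H'$ witness the "$s$-connected" requirement in the definition of $W^*_\kappa$.
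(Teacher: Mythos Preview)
Your Part~1 and your choice of construction for Part~2 (lexicographic product with $K_s$, respectively $\overline{K_s}$) match the paper. However, there is a real gap, and it stems from misreading what the hypothesis $W^*[F]>3$ is for.

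The product $G\cdot K_s$ is $s$-connected only when $G$ is already connected; if $G$ is disconnected then so is $G\cdot K_s$, and the pair $G',H'$ fails the $s$-connectedness requirement in the definition of $W^*_\kappa[F]$. Nothing in the definition of $W^*[F]$ forces an extremal witness pair $G,H$ to be connected. The paper uses the hypothesis $W^*[F]>3$ precisely at this point: since with three pebbles Spoiler can force play onto single connected components, the assumption $W^*[F]>3$ allows one to take the witness pair $G,H$ to be \emph{connected} while still satisfying $F\sqsubset G$, $F\not\sqsubset H$, and $W(G,H)=W^*[F]$. You instead invoke $W^*[F]>3$ to argue that Duplicator ``never runs out of room'' inside an $s$-clique. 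That is a non-issue --- $s$ is chosen after the number of pebbles, so there is always room --- and in fact $W(G\cdot K_s,H\cdot K_s)\ge W(G,H)$ holds for any $G,H$ and any number of pebbles, with no lower bound needed. So as written, your argument spends the hypothesis where it buys nothing and omits it where it is essential; without connected $G,H$ the construction does not produce $s$-connected $G',H'$.

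A smaller point on the no-non-adjacent-twins case: routing through Lemma~\ref{lem:compl} is risky, because one cannot conclude $W^*_\kappa[F]=W^*_\kappa[\compl F]$ from the definitions --- complements of highly connected graphs need not be highly connected (the paper remarks on this later). The direct construction you also mention --- take $G\cdot\overline{K_s}$ and $H\cdot\overline{K_s}$, so that same-blob vertices become \emph{non}-adjacent twins --- is the right move and is exactly what the paper does.
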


\begin{proof}
Part 1 is an immediate consequence of Lemma \ref{lem:padding}.
To establish Part 2, we have to prove that $W^*_\kappa[F]\ge W^*[F]$.
Since with 3 pebbles Spoiler can force playing on connected components,
the assumption $W^*[F]>3$ implies that $W^*[F]=W(G,H)$ for some connected $G$ and $H$
such that $F\sqsubset G$ and $F\not\sqsubset H$.
Assume that $F$ has no adjacent twins.
Consider $G'=G\cdot K_s$ and $H'=H\cdot K_s$ (recall that $\cdot$ denotes the lexicographic product of graphs). 
Note that $G'$ and $H'$ are $s$-connected and observe that $W(G',H')\ge W(G,H)$.
Moreover, $G'$ still contains an induced copy of $F$,
and $H'$ still does not (because if an induced subgraph of $H'$ contains two vertices
from the same $K_s$-part, they are adjacent twins in this subgraph).
Since $s$ can be chosen arbitrarily large, this implies that $W^*_\kappa[F]\ge W(G,H)=W^*[F]$,
as required. If $F$ has no non-adjacent twins, the same argument works with $G'=G\cdot\compl{K_s}$ and $H'=H\cdot\compl{K_s}$.
\end{proof}

An example of a graph to which Theorem \ref{thm:kappa-collapse} is
non-applicable is the diamond. It has universal vertices and both adjacent and non-adjacent twins.

If an $\ell$-vertex pattern graph $F$ has no universal vertex, then $W^*_\kappa[F]=W^*[F]$ and,
therefore, $W^*_\kappa[F]\ge(\frac12-o(1))\ell$ by Theorem \ref{thm:W^*[F]}.
Lemma \ref{lem:padding} works well also for $F$ with few universal vertices.
For example, we have $W^*_\kappa[K_{1,\ell}]\ge W^*[\compl{K_{\ell}}]=W^*[K_{\ell}]=\ell$.
However, if $F$ has many universal vertices, then we need a different approach.

Similarly to $\alice F$, we define $\alicek F$
to be the maximum $k$ such that, for each $s$, there is an $s$-connected graph $H$ with
$H\models \ea {k-1}$ while $F\not\sqsubset H$.

The following relations are easy to prove similarly to Lemma \ref{lem:alice} and Theorem \ref{thm:chi}
(note that, for each fixed $s$, the random multipartite graph $\rturan$ in the proof of Theorem \ref{thm:chi} 
is $s$-connected with high probability).

\begin{lem}\label{lem:chi2}
$W^*_\kappa[F]\ge\alicek F\ge\chi(F)$.
\end{lem}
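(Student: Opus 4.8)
The plan is to prove Lemma~\ref{lem:chi2} by adapting the two earlier arguments --- Lemma~\ref{lem:alice} and Theorem~\ref{thm:chi} --- in a way that keeps track of connectedness. The statement breaks into two inequalities: $W^*_\kappa[F]\ge\alicek F$ and $\alicek F\ge\chi(F)$, and I would handle them separately.

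For the inequality $\alicek F\ge\chi(F)$, I would revisit the proof of Theorem~\ref{thm:chi}. There the witness graph is $\rturan$, a random subgraph of the Tur\'an graph $T_{k,n}$ with $k=\chi(F)-1$, obtained by deleting each edge independently with probability $1/2$. That argument already shows that $\rturan$ satisfies $\ea k$ with positive probability for $n$ large, and $\rturan$ is never $F$-containing since $\chi(T_{k,n})<\chi(F)$. What remains is to observe that, for any fixed target connectedness $s$, the graph $\rturan$ is $s$-connected with probability $1-o(1)$ as $n\to\infty$: this is a standard fact, since each vertex has degree concentrated around $(k-1)n/2$, and a routine union bound over all vertex cuts of size $<s$ (there are at most $\binom{kn}{s-1}=n^{O(1)}$ of them, and each fails to disconnect except with exponentially small probability) gives the claim. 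Combining the two high-probability events, for each $s$ there is an $s$-connected graph $H$ with $H\models\ea{k}$ and $F\not\sqsubset H$, which is exactly $\alicek F\ge k+1=\chi(F)$.

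For the inequality $W^*_\kappa[F]\ge\alicek F$, I would mirror the proof of Lemma~\ref{lem:alice}. Write $k=\alicek F$. By definition of $\alicek F$, for every $s$ there is an $s$-connected graph $H_s$ with $H_s\models\ea{k-1}$ and $F\not\sqsubset H_s$. On the other side, for each fixed $s$ a sufficiently large random graph $\ergraph$ is $s$-connected, satisfies $\ea{k-1}$, and contains every $\ell$-vertex graph (in particular $F$) as an induced subgraph, all with probability approaching $1$; hence there is an $s$-connected $G_s$ with $F\sqsubset G_s$ and $G_s\models\ea{k-1}$. Since both $G_s$ and $H_s$ have the $(k-1)$-extension property, Duplicator wins the $(k-1)$-pebble Ehrenfeucht--Fra\"iss\'e game on them (the same observation used in Lemma~\ref{lem:alice}), so $W(G_s,H_s)\ge k$. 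As this holds for every $s$, the definition of $W^*_\kappa[F]$ gives $W^*_\kappa[F]\ge k=\alicek F$.

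I do not expect any genuine obstacle here; the lemma is explicitly flagged in the text as "easy to prove similarly" to earlier results, and the only new ingredient is the connectedness bookkeeping. The one point requiring a touch of care is the concentration/union-bound argument showing that the random witnesses ($\rturan$ in the first part, $\ergraph$ in the second) are $s$-connected with high probability; but both are textbook-level facts about random graphs and random subgraphs of dense regular graphs, so a one-sentence citation or sketch suffices. I would simply fold the connectedness remark into the otherwise verbatim re-runs of the proofs of Lemma~\ref{lem:alice} and Theorem~\ref{thm:chi}.
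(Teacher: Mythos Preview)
Your proposal is correct and follows essentially the same approach the paper indicates: it explicitly says the lemma is proved ``similarly to Lemma~\ref{lem:alice} and Theorem~\ref{thm:chi},'' noting only that the random multipartite graph $\rturan$ is $s$-connected with high probability for each fixed $s$. You have supplied exactly this argument, with slightly more detail (e.g., the use of $\ergraph$ on the $G$-side for the first inequality, and the union-bound sketch for connectedness of $\rturan$) than the paper itself provides.
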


\begin{thm}\label{thm:W^*_kappa[]}
If $F$ has $\ell$ vertices, then
 $W^*_\kappa[F]>\frac13\ell-\frac43\log_2\ell$.
\end{thm}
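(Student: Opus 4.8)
**Proof plan for Theorem~\ref{thm:W^*_kappa[]}.**

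The plan is to derive the bound from the combination of Lemma~\ref{lem:chi2}, which gives $W^*_\kappa[F]\ge\chi(F)$, and a padding argument in the spirit of Lemma~\ref{lem:padding}, which trades universal vertices of $F$ for a lower bound coming from $W^*[F_0]$, where $F_0$ is obtained from $F$ by deleting all universal vertices. Write $\ell$ for the number of vertices of $F$ and let $u$ denote the number of universal vertices, so $F_0$ has $\ell-u$ vertices. On the one hand, the padding lemma gives $W^*_\kappa[F]\ge W^*[F_0]$, and applying Theorem~\ref{thm:W^*[F]} to $F_0$ yields $W^*_\kappa[F]>\frac12(\ell-u)-2\log_2(\ell-u)+2$. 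On the other hand, every universal vertex of $F$ must receive its own color in any proper coloring, so $\chi(F)\ge u+1$, hence Lemma~\ref{lem:chi2} gives $W^*_\kappa[F]\ge u+1$.

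The next step is to balance these two bounds. The first bound is good when $u$ is small, the second when $u$ is large, and the crossover occurs roughly at $u\approx\ell/3$. First I would discard the logarithmic and constant lower-order terms to locate the worst case: setting $\frac12(\ell-u)=u$ gives $u=\ell/3$ and a common value of about $\ell/3$. To get the clean inequality in the statement, one takes $W^*_\kappa[F]\ge\max\{u+1,\ \frac12(\ell-u)-2\log_2(\ell-u)+2\}$ and observes that this maximum, as a function of $u\in\{0,1,\dots,\ell-1\}$, is minimized near $u=\lfloor\ell/3\rfloor$; plugging that value into whichever of the two expressions is the smaller at the optimum (and bounding $\log_2(\ell-u)\le\log_2\ell$) yields $W^*_\kappa[F]>\frac13\ell-\frac43\log_2\ell$. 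A small case check handles tiny $\ell$ where the bound is vacuous (it asserts nothing once the right-hand side is below $2$, which is always a trivial lower bound since $\ea1$ alone does not force any $F$ with $\ell\ge2$).

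The step I expect to be the only real obstacle is making the two-sided balancing fully rigorous with the logarithmic term present, since the presence of $2\log_2(\ell-u)$ means the true optimum is slightly displaced from $u=\ell/3$ and one must be a little careful that the $\max$ never drops below $\frac13\ell-\frac43\log_2\ell$ for \emph{any} integer $u$. The cleanest way is a case split: if $u\ge\frac13\ell-\frac43\log_2\ell$ the chromatic-number bound $W^*_\kappa[F]\ge u+1$ already suffices; otherwise $u<\frac13\ell-\frac43\log_2\ell$, so $\ell-u>\frac23\ell+\frac43\log_2\ell$, and substituting this into $\frac12(\ell-u)-2\log_2(\ell-u)+2>\frac13\ell+\frac23\log_2\ell-2\log_2\ell+2=\frac13\ell-\frac43\log_2\ell+2$ finishes the case (using monotonicity of $x\mapsto\frac{x}{2}-2\log_2 x$ for $x$ large and $\log_2(\ell-u)\le\log_2\ell$). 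Everything else is bookkeeping that can be delegated to Lemmas~\ref{lem:padding} and~\ref{lem:chi2} and Theorem~\ref{thm:W^*[F]}.
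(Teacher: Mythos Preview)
Your proposal is correct and follows the same overall strategy as the paper: combine the padding bound $W^*_\kappa[F]\ge W^*[F_0]$ (via Lemma~\ref{lem:padding} and Theorem~\ref{thm:W^*[F]}) with the chromatic-number bound $W^*_\kappa[F]\ge\chi(F)$ (Lemma~\ref{lem:chi2}), and then balance. The only notable difference is in the execution of the balancing step. Where you do a case split at the threshold $u\approx\tfrac13\ell-\tfrac43\log_2\ell$, the paper simply adds the two inequalities with weights $2$ and $1$: from $W^*_\kappa[F]>\tfrac12(\ell-m)-2\log_2\ell$ and $W^*_\kappa[F]\ge m$ one gets $3\,W^*_\kappa[F]>\ell-4\log_2\ell$ directly. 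This is a bit slicker and avoids any discussion of where the optimum lies or of the monotonicity of $x\mapsto x/2-2\log_2 x$; your case split works just as well but is a touch more laborious.
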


\begin{proof}
Denote the number of universal vertices in $F$ by $m$, and let
$F_0$ be obtained from $F$ by removing all these vertices.
By Lemma \ref{lem:padding} and Theorem \ref{thm:W^*[F]},
\begin{equation}
  \label{eq:FF0}
W^*_\kappa[F]\ge W^*[F_0]>\frac12\,(\ell-m)-2\log \ell.
\end{equation}
By Lemma \ref{lem:chi2},
\begin{equation}
  \label{eq:Fchi}
W^*_\kappa[F]\ge\chi(F)\ge m. 
\end{equation}
Combining the bounds \refeq{FF0} and \refeq{Fchi}, we obtain
$$
3\,W^*_\kappa[F]>\ell-4\log \ell,
$$
which implies the bound stated in the theorem.
\end{proof}

\begin{rem}
In contrast to Lemma \ref{lem:compl},
we cannot deduce just from the definitions that $W^*_\kappa[F]=W^*_\kappa[\compl F]$
because the complement $\compl G$ of a highly connected graph $G$ can have low connectivity.
However, the complement of the random multipartite graph $\rturan$ 
is as well highly connected. In view of this fact,
a slight modification of the proof of Theorem \ref{thm:chi} and Lemma \ref{lem:chi2}
gives us also the bound $W^*_\kappa[F]\ge\chi(\compl F)$,
which implies, for example, that 
$W^*_\kappa[P_\ell]\ge(\ell-1)/2$ and $W^*_\kappa[C_\ell]\ge(\ell-1)/2$.
\end{rem}

Finally, we determine the values of $W^*_\kappa[F]$ and $W_\kappa[F]$ for small connected
pattern graphs.
As a particular case of Lemma \ref{lem:chi2}, we have $W^*_\kappa[K_3]=3$ and $W^*_\kappa[K_4]=4$.
According to the discussion in Section \ref{s:intro}, we have $W_\kappa[P_3]\le D_v[P_3]\le2$.
 On the other hand, $W^*_\kappa[P_3]\ge2$
just because there are highly connected graphs with an induced copy of $P_3$
and without it, for example, $K_n\setminus e$ and $K_n$ respectively.

\begin{thm}\hfill
\begin{enumerate}[\bf 1.]
\item 
$W^*_\kappa[K_3+e]=W[K_3+e]=3$;
\item 
$W^*_\kappa[P_4]=W^*[P_4]=3$ and $W_\kappa[P_4]=W[P_4]=4$;
\item 
$W^*_\kappa[F]=W[F]=4$ for all remaining connected $F$ on 4 vertices.
\end{enumerate}
\end{thm}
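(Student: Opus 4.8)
The plan is to verify the three items by combining the per-graph results already established in Sections~\ref{s:small} and the collapse machinery of this section. For Part~1, I would note that $K_3+e$ has a unique universal vertex, namely the degree-$3$ vertex $v_3$, and removing it leaves $F_0=K_3\setminus e=P_3$ (the two degree-$1$ neighbours of $v_3$ together with their common neighbour form an induced path). By Lemma~\ref{lem:padding} and Example~\ref{ex:3-vertex}, $W^*_\kappa[K_3+e]\ge W^*[P_3]=3$, and the matching upper bound $W^*_\kappa[K_3+e]\le W[K_3+e]=3$ comes from Theorem~\ref{thm:paw}. Actually it is cleaner to observe directly that the depth-$3$ three-variable sentence exhibited in Theorem~\ref{thm:paw} already defines $\indsubgr{K_3+e}$ over \emph{all} graphs, so in particular over highly connected ones; hence $W^*_\kappa[K_3+e]=W_\kappa[K_3+e]=3$ with no further work.

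For Part~2, the lower bound $W^*_\kappa[P_4]\ge\alice{P_4}=3$ follows from the remark after Theorem~\ref{thm:P_4-WGH} that a highly connected $C_4$-like witness works, or more simply from $\chi(P_4)=2$ being too weak—so I instead invoke Theorem~\ref{thm:kappa-collapse}(1): $P_4$ has no universal vertex, hence $W^*_\kappa[P_4]=W^*[P_4]=3$ by Theorem~\ref{thm:P_4-WGH}. For the first-order side, $W_\kappa[P_4]\le W[P_4]=4$ is trivial, and $W_\kappa[P_4]\ge W^*_\kappa[P_4]$ gives only $3$, so the real content is $W_\kappa[P_4]\ge 4$. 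Here I would revisit the proof of Theorem~\ref{thm:P_4-W}: the graphs $G_i=H_i\cdot(A\cdot H_i)$ and $H_i=(K_1)^i$ used there need to be upgraded to $s$-connected versions. The natural fix is to replace the base building block so that every cocomponent stays highly connected; concretely, form $G_i'=G_i\cdot K_s$ and $H_i'=H_i\cdot K_s$, which are $s$-connected, still satisfy $P_4\sqsubset G_i'$ and $P_4\not\sqsubset H_i'$ (two vertices in a common $K_s$-block are adjacent twins, hence cannot both sit in an induced $P_4$), and inherit $D^3(G_i',H_i')\to\infty$ because blowing up by $K_s$ cannot help Spoiler—Duplicator reuses her strategy from Lemma~\ref{lem:P_4-D^3} and mirrors moves inside $K_s$-blocks. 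Then Lemma~\ref{lem:WDGH} yields $W_\kappa[P_4]\ge 4$.

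For Part~3, the remaining connected four-vertex graphs are $K_4$, the claw $K_{1,3}$, the diamond $K_4\setminus e$, and $C_4$; in each case I would show $W^*_\kappa[F]=4$, which gives $4=W^*_\kappa[F]\le W_\kappa[F]\le W[F]=4$ using the values from Theorems~\ref{thm:K_13-W-ind} and~\ref{thm:C_4} and Corollary~\ref{cor:complete}. For $K_4$, $W^*_\kappa[K_4]=\chi(K_4)=4$ by Lemma~\ref{lem:chi2}. For the claw $K_{1,3}$, it has a universal vertex (the centre) and removing it leaves $F_0=\compl{K_3}$, so Lemma~\ref{lem:padding} gives $W^*_\kappa[K_{1,3}]\ge W^*[\compl{K_3}]=W^*[K_3]=\alice{K_3}\cdots$—wait, that is only $3$. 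So for the claw I would instead note $\chi(\compl{K_{1,3}})=\chi(K_3+\text{isolated})=3$, also $3$, which is still not enough; the honest route is Theorem~\ref{thm:kappa-collapse}(2): the claw has $W^*[K_{1,3}]=4>3$ and has no non-adjacent twins among its \emph{edges}—one checks the three leaves are pairwise non-adjacent twins, so actually it \emph{does} have non-adjacent twins; however it has no adjacent twins, so Theorem~\ref{thm:kappa-collapse}(2) applies with the $G'=G\cdot\compl{K_s}$ construction and yields $W^*_\kappa[K_{1,3}]=W^*[K_{1,3}]=4$. The diamond is the genuinely awkward case since, as the text remarks, Theorem~\ref{thm:kappa-collapse} does not apply; but $\chi(K_4\setminus e)=3$ is again too weak, so I would use that the diamond contains $K_4\setminus e$ and observe its complement $\compl{K_4\setminus e}=K_2+2K_1$ has a component $K_2$... that still gives only chromatic bounds. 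The cleanest fix is to take the similar distance-regular witnesses $G$ (the $3$-cube) and $H=C_6$ from Theorem~\ref{thm:C_4} together with the strongly-regular witness (the $3\times3$-rook graph) from Theorem~\ref{thm:K_13-W-ind}: both are already $s$-connected for the relevant small $s$, or can be made so by a lexicographic blow-up by $K_s$, which preserves distance-regularity/strong-regularity up to the parameters needed for Lemmas~\ref{lem:dist-reg} and the strongly-regular argument, and preserves (non-)containment of $C_4$ and of the claw/diamond since any induced copy could avoid repeating a $K_s$-block. For $C_4$ I would similarly blow up the cube and $C_6$. The main obstacle is precisely this diamond case: one must either find an explicit pair of highly connected graphs, one containing an induced diamond and one not, that are $L^3_{\infty\omega}$-equivalent, or argue that the blow-up $H\cdot K_s$ of a diamond-free distance-regular graph stays diamond-free while remaining similar to the blow-up $G\cdot K_s$ of a diamond-containing one—this requires checking that the lexicographic product with $K_s$ does not create a new induced diamond in the diamond-free factor and does not destroy similarity, which is routine but must be done carefully.
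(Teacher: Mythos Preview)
Your handling of Parts~1 and~2 and of $K_4$, $C_4$, and $K_{1,3}$ in Part~3 is essentially fine, though in places more laborious than necessary. Two small corrections first: in Part~1, the paw has its degree-$3$ vertex as the unique universal vertex, and removing it leaves $K_2+K_1$, not $P_3$ (the result is salvaged since $W^*[K_2+K_1]=W^*[P_3]$ by Lemma~\ref{lem:compl}). In Part~2 and for $C_4$, you are working harder than needed: since $P_4$ and $C_4$ have no universal vertex, Theorem~\ref{thm:kappa-collapse}(1) gives $W^*_\kappa=W^*$ immediately, and for $W_\kappa[P_4]\ge4$ the paper simply observes that the join construction in Lemma~\ref{lem:padding} works equally well for $D^k$, yielding $W_\kappa[F]\ge W[F_0]$; with $F_0=P_4$ this is $4$ without revisiting Lemma~\ref{lem:P_4-D^3}. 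Your route via Theorem~\ref{thm:kappa-collapse}(2) for the claw is a legitimate alternative to the paper's argument (the claw has no adjacent twins, so the $G\cdot K_s$ blow-up applies---you wrote $\compl{K_s}$, which is the wrong branch).

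The genuine gap is the diamond. Your proposed fix---blow up a diamond-free witness $H$ by $K_s$---fails precisely because the diamond \emph{does} have adjacent twins (the two degree-$3$ vertices). Concretely, if $H$ contains any induced $P_3$, say $v\text{--}u\text{--}w$, then in $H\cdot K_s$ take two vertices $c,d$ in the $K_s$-block over $u$ together with one vertex each over $v$ and $w$: this is an induced diamond. Since any non-trivial strongly regular graph (in particular the $3\times3$ rook graph) contains an induced $P_3$, the blow-up destroys diamond-freeness. Blowing up by $\compl{K_s}$ is no better, since the diamond also has non-adjacent twins. The paper sidesteps this by varying the witness itself rather than blowing it up: the $m\times m$ rook graph is diamond-free and claw-free for \emph{every} $m$ (the two common neighbours of a non-adjacent pair are themselves non-adjacent, since $\mu=2$ and they lie in different rows and columns), is non-trivial strongly regular hence satisfies $\ea3$, and is $(2m-2)$-connected. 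Letting $m\to\infty$ gives $\alicek{K_4\setminus e}\ge4$, hence $W^*_\kappa[K_4\setminus e]=4$ by Lemma~\ref{lem:chi2}.
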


\begin{proof}
1.  
In the trivial direction, $W^*_\kappa[K_3+e]\le W[K_3+e]=3$,
the equality being established in Theorem~\ref{thm:paw}.
On the other hand, we have
$$
W^*_\kappa[K_3+e]\ge W^*[K_2+K_1]=W^*[\compl{K_2+K_1}]=W^*[P_3]\ge\alice{P_3}=3,
$$
where we use Lemma~\ref{lem:padding}, Lemma~\ref{lem:compl}, and Part~1 of Example~\ref{ex:3-vertex}.

2. We have
$W^*_\kappa[P_4]=W^*[P_4]=3$ by Part 1 of Theorem~\ref{thm:kappa-collapse} and by Theorem~\ref{thm:P_4-WGH}.
Since Lemma~\ref{lem:padding}
easily extends to the relation $W_\kappa[F]\ge W[F_0]$,
we also have $W_\kappa[P_4]=W[P_4]=4$.

3. We have
$W^*_\kappa[C_4]=W^*[C_4]=4$ by Part 1 of Theorem~\ref{thm:kappa-collapse} and by Theorem~\ref{thm:C_4}.

We also have
$W^*_\kappa[K_{1,3}]=\alicek{K_{1,3}}=4$ and $W^*_\kappa[K_4\setminus e]=\alicek{K_4\setminus e}=4$
by the first inequality in Lemma \ref{lem:chi2}. Indeed,
$\alicek{K_{1,3}}=\alicek{K_4\setminus e}=4$ because the $m\times m$ rook's graph
is a strongly regular graph
containing no induced copy of $K_{1,3}$ and no induced copy of~$K_4\setminus e$.
Moreover, the $m\times m$ rook's graph is $(2m-2)$-connected by the following
general fact: The connectivity of a connected strongly regular graph equals its
vertex degree~\cite{BrouwerM85}.
\end{proof}

Note that the equality $D_\kappa[K_4\setminus e]=4$ follows also from the general lower bound
$D_\kappa[F]\ge\frac{e(F)}{v(F)}+2$ proved in \cite[Theorem 5.4]{VZh16},
where $e(F)$ and $v(F)$ denote, respectively, the number of edges and vertices in the graph~$F$.

\section{Trading super-recursively many first-order quantifiers 
for a single monadic one}\label{s:emso}

We now turn to \emph{existential monadic second-order logic}, denoted by \emso,
whose formulas are of the form
\begin{equation}
  \label{eq:emso}
\E X_1\ldots\E X_m\,\Phi,
\end{equation}
where a first-order subformula $\Phi$ is preceded by (second-order) quantification over
unary relations (that is, we are now allowed to use existential
quantifiers over subsets of vertices $X_1,X_2,\ldots$).
The second-order quantifiers contribute in the \emph{quantifier depth} as well as
the first-order ones. Thus, \refeq{emso} has quantifier depth $m$ larger than~$\Phi$.
If a graph property $\classc$ is definable in \emso, the minimum quatifier depth
of a defining formula will be denoted by $\emsod(\classc)$.
Furthermore, we define $\emsod[F]=\emsod(\indsubgr F)$.

It is very well known that \emso
is strictly more expressive than first-order logic (FO for brevity). For example,
the properties of a graph to be disconnected or to be bipartite
are expressible in \emso but not in FO.
We now show that \emso is also much more succinct
than FO, which means that some properties of graphs that are expressible in FO
can be expressed in \emso with significantly smaller quantifier depth.
In fact, this can be demonstrated by considering the properties of
containing a fixed induced subgraph. It turns out that, if we are allowed
to use just one monadic second-order quantifier, the number of
first-order quantifiers can sometimes be drastically reduced.

We will use the known fact that there are graphs definable in FO very succinctly.
According to our notation, $D(\{F\})$ denotes the minimum quantifier depth
of a first-order formula that is true on $F$ but false on every graph non-isomorphic
to $F$. This parameter is called in \cite{PikhurkoV11} the \emph{logical depth
of a graph $F$}. It turns out that the logical depth can be enormously smaller
than the number of vertices in the graph.

Let $v(F)$ denote the number of vertices in~$F$.

\begin{lemC}[Pikhurko, Spencer, Verbitsky \cite{PikhurkoSV06}]\label{lem:psv}
There is no total recursive function $h$ such that
$$
h(D(\{F\}))\ge v(F)
$$
for all graphs~$F$.
\end{lemC}

\begin{thm}\label{thm:emso}
There is no total recursive function $f$ such that
\begin{equation}
  \label{eq:recrel}
f(\emsod[F])\ge D[F]
\end{equation}
for all graphs~$F$. Moreover, this holds true even for the fragment of \emso
where exactly one second-order quantifier is allowed.
\end{thm}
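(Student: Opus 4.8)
The plan is to combine Lemma~\ref{lem:psv} (the logical depth of a graph can be non-recursively larger than its number of vertices) with the lower bound $D[F]\ge W[F]>(1/2-o(1))\,\ell$ from Theorem~\ref{thm:W^*[F]}, while showing that the property $\indsubgr F$ is cheap to define in $\emso$ with a single monadic quantifier. The key point is that ``$G$ contains an induced copy of $F$'' can be expressed by guessing a subset $X\subseteq V(G)$ of size $v(F)$ and checking that the induced subgraph $G[X]$ is isomorphic to $F$; the latter check costs only a first-order sentence of quantifier depth roughly $D(\{F\})$, plus $O(\log v(F))$ extra quantifiers needed to pin down that $|X|=v(F)$ and that all the quantified vertices lie in $X$.

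First I would make precise the $\emso$ sentence. Let $\psi_F$ be a first-order sentence of quantifier depth $D(\{F\})$ that is true exactly on graphs isomorphic to $F$. Relativizing all quantifiers of $\psi_F$ to a monadic variable $X$ yields a formula $\psi_F^X$ such that, for any graph $G$ and any $S\subseteq V(G)$, we have $G\models\psi_F^X[S]$ iff $G[S]\cong F$. Then
\[
\Phi_F \;=\; \E X\,\bigl(\psi_F^X \,\und\, \exists^{=v(F)}x\,(x\in X)\bigr)
\]
defines $\indsubgr F$, where $\exists^{=v(F)}x\,(x\in X)$ is the standard first-order sentence asserting that $X$ has exactly $v(F)$ elements; this can be written with quantifier depth $v(F)$ trivially, or with depth $O(\log v(F))$ using a more economical encoding, but even the crude bound suffices. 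Hence $\emsod[F]\le 1+\max\bigl(D(\{F\}),\,v(F)\bigr)+O(1)$. Actually, since we already relativize to $X$ and then merely need $|X|=v(F)$, and since $\psi_F$ itself forces $G[X]$ to have exactly $v(F)$ vertices up to the point of distinguishing $F$ from smaller graphs, a cleaner bound is $\emsod[F]\le D(\{F\})+c$ for an absolute constant $c$; I would verify this minor point carefully but it is routine. The essential inequality is therefore
\[
\emsod[F]\le D(\{F\})+c.
\]

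Now suppose, for contradiction, that some total recursive $f$ satisfies $f(\emsod[F])\ge D[F]$ for all $F$. Combining with the displayed inequality and monotonicity (we may assume $f$ non-decreasing by replacing $f(n)$ with $\max_{m\le n}f(m)$, still recursive), we get $f(D(\{F\})+c)\ge D[F]\ge W[F]> (1/2-o(1))\,v(F)$ by Theorem~\ref{thm:W^*[F]}. Thus $v(F)\le 3\,f(D(\{F\})+c)$ for all sufficiently large $F$, and after adjusting by a constant for the finitely many small graphs, $h(n):=3\,f(n+c)+O(1)$ is a total recursive function with $h(D(\{F\}))\ge v(F)$ for every graph $F$, contradicting Lemma~\ref{lem:psv}. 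This proves the theorem, and since $\Phi_F$ uses exactly one second-order quantifier, the ``moreover'' clause follows as well.

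The main obstacle is not conceptual but bookkeeping: one must check that relativizing $\psi_F$ to $X$ genuinely captures ``$G[X]\cong F$'' — in particular that $\psi_F$, being true only on $F$ and no other graph, still correctly rejects induced subgraphs $G[S]$ that happen to be larger than $F$ but contain $F$; this is automatic because $\psi_F$ characterizes $F$ up to isomorphism among \emph{all} graphs, so $G[S]\cong F$ already forces $|S|=v(F)$, and one could even drop the cardinality conjunct entirely. I expect the only genuinely delicate step is to confirm $\emsod[F]\le D(\{F\})+O(1)$ rather than the weaker $\emsod[F]\le D(\{F\})+v(F)$; but even the weaker bound, plugged into the argument above, still yields $v(F)\le \text{recursive}(D(\{F\}))$ after using $D[F]>(1/2-o(1))v(F)$, so the theorem goes through regardless. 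Hence there is no real obstruction, only the need to present the relativization lemma cleanly.
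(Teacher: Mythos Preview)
Your proposal is correct and follows essentially the same route as the paper: relativize a succinct first-order definition of $F$ to a single monadic variable to get $\emsod[F]\le D(\{F\})+O(1)$, then combine with the linear lower bound on $D[F]$ from Theorem~\ref{thm:W^*[F]} to contradict Lemma~\ref{lem:psv}. The paper states the clean bound $\emsod[F]\le D(\{F\})+1$ directly (your own observation that the cardinality conjunct is unnecessary since $\psi_F$ already pins down $|X|$ is exactly why), and uses the explicit constant $D[F]\ge v(F)/8$; your hedging about constants is harmless, though the aside that $\exists^{=n}x$ can be written in depth $O(\log n)$ is not correct in bare first-order logic---fortunately you do not use it.
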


\begin{proof}
Assume that there is a total recursive function $f$ such that \refeq{recrel}
is true for all graphs $F$. Moreover, we can suppose that $f$ is monotonically increasing.

Our goal is to find a contradiction to this assumption. We begin with observing that
\begin{equation}
  \label{eq:MDF}
\emsod[F]\le D(\{F\})+1.
\end{equation}
Indeed, the existence of an induced $F$ subgraph in a graph $G$ can be
expressed in \emso by saying that
$$
\E X\, (G[X]\cong F),
$$
where the assertion ``$G[X]\cong F$'' can be written in first-order logic
by taking a succinct first-order definition of the graph $F$ and relativizing each
quantifier in this definition to the set~$X$.

As a direct consequence of Theorem \ref{thm:W^*[F]}, we also obtain a lower bound
\begin{equation}
  \label{eq:DF}
  D[F]\ge\frac{v(F)}8
\end{equation}
for all $F$ (note that the bound is trivially true if $v(F)\le16$).
Using the inequalities \refeq{MDF}, \refeq{recrel}, and \refeq{DF},
we see that
$$
f(D(\{F\})+1) \ge f(\emsod[F]) \ge D[F] \ge \frac{v(F)}8
$$
for all $F$.
This contradicts Lemma~\ref{lem:psv} by considering the function $h(x)=8f(x+1)$.
\end{proof}

In the proof of Theorem \ref{thm:emso}, we used the relation $\emsod[F]\le D(\{F\})+1$.
It may be curious to note that the value of $\emsod[F]$ can be even smaller than $D(\{F\})$.
For example, for the triangle and the diamond graph we have
$D(\{K_3\})=D(\{K_4\setminus e\})=4$, whereas $\emsod[K_3]=\emsod[K_4\setminus e]=3$.

\section{Conclusion}
\hfill
\que 
As noticed by Mikhail Makarov (personal communication 2018; see also \cite{arxiv}),
the equality $D[K_3+e]=3$ leads to infinitely many examples of graphs $F$ with $D[F]$
strictly less than the number of vertices in $F$. In fact, this follows from
an observation that $D[F_1+F_2] \le D[F_1]+v(F_2)$
for any graphs $F_1$ and $F_2$. Can this upper bound for $D[F]$ or $W[F]$ be further improved?
On the other hand, we only know that $W[F]\ge(\frac12-o(1))\ell$ for all $F$
with $\ell$ vertices. This does not even exclude the possibility that
the time bound $O(n^{W[F]})$ for Induced Graph Isomorphism can be better
than the Ne\v{s}et\v{r}il-Poljak bound $O(n^{(\omega/3)\ell+c})$ 
for infinitely many pattern graphs $F$.

\que 
An example of $F=P_4$ shows that
$W^*[F]$ can be strictly less than $W[F]$ but we  do not know 
how far apart from each other $W^*[F]$ and $W[F]$ can generally be.

\que 
Note that Lemmas \ref{lem:alice} and \ref{lem:alice-lower}
show the following hierarchy of graph parameters:
\begin{equation}
  \label{eq:hierarchy}
(1/2-o(1))\,\ell\le E[F]\le W^*[F]\le W[F]\le D[F]\le\ell,
\end{equation}
where $\ell$ is the number of vertices in $F$.
It seems that we currently have no example separating
the parameters $W[F]$ and $D[F]$ or the parameters $\alice F$ and $W^*[F]$.
An important question is whether or not $\alice F=(1-o(1))\ell$.

\que 
It follows from \refeq{hierarchy} that $W[F]\le(2+o(1))W^*[F]$.
In other terms, in the context of Induced Subgraph Isomorphism, 
the infinitary logic cannot be much more succinct than the standard
first-order logic with respect to the number of variables.
More generally, is it true that $W(\classc)=O(W^*(\classc))$ for all first-order definable graph properties?

\que 
We have checked that $D[F]=W[F]=\ell$ for all $F$ with $\ell\le4$ vertices excepting for
the paw graph and its complement. Since it remains open whether the equality holds true for all larger graphs,
it seems reasonable to examine the pattern graphs on 5 vertices.
If there is a 5-vertex $F$ with $W[F]=4$, the resulting decision procedure for \indsubgr F
would be competitive to (or, at least comparable with) the currently known algorithmic
results for 5-vertex induced subgraphs~\cite{FloderusKLL13,WilliamsWWY15}.

\que 
We have a (rather trivial) example of $F=P_3$ showing that $W_\kappa[F]$
can be strictly smaller than $W[F]$.
Are there other such graphs?

\subsection*{Acknowledgement}
We thank the anonymous referees for careful reading of our paper
and many useful comments.

\end{document}